\def \Dj{\mbox{\raise0.3ex\hbox{-}\kern-0.4em D}} 
\title{
\includegraphics[width=2.25in]{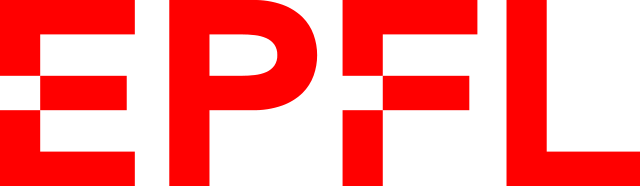} \\
\vspace*{0.7in}
\huge Master Thesis \\
\vspace*{0.3in}
\textbf{\HUGE Double Choco is NP-complete}}
\author{\huge Dragoljub \Dj urić \\
        \vspace*{0.5in} \\
        Thesis Advisor: Prof. Dr. Mika Göös\\ 
        \vspace*{2in} \\
		School of Computer and Communication Sciences\\
		Theory of Computation Laboratory 5\\
        \textbf{École polytechnique fédérale de Lausanne}\\
        Lausanne, Switzerland} 
        \date{February, 2022}
\newtheorem{theorem}{Theorem}
\theoremstyle{definition}
\newtheorem{definition}{Definition}
\newtheorem{claim}{Claim}
\newtheorem{lemma}{Lemma}
\crefname{lemma}{Lemma}{Lemmas}
\newcommand{\squareop}[1]{%
    \setlength{\fboxsep}{0pt}%
    \setlength{\unitlength}{.7em}%
    \mathrel{%
        \raisebox{-1pt}{\framebox(1,1){\(\scriptstyle #1\)}}%
    }%
}
\begin{document}
    \maketitle
    
    \addtocounter{page}{1}

    \tableofcontents
     \addtocounter{page}{1}
    \newpage
    \hfill \textcolor{gray}{\textit{'You can't connect the dots looking forward; you can only connect them looking backward.'}}
    
    \hfill \textcolor{gray}{S. J.}
    
    \newpage
    \chapter{Introduction} \label{chap:1}
    
    Nikoli is a leading publisher of pencil-and-paper logic puzzles, having presented many different puzzles over the past few decades in their \textit{Puzzle Communication Nikoli} magazine and website \cite{nikoli}. Perhaps the most well known of their brain teasers is \textit{Sudoku} \cite{sudoku}, which was proven to be \textit{NP-complete} in the year 2003 \cite{yato2003complexity}.
    
    As Nikoli expanded into the field of puzzles, their puzzles and the study of their computational properties have gained an increasing amount of popularity. Consequently, the concept of the \textit{Nikoli gap} \cite{nicoligap} was established to define the number of years between the year a puzzle is introduced and the year a paper about the puzzle`s computational complexity is published. 
    
    \begin{figure}[htb]
    \begin{tikzpicture}
      \node  {\centerline{\includegraphics[scale=0.37]{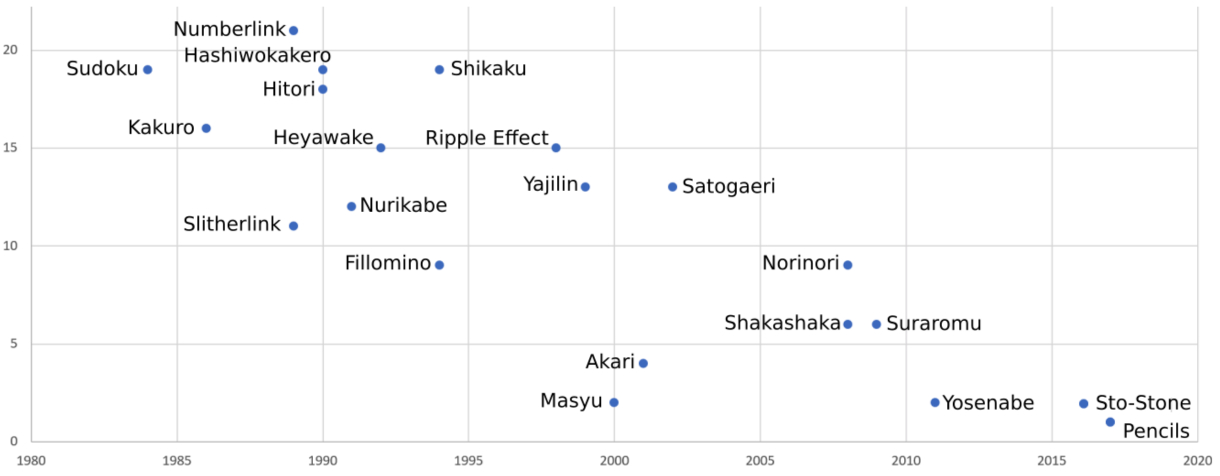}}};
      \node[node distance=0cm, yshift=-3.5cm,font=\color{gray}] {Year the puzzle is introduced};
      \node[node distance=0cm, rotate=90, anchor=center,yshift=8.2cm,font=\color{gray}] {Nikoli gap};
     
    \end{tikzpicture}
    \caption{Nikoli gap. Source: \cite{nicoligap}}
    \label{GAP}
    \end{figure}
    
    Papers on the computational complexity of the puzzles appearing in \autoref{GAP} can be found in: Kakuro (Cross Sum) \cite{yato2003complexity}, Slitherlink \cite{takayuki2000np}, Numberlink \cite{kotsuma2010np}, Hashiwokakero \cite{andersson2009hashiwokakero}, Hitori \cite{hearn2009games}, Nurikabe \cite{holzer2004np}, Heyawake \cite{holzer2007troubles}, Fillomino \cite{fillomino}, Shikaku and Ripple Effect \cite{takenaga2013shikaku}, Yajilin \cite{ishibashi2012np}, Masyu (Pearl) \cite{friedman2002pearl}, Akari (Light Up) \cite{mcphail2005light}, Satogaeri \cite{kanehiro2015satogaeri}, Shakashaka \cite{adler2015computational, demaine2014computational}, Norinori \cite{biro2017computational}, Suraromu \cite{kanehiro2015satogaeri}, Yosenabe \cite{iwamoto2014yosenabe}, Sto-Stone \cite{allen2018sto, nicoligap}, and Pencils \cite{nicoligap}.
    
    We can observe that the \textit{Nikoli gap} has a decreasing trend that can be justified by the rise in popularity of \textit{Nikoli} puzzles with customers as well as an increase in focus within the theoretical computer science community. This has been supported by continued acceptance of papers regarding puzzles' computational properties on conferences such as \textit{FUN} \cite{fun}. 

    In this thesis, we are studying the computational properties of the \textbf{Double Choco} (Double Chocolate) puzzle \cite{dchoco}, which was first published on January 15, 2020 in \cite{dcpub}, gaining huge popularity in a relatively short period of time. \textit{Double Choco} is a Nikoli  pencil-and-paper game consisting of an $m\times n$ grid board of unit squares, of white or gray color, separated by dotted lines. Some squares may contain an integer between $1$ and $\frac{m n}{2}$. The solution to this puzzle is a set of blocks such that:
    \begin{itemize}
        \item The blocks are disjointed.
        \item The blocks, together, cover all cells of the puzzle.
        \item Each block has the same number of white and gray cells.
        \item The white and gray areas of any block are connected (both separately and with each other) and they have the same size and shape, with respect to rotation or mirroring.
        \item If any cell inside a block contains a number, then the white and gray area of that block must have that number of cells.
        \item A block can contain any number of cells with numbers.
    \end{itemize}

    \begin{figure}[H]
        \centering
        \begin{subfigure}[b]{0.22\textwidth}
            \centering
            \includegraphics[width=\textwidth]{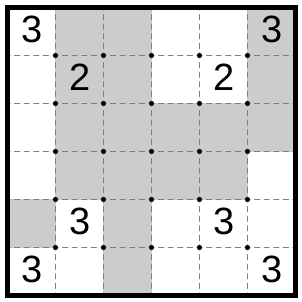}
            \caption[Network2]%
            {{\small }}    
            \label{fig:mean and std of net14}
        \end{subfigure}
        \hfill
        \begin{subfigure}[b]{0.22\textwidth}
            \centering
            \includegraphics[width=\textwidth]{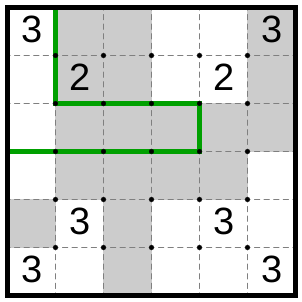}
            \caption[Network2]%
            {{\small }}    
            \label{fig:mean and std of net14}
        \end{subfigure}
        \hfill
        \begin{subfigure}[b]{0.22\textwidth}
            \centering
            \includegraphics[width=\textwidth]{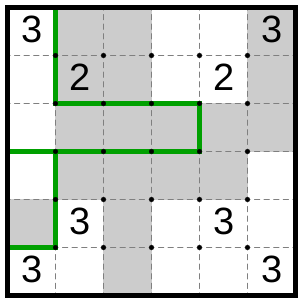}
            \caption[Network2]%
            {{\small }}    
            \label{fig:mean and std of net14}
        \end{subfigure}
        \hfill
        \begin{subfigure}[b]{0.22\textwidth}  
            \centering 
            \includegraphics[width=\textwidth]{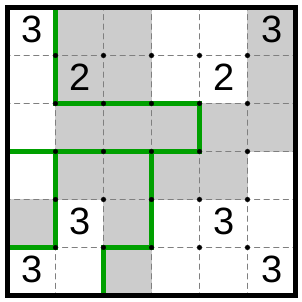}
            \caption[]%
            {{\small }}    
            \label{fig:mean and std of net24}
        \end{subfigure}
        \vskip\baselineskip
        \begin{subfigure}[b]{0.22\textwidth}
            \centering
            \includegraphics[width=\textwidth]{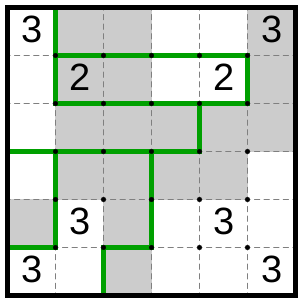}
            \caption[Network2]%
            {{\small }}    
            \label{fig:mean and std of net14}
        \end{subfigure}
        \hfill
        \begin{subfigure}[b]{0.22\textwidth}
            \centering
            \includegraphics[width=\textwidth]{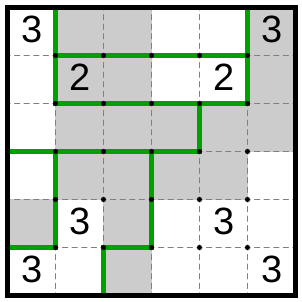}
            \caption[Network2]%
            {{\small }}    
            \label{fig:mean and std of net14}
        \end{subfigure}
        \hfill
        \begin{subfigure}[b]{0.22\textwidth}   
            \centering 
            \includegraphics[width=\textwidth]{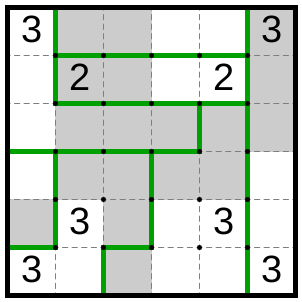}
            \caption[]%
            {{\small }}    
            \label{fig:mean and std of net34}
        \end{subfigure}
        \hfill
        \begin{subfigure}[b]{0.22\textwidth}   
            \centering 
            \includegraphics[width=\textwidth]{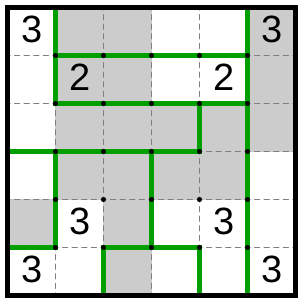}
            \caption[]%
            {{\small }}    
            \label{fig:mean and std of net44}
        \end{subfigure}
        \caption[ The average and standard deviation of critical parameters ]
        {\small A step by step solution of the Double Choco puzzle from \textit{The Guardian} article \cite{guardian}.}
        \label{fig:puzzle-inst}
    \end{figure}
    
    We create the puzzles using Double Choco editor \cite{dceditor}. The editor has shown to be a very useful tool for constructing and solving the puzzles as it is capable of automatically checking the solution.
    
    In \autoref{fig:puzzle-inst}, we present the puzzle instance with the step-by-step solution. In each step, one block is chosen, so the number of steps is bounded by the number of cells (more precisely by its half), so the problem is in \textit{NP} (see \autoref{sect:NP}). In this thesis, we prove that it is in fact \textit{NP-complete}.
    
    We start by introducing the necessary tools used in our study in \autoref{chap:2}. We first explain what  \textit{NP-completeness} is, followed by the introduction of the \textit{Planar Rectilinear Monotone 3SAT} problem that we use to prove the hardness of our result. Finally, we conclude the chapter with the properties our reductions should satisfy.
    
    The first important result of this thesis is presented in \autoref{chap:3} where we prove that the arbitrarily shaped \textit{Double Choco} is \textit{NP-complete}.
    
    Finally, in \autoref{chap:4}, by using the result from the previous chapter, we prove that \textit{Double Choco} is \textit{NP-complete}, hence establishing the 
    \textit{Nikoli gap} of more than 2 years.

    \chapter{Preliminaries} \label{chap:2}
    
    \section{NP-completeness} \label{sect:NP}
    This section is mainly taken from \cite{fillament}.
    The complexity class \textit{NP} (Nondeterministic Polynomial time) represents a set of decision
    problems, i.e., problems with yes-or-no answers, such that each solution has a polynomial length with
    respect to input size. Moreover, there exists a program that can check the validity of such a
    solution (witness) in polynomial time.

    More formally:
    \theoremstyle{definition}
    \begin{definition}[\textit{NP}]
    NP is the class of problems $A$ of the form:\newline
    $x$ is a yes-instance of $A$ if, and only if, there exists a $w$ such that $(x, w)$ is a
    yes-instance of B, where B is a decision problem in \textit{P} regarding pairs $(x, w)$, and $|w| =
    poly(|x|)$ \cite{moore2011nature}.
    \end{definition}
    
    Reduction is a procedure for transforming an instance of one problem into an instance of
    the another problem. If a problem $A$ is reducible to problem $B$, and if there exists an efficient algorithm for solving problem $B$, then we can use it to solve $A$ efficiently as well, i.e., the problem $B$ is at least as hard as problem $A$. It this thesis we will only use \textit{mapping reduction}:
    
    \theoremstyle{definition}
    \begin{definition}[\textit{Mapping reduction}]
    A function f: $\{0, 1\}^{*}\ \rightarrow \{0, 1\}^{*}$ is called mapping reduction from $A$ to $B$ if, and only if,:
    \begin{enumerate}
         \item For any $w\ \in \textSigma_{1}*$, $w \in A$ if, and only if, $f(w) \in B$.
         \item f is a computable function, and can be performed in polynomial time.
    \end{enumerate}

   Or less formally, mapping reduction from $A$ to $B$ says that we can transform any instance of $A$ into an instance of $B$, such that the answer to $B$ is the answer to $A$.
    
    \end{definition}
    
    \theoremstyle{definition}
    \begin{definition}[NP-hard]
    A problem is \textit{NP-hard} if every problem in NP can be reduced to it using at most polynomial time
    reduction.
    \end{definition}
    
    A problem that is both \textit{NP} and \textit{NP-hard} is \textit{NP-complete}, i.e., if there exists an algorithm that
    solves any of \textit{NP-complete} problems in polynomial time, we can easily solve all of them in
    polynomial time. In that case, it will hold that $P = NP$, but it is highly unlikely, so we believe
    that $P \neq NP$, i.e., we believe that \textit{P} and \textit{NP} are two essentially different complexity
    classes.

    \section{Rectilinear Planar Monotone 3SAT} \label{sec:rect}
    
    A Boolean variable $x_i$ can have a value of \textit{true} or \textit{false}. The variable or its negation are collectively called literal. A Boolean formula $\phi$ consists of the Boolean variables $x_1$, $x_2$, ..., $x_n$ and the logical operators AND ($\wedge$), OR ($\vee$), and NOT ($\neg$), e.g. $\phi = (x_1 \wedge x_2) \vee (\neg x_2 \vee x_3)$. A formula $\phi$ is satisfiable if there exists assignment of the variables $x_1$, $x_2$, ..., $x_n$ such that $\phi(x_1, ..., x_n) = true$, otherwise, it is unsatisfiable \cite{arora2009computational}.
    
    A Boolean formula is in the \textit{Conjunctive Normal Form} (\textit{CNF}) if it consists of groups of OR's of literals (clauses), with AND's between clauses. A \textit{CNF} formula having at most $3$ literals in each clause is known as an \textit{3CNF} formula.
    
    The \textit{Boolean satisfiability problem} (\textit{SAT}) is the problem of deciding whether a given \textit{CNF} formula has a satisfying assignment. Similarly, for the \textit{3CNF} formula is the \textit{3SAT}. The fundamental result of the complexity theory known as the \textit{Cook-Levin Theorem} says that the \textit{SAT} (\textit{3SAT}) is \textit{NP-complete} \cite{cook1971complexity, levin1973universal}.
    
    Here, we will focus on the restricted version of \textit{3SAT} known as the \textit{Rectilinear Planar Monotone 3SAT} (\textit{RPM-3SAT}) (\autoref{fig:RPM-planar}). A formula is planar if the incidence graph of variables and clauses, where a variable is connected to a clause containing it, can be drawn on the plane in a such a way that none of its edges intersect (i.e. they can be embedded in the plane). Rectilinear property ensures that planar embedding satisfies:
    
    \begin{itemize}
        \item Variables are horizontal rectangles of the same height.
        \item Clauses are horizontal rectangles of the same height.
        \item Variables are connected to clauses with the vertical lines (wires).
        \item There exists a \textit{variable line} - on which all variables lie.
    \end{itemize}
    
    Monotonicity implies that each clause is either positive or negative, meaning they contain only the variables without negations or with negations respectively. Hence, adding two properties:
    
    \begin{itemize}
        \item Positive clauses are above the \textit{variable line}.
        \item Negative clauses are below the \textit{variable line}.
    \end{itemize}
    
        \begin{figure*}[t]
                \centering
                \begin{subfigure}[b]{0.40\textwidth}
                    \centering
                    \includegraphics[width=\textwidth]{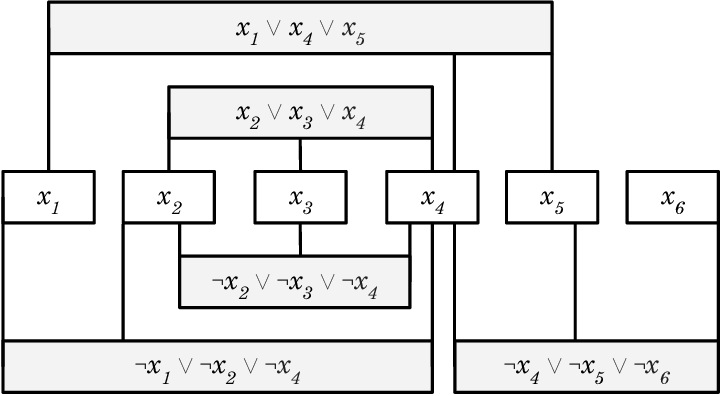}
                    \caption[Network2]%
                    {{\small Rectilinear embedding.}}    
                    \label{fig:RPM-planar}
                \end{subfigure}
                \hfill
                \begin{subfigure}[b]{0.40\textwidth}  
                    \centering 
                    \includegraphics[width=\textwidth]{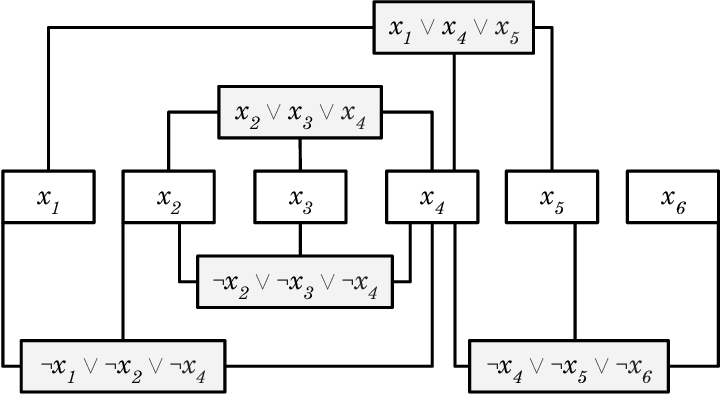}
                    \caption[]%
                    {{\small Bent rectilinear embedding.}}    
                    \label{fig:bent}
                \end{subfigure}
            \caption[ The average and standard deviation of critical parameters ]
            {\small Embeddings of the instance of RPM-3SAT. They represent the CNF formula: $(x_1 \vee x_4 \vee x_5) \wedge (x_2 \vee x_3 \vee x_4) \wedge (\neg x_1 \vee \neg x_2 \vee \neg x_4) \wedge (\neg x_2 \vee \neg x_3 \vee \neg x_4) \wedge (\neg x_4 \vee \neg x_5 \vee \neg x_6)$. Source: \cite{allen2018sto}} 
            \label{fig:RPM}
        \end{figure*}
        
    \begin{theorem}
        Rectilinear Planar Monotone 3SAT is \textit{NP-complete}.
    \end{theorem}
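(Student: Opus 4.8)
The plan is to establish membership in \emph{NP} and then \emph{NP}-hardness by a reduction from \emph{Planar 3SAT}, which is \emph{NP}-complete by Lichtenstein's theorem and which, by the result of Knuth and Raghunathan, admits a rectilinear planar layout in which all variables sit on a single horizontal \emph{variable line}, clauses are axis-parallel rectangles placed above or below that line, and each clause is joined to its variables by vertical wires. Membership in \emph{NP} is immediate: a truth assignment has length $n$ and can be checked against every clause in linear time, so it is a polynomial-size witness matching the definition of \emph{NP} given above.

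For hardness, the only two features of such a layout that can violate the RPM-3SAT format are that a clause may be \emph{mixed} (containing both positive and negative literals) and that, after any fix, positive clauses must lie above and negative clauses below the variable line. I would monotonize as follows. For every variable $x_i$ I introduce a twin variable $y_i$ together with the two constraint clauses $(x_i \vee y_i)$ and $(\neg x_i \vee \neg y_i)$; their conjunction is exactly $x_i \oplus y_i$, forcing $y_i = \neg x_i$ in every satisfying assignment. I then rewrite each original clause according to the side of the line on which it is drawn: a clause above the line is made all-positive by replacing every occurrence of $\neg x_i$ with $y_i$, while a clause below the line is made all-negative by replacing every occurrence of $x_i$ with $\neg y_i$. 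Because $y_i$ carries the truth value of $\neg x_i$, each rewritten clause is equivalent to the original under the constraint $y_i = \neg x_i$, so the new monotone formula is satisfiable if and only if the given Planar 3SAT instance is; the constraint clauses have at most three literals, and the whole construction clearly runs in polynomial time.

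It remains to realize this monotone formula by an embedding of the required rectilinear form, and this is where the real work lies. I would place each twin $y_i$ immediately adjacent to $x_i$ on the variable line, insert the positive constraint clause $(x_i \vee y_i)$ just above the pair and the negative constraint clause $(\neg x_i \vee \neg y_i)$ just below it, and then reattach every original wire to either the $x_i$-terminal or the $y_i$-terminal as dictated by the rewriting. The main obstacle is to carry out this reattachment without introducing wire crossings: above the line both $x_i$ and $y_i$ may emit wires, and below the line both may as well, so the horizontal order of the two terminals and the interleaving of their wires must be chosen to respect the planar cyclic order in which clauses attach to $x_i$ in the original drawing. I expect this routing argument---showing that a local side-by-side gadget for the pair $x_i, y_i$ can always absorb the original incidences while keeping all positive clauses above, all negative clauses below, and the drawing planar and rectilinear---to be the crux of the proof, after which the equivalence and the polynomial bound established above complete the theorem.
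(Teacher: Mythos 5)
First, a caveat: the paper does not actually prove this theorem --- its ``proof'' is the single line ``Look at \cite{de2012optimal}'' --- so your attempt is really being measured against the cited literature rather than against an in-paper argument. Your overall strategy (start from a Knuth--Raghunathan rectilinear layout of Planar 3SAT, force $y_i=\neg x_i$ via the clause pair $(x_i\vee y_i)\wedge(\neg x_i\vee\neg y_i)$, and rewrite each clause to be monotone according to the side of the variable line on which it is drawn) is exactly the strategy of that cited proof, and your logical-equivalence and polynomial-time claims are correct as stated.

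The step you yourself flag as the crux, however, is a genuine gap, and with your construction as written it actually fails. You introduce one global twin $y_i$ per variable and place it as a single rectangle adjacent to $x_i$ on the variable line. Wires in this format are vertical segments, so a clause's wire must reach its variable at the same abscissa at which it leaves the clause. If $x_i$ occurs above the line in clauses $C_1,C_2,C_3$ whose wires attach in this left-to-right order with signs $+,-,+$, then $C_2$'s wire must end on $y_i$ while those of $C_1$ and $C_3$ end on $x_i$; since $y_i$ lies entirely to one side of $x_i$, the abscissa of $C_2$'s wire would have to be both strictly between those of $C_1$ and $C_3$ and outside the horizontal extent of $x_i$, which is impossible. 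The symmetric pattern $-,+,-$ breaks the construction below the line. The standard repair --- and what de Berg and Khosravi actually do --- is to monotonize one inconsistent \emph{occurrence} at a time: for each occurrence of the wrong sign, insert a fresh variable $x'$ on the variable line exactly at the abscissa where that occurrence's wire attaches, redirect only that wire to $x'$, and draw the two forcing clauses $(x\vee x')$ and $(\neg x\vee\neg x')$ in a thin horizontal strip immediately above and below the variable line, close enough to the line that they cannot cross any pre-existing wire. With that per-occurrence version, your equivalence argument goes through verbatim and the embedding stays planar, rectilinear, and monotone.
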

    \begin{proof}
        Look at \cite{de2012optimal}.
    \end{proof}
    
    In order to simplify our reduction, we introduce \textit{Bent RPM-3SAT} by making a small adjustment to planar embedding as shown in \autoref{fig:bent} (inspired by \cite{allen2018sto}):
    \begin{itemize}
        \item Each clause is shrunk by moving the left and right boundary toward the clause center, in such a way that all clauses have the same width and that the middle wire is equally distanced between the left and the right end of clause.
        \item Each left and right wire of every clause is extended vertically from the bottom to middle of the clause, redirected by $90\si{\degree}$ in clause direction, and extended horizontally until it touches the clause border.
    \end{itemize}
    
    Since the \textit{RPM-3SAT} is drawn on a plane (continuous domain), each of its clauses can be shrunk and wires can be modified, as explained, without changing any other part of the embedding. Hence, the \textit{Bent RPM-3SAT} will preserve the computational properties of \textit{RPM-3SAT}.
    
    \section{Solution framework} \label{sec:sol}
    
    For the purpose of our reduction, we will define building blocks for the puzzle, called gadgets, inspired by \textit{Gadget Area Hardness Framework} \cite{adler2020tatamibari}. The puzzle will consist of gadgets and external areas.
    
    A \textbf{gadget} is a subpuzzle, a connected subset of cells, consisting of mandatory and one or more optional areas. Each optional area is the place where two adjacent gadgets overlap, and cannot be adjacent to any other gadget. Any area assigned to a gadget must satisfy the following:
    \begin{enumerate}
        \item It covers the gadget's mandatory area.
        \item It is a subset of the gadget's entire area.
        \item It either fully covers or fully uncovers any optional area.
    \end{enumerate}
    A gadget's solution is a local solution to a subpuzzle with an assigned area that satisfies the above mentioned properties, as well as the \textit{Double Choco} properties from \autoref{chap:1}.
    
    A gadget's \textbf{profile table} is the list of its solutions. A profile table is complete if it contains all of a gadget's solutions.
    
    An \textbf{external area} is a subpuzzle consisting only of a mandatory area. Each external area should not interact with any gadget, and the external area must be solvable locally (only using its own cells). Hence, the external areas do not effect the solution to our puzzle and their only utilization is to cover the holes in the grid structure of the puzzle in \autoref{chap:4}.

    Given a puzzle consisting of gadgets and external areas, a \textbf{puzzle profile assignment} defines an area assigned to each gadget such that each area is disjointed from all the others, and a union of the solutions covers entire area of all gadgets. A profile assignment is valid if all assigned gadgets' areas represent the solution of respective gadgets, i.e., the assigned area appears in the gadgets' profile table. Observe that if $G_1$ and $G_2$ are adjacent gadgets overlapping on optional area $A_1$, in any valid profile assignment, if an area assigned to $G_1$ covers (uncovers) $A_1$, than $G_2$ uncovers (covers) $A_1$.

    \chapter{Arbitrarily shaped Double Choco is NP-complete} \label{chap:3}
    An arbitrarily shaped \textit{Double Choco} is a relaxed version of the original puzzle in which the board can have arbitrary shaped holes in its grid structure, but its cells still need to be connected.
    In this chapter we prove it to be \textit{NP-complete} by reduction from the \textit{Bent RPM-3SAT}. 
    
    With this relaxation, our puzzle can consist only of connected gadgets, without external areas.  We will describe the wire, variable, and clause gadgets used to make our reduction in \autoref{sec:wire}, \autoref{sec:variable}, and \autoref{sec:clause}, respectively,
    as well as a various supporting gadgets such as the rotation and wire boundary gadgets.
    
    When arranging our gadgets we will follow \autoref{fig:bent} and scale it up in some cases. Each wire starts from the wire boundary gadget that determines its value. The wire is used to transfer the value and is then connected to a variable gadget. The variable gadget connects the neighbouring wires so that all upward wires have the same value, all downward wires have the same value, while the upward and downward wires have the opposite value. After exiting the variable gadget, unused wires will be terminated in another wire boundary gadget, while the used wires will transfer their value to the clause gadget, potentially going through a rotation gadget which will rotate them $90\si{\degree}$ in the direction of the clause. 
    
    The wire boundary gadget has only one optional area which will either be fully covered or fully uncovered by its solution. When starting the wire, if the optional area is covered by the wire boundary gadget solution, it will assign the value \textit{true} to the wire, otherwise, if it is uncovered, it will assign \textit{false}. While all the other gadgets, except the clause gadget, will have paired optional areas of the same size, on one area they will receive the truth value while on the other they will output it. Exactly one of the paired areas will be covered by the gadget solution, in that way transferring the truth value. The clause gadget will have three optional areas, representing inputs which receives the truth values from incoming wires. It will enforce that at least one wire communicates \textit{true} because it can be solved only in that case.
    
    When a puzzle is created in the way described above, it will have each optional area covered completely by exactly one gadget. Hence, the gadget solutions that partially cover the optional area will never happen in any solution of the puzzle, even if the gadget is solvable in that case. Thus, we will not present them in the gadget's profile table. We will call the profile table that contains all the solutions except those the \textbf{\textit{sufficient profile table}}. Consequently, we have relaxed the third property of the gadget assigned area without changing the puzzle profile assignment.
    
    \newpage

        \begin{figure*}[htb]
            \centering
            \begin{subfigure}[t]{0.4\textwidth}
                \centering
                \includegraphics[scale = 0.35]{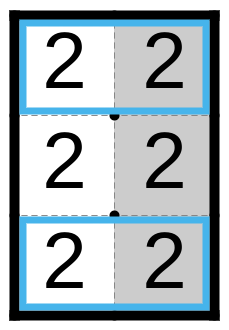}
                \caption[Network2]%
                {{\small An unsolved wire gadget. Optional areas are inside the blue rectangles, while the area outside them is mandatory.}}    
                \label{fig:uwire}
            \end{subfigure}
            \hfill
            \begin{subfigure}[t]{0.25\textwidth}  
                \centering 
                \includegraphics[scale = 0.35]{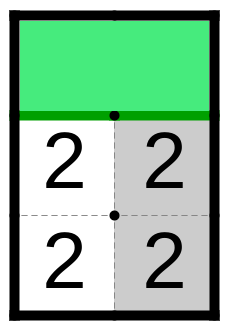}
                \caption[]%
                {{\small Wire communicating \textit{false}.}}    
                \label{fig:fwire}
            \end{subfigure}
            \hfill
            \begin{subfigure}[t]{0.25\textwidth}   
                \centering 
                \includegraphics[scale = 0.35]{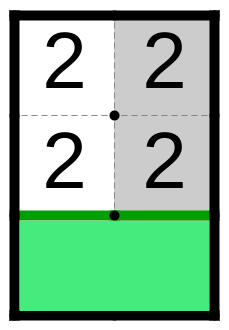}
                \caption[]%
                {{\small Wire communicating \textit{true}.}}    
                \label{fig:twire}
            \end{subfigure}
            \caption[]
            {\small  Wire gadget with the profile table. In figures (b) and (c) green area is not covered by the gadget solution. It is covered by an adjacent gadget, which determines the value of the wire. }
            \label{fig:wire}
        \end{figure*}
        \section{Wire gadget and supporting gadgets} \label{sec:wire}
        In \autoref{fig:wire} we present the wire gadget consisting of the odd number of rows, in this case three, and two columns, with an integer $\bm{2}$  in each cell, while cells in left and right column are colored differently. When determining the value in the subfigures (b) and (c), we assumed that the lower optional area receives the truth value while the upper optional area outputs it. In the opposite case, the values will be opposite.
        
        \begin{lemma} \label{lem:wire}
        The bottom optional area of the wire gadget is covered by the gadget assigned area if, and only if, the top optional area is not covered.
        \end{lemma}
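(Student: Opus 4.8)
The plan is to exploit the single most rigid feature of the wire gadget, namely that every one of its cells carries the number $2$, and then to close the argument with a divisibility (parity) count on the assigned area. Recall from the rules in \autoref{chap:1} that whenever a block contains a numbered cell, its white region and its gray region must each consist of exactly that many cells. Here every cell is labelled $2$, so every block of any gadget solution contains such a cell and is therefore forced to have precisely two white cells and two gray cells, i.e.\ exactly four cells in total. Consequently the number of cells in any assigned area that admits a solution must be a multiple of $4$.

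Next I would read the cell counts off the unsolved gadget in \autoref{fig:uwire}: the mandatory area and each of the two optional areas each contribute a number of cells that is congruent to $2$ modulo $4$. Enumerating the four possibilities for which of the top and bottom optional areas the gadget covers, covering neither or covering both yields a total congruent to $2$ modulo $4$, which is not a multiple of $4$ and hence cannot be partitioned into four-cell blocks, while covering exactly one of them yields a multiple of $4$. This is exactly the claimed biconditional: in any valid gadget solution the bottom optional area is covered precisely when the top optional area is left uncovered.

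Finally, to confirm that the two admissible residue classes are genuinely realized, so that the gadget is non-vacuous and its profile table is the one pictured in \autoref{fig:fwire} and \autoref{fig:twire}, I would exhibit an explicit solution in each exactly-one-covered case. The assigned area is then a two-column rectangle with an even number $2k$ of rows, which I tile by stacking $2\times 2$ square blocks, each pairing the two white cells of the left column with the two gray cells of the right column. Each such block has a white region and a gray region of two cells each; the two regions are congruent vertical dominoes, hence equal in size and shape up to rotation; both are connected and are joined to one another; and the block size $2$ matches the label $2$. All the \emph{Double Choco} constraints therefore hold, and both truth-value configurations occur.

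I do not expect a real obstacle: the divisibility obstruction is absolute and independent of how blocks are shaped, so it does the essential work by itself. The only points demanding care are the bookkeeping of the exact cell counts of the mandatory and optional areas, since the whole argument hinges on their residues modulo $4$, and the verification that the explicit $2\times 2$ tiling satisfies the shape-and-connectivity clause of the rules and not merely the counting clause.
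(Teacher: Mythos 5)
Your proof is correct, but it reaches the conclusion by a genuinely different route than the paper. The paper argues locally: since every cell carries the integer $2$, a covered bottom optional area must be absorbed into a four-cell block together with the mandatory cells, which exhausts everything adjacent to the top optional area and leaves it no way to sit inside a four-cell block of its own; the converse direction is symmetric. You instead argue globally by divisibility: every block must have exactly four cells, so the assigned area's size must be a multiple of $4$, and since the mandatory area and each optional area contribute $2 \pmod 4$ cells, exactly one optional area can be covered. Your counting argument is more robust in that it is independent of the gadget's geometry, it explicitly kills the ``neither covered'' and ``both covered'' cases in one stroke, and it transfers verbatim to the taller wires obtained by stacking wire gadgets; its one obligation, which you correctly flag, is verifying the residues of the cell counts, and these do check out (mandatory and optional areas are each a single $2$-cell row in the base gadget, and the mandatory area grows by $4$ cells per appended gadget). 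The paper's local forcing argument is weaker as a standalone counting tool but is more in the style used for the remaining gadgets and more directly yields the \emph{uniqueness} of the tiling that the subsequent sufficiency lemma needs; your closing paragraph, which exhibits the stacked $2\times 2$ tilings, is really the content of that next lemma rather than of this one, though including it does no harm.
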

        \begin{proof}
            By using the puzzle properties, if the bottom optional area is covered, it must be in the same block as the mandatory area. Since the top optional area has cells with an integer of $\bm{2}$, it therefore must be in a block of four cells. Thus, the top optional area cannot be covered by that gadget solution. Similarly, if the top optional area is covered, the bottom optional area remains uncovered. 
        \end{proof}
        
        \begin{lemma}
            The wire gadget's profile table in \autoref{fig:wire} is \textit{sufficient}.
        \end{lemma}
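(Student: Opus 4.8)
The plan is to prove that the two solutions drawn in \autoref{fig:wire}(b) and \autoref{fig:wire}(c) are exactly the solutions of the wire gadget in which each optional area is either entirely covered or entirely uncovered. Since the sufficient profile table is, by definition, the complete table with the partial-coverage solutions removed, showing that these two solutions are all of the full-coverage solutions is precisely what the lemma demands.

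First I would apply \Cref{lem:wire}: the bottom optional area is covered by the assigned area if and only if the top optional area is not. Once we restrict attention to assignments that fully cover or fully uncover each optional area, this biconditional eliminates the pattern in which both areas are covered as well as the pattern in which neither is, leaving exactly two coverage patterns to examine: (A) the bottom area covered and the top uncovered, and (B) the top area covered and the bottom uncovered. These are the patterns shown in subfigures (b) and (c).

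Next, for each pattern I would show that the block decomposition is forced, so that the coverage pattern pins down a unique solution. In pattern (A) the assigned area is the mandatory middle row together with the bottom row: four cells, two white and two gray, each carrying the integer $\bm{2}$. Every block containing a numbered cell must consist of two white and two gray cells, and the assigned area holds exactly two cells of each color, so all four cells must lie in one block. The two white cells form a connected vertical domino, the two gray cells form a congruent connected vertical domino adjacent to it, and the union is connected; hence this single block meets every \textit{Double Choco} constraint and is the only admissible decomposition. Pattern (B) follows from the reflection that exchanges the top and bottom rows, which preserves the coloring. Thus each coverage pattern admits exactly one solution, and these coincide with the two entries of the table.

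The one step I expect to require genuine care is the uniqueness claim just invoked: it is not enough to verify that the drawn decomposition is valid, one must also rule out every alternative partition of the four assigned cells. This is clean in the present gadget because each of the four cells is numbered with $\bm{2}$, so no cell may sit in a block of size other than four, and with only four cells available the block---and therefore the whole solution---is forced. Combining the two patterns, the table lists all and only the full-coverage solutions of the wire gadget, which is exactly the assertion that it is sufficient.
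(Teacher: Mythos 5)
Your proposal is correct and follows essentially the same route as the paper: invoke Lemma \ref{lem:wire} to reduce to the two full-coverage patterns, then observe that the four assigned cells all carry the integer $\bm{2}$, forcing a single block of size four and hence a unique solution per pattern. Your write-up is simply a more detailed version of the paper's argument, spelling out the validity and uniqueness of the forced block, which the paper leaves implicit.
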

        \begin{proof}
            As a consequence of Lemma \autoref{lem:wire}, regardless of whether the bottom optional area is entirely uncovered or entirely covered by the solution, the solution should tile exactly four cells. Since those four cells contain integer $\bm{2}$, in both cases, there is exactly one solution, shown in \autoref{fig:fwire} and \autoref{fig:twire}, taking all the cells into one block.
        \end{proof}
        
        The wire can be extended to an arbitrary height by appending wire gadgets on top of each other, since the extended gadget will preserve its properties. Each appended wire gadget will extend the wire height by two.
        Observe in \autoref{fig:wire}, if we negate the colors, i.e. the cells in the left column are gray while the cells in the right column are white, the gadget will preserve its properties. Furthermore, the gadget is invariable with respect to rotation and mirroring. All gadgets in this study will satisfy these properties, since they directly follow from the puzzle properties. Therefore, in the rest of the thesis, we will use them without explicitly addressing them.

        \newpage
        \begin{figure*}[htb]
            \centering
            \begin{subfigure}[t]{0.35\textwidth}
                \centering
                \includegraphics[scale = 0.35]{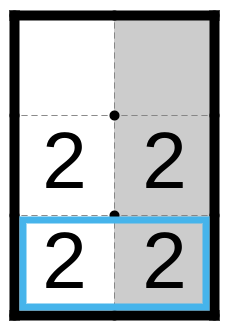}
                \caption[Network2]%
                {{\small An unsolved wire boundary gadget. The optional area is in the blue rectangles, while the area outside it is mandatory.}}    
                \label{fig:uending}
            \end{subfigure}
            \hfill
            \begin{subfigure}[t]{0.3\textwidth}  
                \centering 
                \includegraphics[scale = 0.35]{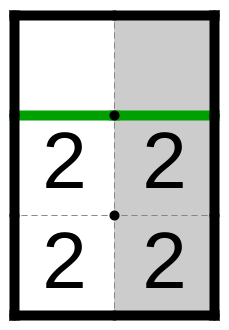}
                \caption[]%
                {{\small Starting wire communicating \textit{true}, or terminating wire communicating \textit{false}.}}    
                \label{fig:fending}
            \end{subfigure}
            \hfill
            \begin{subfigure}[t]{0.3\textwidth}   
                \centering 
                \includegraphics[scale = 0.35]{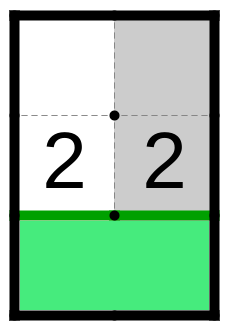}
                \caption[]%
                {{\small Starting wire communicating \textit{false}, or terminating wire communicating \textit{true}.}}    
                \label{fig:tending}
            \end{subfigure}
            \caption[]
            {\small  Wire boundary gadget with the profile table. The green area is not covered by the gadget solution.}
            \label{fig:ending}
        \end{figure*}
        In \autoref{fig:ending} we present the wire boundary gadget. This gadget will be used to start the new wires and to terminate unused ones. Observe that its value will be interpreted differently depending on whether it is used to start or terminate the wire. 
        
        \begin{lemma}
        The wire boundary gadget does not enforce wire value.
        \end{lemma}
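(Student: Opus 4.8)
The plan is to show that the wire boundary gadget admits a valid local solution in each of the two possible configurations of its single optional area --- covered and uncovered --- so that neither configuration is forbidden. Since the gadget's assigned area must either fully cover or fully uncover the optional area (property 3 of a gadget), establishing solvability in both cases shows the gadget imposes no constraint on whether the adjacent wire reads \emph{true} or \emph{false}, which is exactly the meaning of ``does not enforce wire value.''

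First I would treat the uncovered case. Here the gadget's assigned area is exactly its mandatory area, so I must check that the mandatory cells alone tile into valid \textit{Double Choco} blocks. As in the proof of \autoref{lem:wire}, an embedded integer $\bm{2}$ forces its block to contain the corresponding number of cells, and the two-column coloring guarantees the existence of a block splitting them into equal-sized, congruent white and gray halves; the resulting tiling is the solution drawn in \autoref{fig:fending}. Next I would treat the covered case. Now the optional area is merged with the mandatory area, and by the same puzzle properties --- a covered optional cell must lie in the same block as the mandatory region it touches, and the embedded integer fixes the half-size of that block --- the enlarged region tiles into the valid solution shown in \autoref{fig:tending}.

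Combining the two cases: since both the covered and the uncovered configuration yield valid local solutions, the gadget is solvable no matter how its optional area is assigned. Recalling the convention that, when starting a wire, a covered optional area signals \emph{true} and an uncovered one signals \emph{false} (with the interpretation reversed when terminating), this means the boundary gadget can legally realize either truth value. Hence it does not constrain, and therefore does not enforce, any particular wire value.

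I expect the main obstacle to be the case-by-case verification that each of the two displayed configurations genuinely satisfies \emph{all} \textit{Double Choco} constraints simultaneously --- in particular the requirement that the white and gray halves of every block be connected and congruent up to rotation or mirroring, and that every embedded integer match its block's half-size. This is the same type of local check used for \autoref{lem:wire}, but here it must be carried out for both assignments of the optional area rather than for a forced one; once both checks go through for the solutions in \autoref{fig:fending} and \autoref{fig:tending}, non-enforcement follows immediately.
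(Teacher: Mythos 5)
Your proposal is correct and follows essentially the same approach as the paper, which simply exhibits the two local solutions in \autoref{fig:fending} and \autoref{fig:tending} as witnesses that both the covered and uncovered assignments of the optional area are realizable. Your version just spells out the case-by-case verification of the \textit{Double Choco} constraints that the paper leaves implicit in the figures.
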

        \begin{proof}
        Solutions for the both possible wire values are shown in \autoref{fig:fending} and \autoref{fig:tending}.
        \end{proof}
        
        \begin{theorem}
            The profile table of the wire boundary gadget is \textit{complete}.
        \end{theorem}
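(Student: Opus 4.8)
The plan is to establish completeness by exhaustively enumerating every local Double Choco solution of the subpuzzle underlying the wire boundary gadget and checking that each one coincides with the solution displayed in \autoref{fig:fending} or \autoref{fig:tending}. Since the single optional area may a priori be fully covered, fully uncovered, or only partially covered by a block of the gadget, I would organize the argument as a case analysis on the status of that optional area, holding the mandatory cells fixed (in both color and integer label) throughout. The essential leverage is the integer $\bm{2}$ carried by the mandatory cells: by the puzzle property linking labels to block size, any block containing a labelled cell must have exactly two white and two gray cells, hence be a four-cell block made of two congruent single-color dominoes.

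First I would treat the optional area as \emph{fully uncovered}. The mandatory cells then form a fixed set with prescribed colors containing a $\bm{2}$-labelled cell, and I would argue that the equal-size, connectedness, and shape-congruence constraints force these cells into a single four-cell block in exactly one way, namely the tiling of one of the two figures. Next I would treat the optional area as \emph{fully covered}, so the block must additionally absorb the optional cells; the same label and congruence constraints again pin down a unique tiling, that of the other figure. Combined with the preceding lemma, which already exhibits both realizations, this shows that each of the two table entries is attained and that it arises in a unique way.

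The main obstacle, and precisely the step that upgrades \emph{sufficiency} to \emph{completeness}, is ruling out any solution that covers the optional area only partially. Here I would show that splitting the optional area across a block boundary is incompatible with the Double Choco requirements: a block meeting a labelled mandatory cell is already forced to be a complete four-cell configuration as above, leaving no room to annex only part of the optional area, while any separate block built from a proper subset of the optional cells cannot simultaneously satisfy the equal-white-and-gray count and the congruent-shape condition. Because the relevant cells are either labelled or adjacent to forced blocks, this reduces to a finite, label-constrained check; the delicate point is invoking the shape-congruence condition carefully, since it is this constraint—rather than mere cell counting—that eliminates the spurious partial-coverage tilings and leaves exactly the two solutions of \autoref{fig:ending}.
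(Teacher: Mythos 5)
Your proposal follows essentially the same route as the paper: a three-way case split on whether the single optional area is fully covered, fully uncovered, or partially covered, with the integer $\bm{2}$ forcing the four labelled cells into a unique four-cell block in the first two cases. Two small corrections are worth noting: in the fully covered case the optional cells cannot be \emph{absorbed} into the labelled block (that block must consist of exactly four cells), so they necessarily form a separate two-cell block of their own; and in the partial-coverage case the paper's argument is pure cell counting --- the solution would then cover five cells, an odd number, which cannot be partitioned into blocks each having equally many white and gray cells --- so the shape-congruence condition you single out as the delicate, indispensable point is in fact not needed there.
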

        \begin{proof}
        There are three possible cases in which solution, not shown in \autoref{fig:uending}, may exist. We will argue each case separately:
        \begin{enumerate}
            \item There exists another solution that covers the whole gadget and it is different than the one in \autoref{fig:fending}. Since our gadget has six cells, four of which have an integer $\bm {2}$, all four of them must be in the same block. This forces the two cells on the top of the  gadget to be tiled together. Hence, we reached the contradiction, because that solution is already shown in the figure.
            \item There exists another solution that covers the mandatory area, but it does not cover an optional area and it is different than the one in \autoref{fig:tending}. That solution should tile four cells, while having cells with an integer $\bm {2}$. The only way to tile them is with all four cells inside the one block. Again, we reach the contradiction since that solution is already shown.
            \item There exists a solution that covers the mandatory area and the one cell of the optional area. That solution should cover an odd number of cells, thus violating the property of the puzzle that white and gray areas should have the same size in each block. Hence, such a solution is impossible.
        \end{enumerate}
        \end{proof}
        
        Together we proved that the wire boundary gadget allows both possible wire values,  and it does not allow for any other solution. When it starts the wire, it will determine its value. When it terminates the wire, it will do so irrespective of its value.
        
        \newpage
        \begin{figure*}[htb]
            \centering
            \begin{subfigure}[t]{0.4\textwidth}
                \centering
                \includegraphics[scale = 0.25]{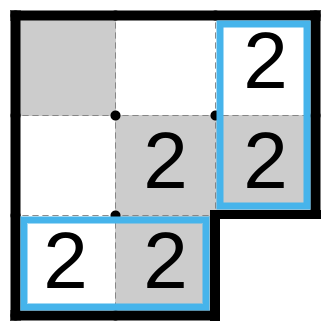}
                \caption[Network2]%
                {{\small An unsolved right rotation gadget. Paired optional areas are in blue rectangles, area outside them is mandatory.}}    
                \label{fig:urangle}
            \end{subfigure}
            \hfill
            \begin{subfigure}[t]{0.25\textwidth}  
                \centering 
                \includegraphics[scale = 0.25]{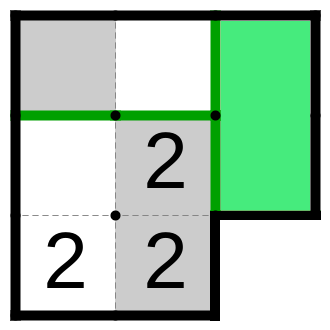}
                \caption[]%
                {{\small Redirecting wire communicating \textit{false} on the right.}}    
                \label{fig:frangle}
            \end{subfigure}
            \hfill
            \begin{subfigure}[t]{0.25\textwidth}   
                \centering 
                \includegraphics[scale = 0.25]{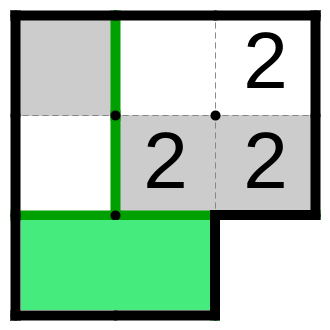}
                \caption[]%
                {{\small Redirecting wire communicating \textit{true} on the right.}}    
                \label{fig:trnagle}
            \end{subfigure}
            \vskip\baselineskip
            \begin{subfigure}[t]{0.4\textwidth}
                \centering
                \includegraphics[scale = 0.25]{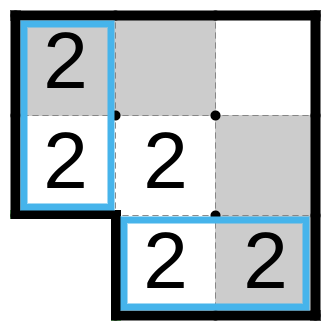}
                \caption[Network2]%
                {{\small An unsolved left rotation gadget. Paired optional areas are in blue rectangles, area outside them is mandatory.}}    
                \label{fig:ulangle}
            \end{subfigure}
            \hfill
            \begin{subfigure}[t]{0.25\textwidth}  
                \centering 
                \includegraphics[scale = 0.25]{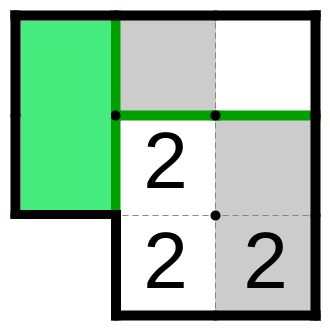}
                \caption[]%
                {{\small Redirecting wire communicating \textit{false} on the left.}}    
                \label{fig:flangle}
            \end{subfigure}
            \hfill
            \begin{subfigure}[t]{0.25\textwidth}   
                \centering 
                \includegraphics[scale = 0.25]{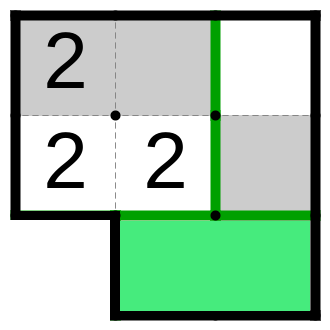}
                \caption[]%
                {{\small Redirecting wire communicating \textit{true} on the left.}}    
                \label{fig:tlangle}
            \end{subfigure}
            \caption[]
            {\small  Rotation gadgets with their profile table. Green areas are  not covered by the gadget solution. }
            \label{fig:angle}
        \end{figure*}
        
        We have presented the right and the left rotation gadgets in \autoref{fig:angle}. The values shown apply in the case when the lower optional area receives the truth value while the other optional area outputs it. In the opposite case, the values will be opposite. Observe that they are the same gadget, because we can get the left rotation gadget from the right rotation gadget after applying vertical mirroring and color inversion. Hence, it is sufficient to discuss only the right rotation gadget.
        
        \begin{lemma} \label{lem:rot}
            The right optional area of the right rotation gadget is covered if, and only if, the bottom optional area is uncovered.
        \end{lemma}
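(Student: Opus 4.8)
The plan is to mirror the argument used for \cref{lem:wire}, exploiting that every cell of the mandatory area carries the integer $\bm{2}$, which pins down the block size. First I would record the governing constraint: by the \textit{Double Choco} properties, any block containing a cell labelled $\bm{2}$ must have its white region and its gray region each consist of exactly two cells, so such a block has precisely four cells. Since the mandatory area of the right rotation gadget already contributes cells carrying this label, the block that covers the mandatory area is forced to be one of these four-cell blocks.

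Next I would establish the forward direction. Suppose the bottom optional area is covered by the gadget's assigned area. Then its cells lie in the same block as the mandatory cells, and together they already exhaust the four cells permitted by the integer-$\bm{2}$ constraint, forming the required colour-balanced, shape-congruent pair. Hence no further cell may join this block, and in particular the right optional area cannot also be covered; so it is uncovered. This is exactly the counting step from \cref{lem:wire}, transplanted to the bent geometry. For the converse I would argue that if the bottom optional area is uncovered, the mandatory cells must still be completed into a legal block: on their own they fall short of a valid four-cell block, so the assigned area is forced to draw in additional cells, and the only cells available to the gadget are those of the right optional area. Including them yields the configurations depicted in \autoref{fig:frangle} and \autoref{fig:trnagle}, so the right area is covered. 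Combining the two directions gives the stated equivalence.

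The main obstacle I anticipate is geometric bookkeeping rather than counting: because the rotation gadget bends the wire by $90\si{\degree}$, I must check that the white and gray halves of the forced four-cell block are genuinely connected and congruent up to rotation or mirroring in each admissible configuration, and that no alternative four-cell block can touch both optional areas at once or straddle them partially. The delicate point is ruling out a spurious block created by the bend that covers, say, one cell from each optional area; I would dispatch this case by the same parity observation used in case~(3) of the wire boundary gadget, namely that covering an odd number of optional cells would break the equal white/gray size requirement, making such a block impossible.
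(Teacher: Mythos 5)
Your argument transplants the counting proof of \cref{lem:wire} onto the rotation gadget, but the premise it rests on --- that every cell of the rotation gadget's mandatory area carries the integer $\bm{2}$, so that the whole mandatory area together with one optional area forms a single four-cell block --- is not stated anywhere in the paper and does not match the gadget. The paper's own proof makes no use of clue numbers at all: it argues that a covered bottom optional area must join its \emph{adjacent} mandatory cells to keep the white and gray areas connected, and that \emph{after that} ``the top two cells of the mandatory area are forced to be in the same block.'' In other words, the mandatory area of the rotation gadget is larger than two cells and is split across more than one block in the solution; it is not a single block pinned to size four. Your count ``bottom optional cells plus the mandatory cells already exhaust the four permitted cells'' therefore does not add up, and if the top two mandatory cells really did carry a $\bm{2}$, the tilings shown in \autoref{fig:frangle} and \autoref{fig:trnagle} would themselves be illegal.

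The converse direction has a second gap independent of the labelling issue. You assert that when the bottom optional area is uncovered the mandatory cells ``on their own fall short of a valid four-cell block'' and so must absorb the right optional area, but you give no reason the mandatory area cannot be tiled entirely by its own cells (leaving \emph{both} optional areas uncovered), which would break the if-and-only-if. Ruling this out requires looking at which specific cells are forced together by the connectivity and shape constraints of the bent geometry --- exactly the forcing argument the paper runs --- not a parity or block-size count. The parity remark you borrow from case~(3) of the wire boundary gadget only excludes partially covered optional areas, which is a different (and here already assumed away) failure mode.
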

        \begin{proof}
            If the bottom optional area is covered by the gadget solution, then it must be in the same block as adjacent cells in the mandatory area, in order to have a connected gray and white area. After that, the top two cells of the mandatory area are forced to be in the same block, leaving the right optional area uncovered. If the right optional area is covered by the gadget solution, then, by similar argument, the bottom optional area remains uncovered.
        \end{proof}
        
        \begin{lemma}
            The right rotation gadget's profile table is sufficient.
        \end{lemma}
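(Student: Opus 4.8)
The plan is to mirror the structure of the wire gadget's sufficiency proof, leveraging \autoref{lem:rot} to reduce the analysis to two mutually exclusive cases. By that lemma, in any gadget solution the right optional area is covered exactly when the bottom optional area is uncovered. Hence the only candidate solutions that respect the third gadget property (fully covering or fully uncovering each optional area) split into the case ``bottom covered, right uncovered'' and the case ``bottom uncovered, right covered''. It then suffices to show that each of these two cases admits exactly one local solution, namely the one displayed in \autoref{fig:frangle} and \autoref{fig:trnagle} respectively.

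First I would fix the assigned area in each case and count the cells that must be tiled, noting that the gadget carries cells labelled with the integer $\bm{2}$, which force every block meeting such a cell to consist of four cells split evenly between white and gray. Next, starting from the covered optional area, I would propagate the connectivity requirement: the covered cells must join the adjacent mandatory cells so that each color's region stays connected, and this, together with the $\bm{2}$ constraint, pins down which cells enter which block. I would then verify that the forced assignment indeed yields white and gray regions of equal size and equal shape up to rotation or mirroring, confirming it is a genuine \textit{Double Choco} solution, and check that no cell is left with any alternative legal placement.

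The main obstacle I expect is ruling out spurious alternative tilings that the bent, L-shaped geometry of the rotation gadget permits but the straight wire gadget did not. Because the gadget turns a corner, the shape-matching constraint between the white and gray half-blocks is less rigid than in the wire, so I must argue carefully that the only partition of the cells into blocks of the required size and congruent shape is the intended one; in particular, that the corner cell cannot be absorbed into a differently shaped block. Once the corner cell's block membership is forced by the $\bm{2}$ labels together with the connectivity of the adjacent mandatory cells, the remaining cells have no freedom, and uniqueness in each case follows. This establishes that the two tabulated solutions exhaust the gadget's profile once partial-area solutions are excluded, i.e., the table is \textit{sufficient}.
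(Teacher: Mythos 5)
Your proposal is correct and follows essentially the same route as the paper: the paper's proof simply observes that the forcing argument in the proof of Lemma \autoref{lem:rot} already establishes that there is exactly one solution with the bottom optional area covered and exactly one with it uncovered, which is precisely the two-case reduction and per-case uniqueness argument you spell out. Your version is more explicit about verifying the block-size constraints and ruling out alternative tilings around the corner, but it is the same argument in expanded form.
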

        \begin{proof}
            We have already proven this statement in the proof of Lemma \autoref{lem:rot}, by arguing that there exists exactly one solution in which the bottom area is covered and one in which it is uncovered, which are shown in \autoref{fig:frangle} and \autoref{fig:trnagle}, respectively.
        \end{proof}
        
        The rotation gadgets will be used at most once per wire, and when the left and the right wires of the clause need to rotate toward the clause, as shown in \autoref{fig:bent}.

        \newpage

        \begin{figure}[htp]
        {\hfill}
        \centering
        \begin{minipage}[t]{.65\textwidth}
          \centering
          \includegraphics[width=\linewidth]{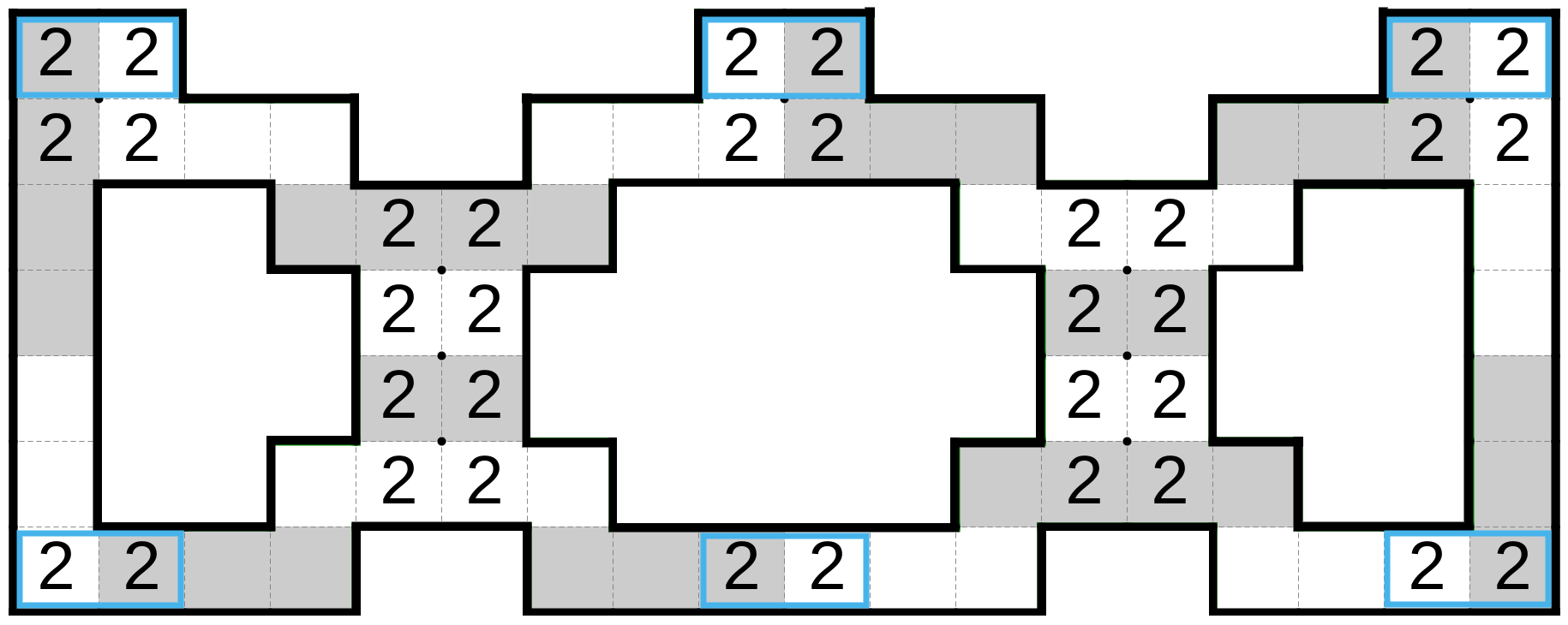}
          \captionof{figure}{The variable gadget. Optional areas are in the blue rectangles and they represent the places where the wires enter and exit gadget.}
          \label{fig:var}
        \end{minipage}%
        {\hfill}
        \centering
        \begin{minipage}[t]{.3\textwidth}
          \centering
          \includegraphics[width=.74\textwidth]{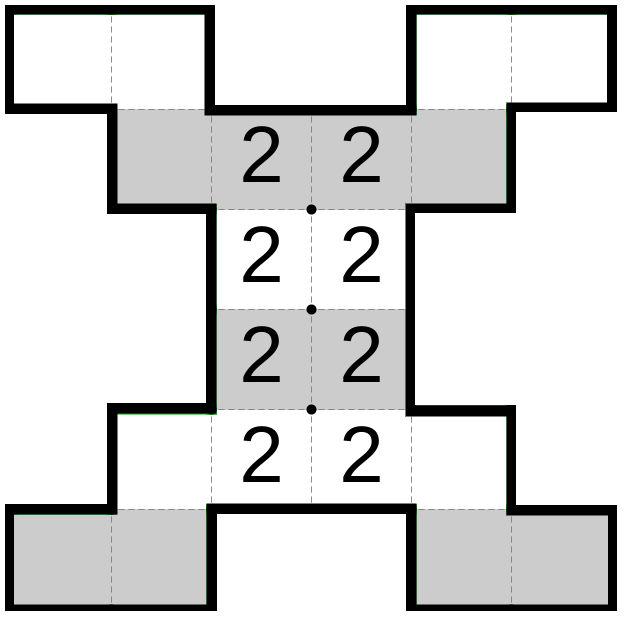}
          \captionof{figure}{Equalizer}
          \label{fig:test2}
        \end{minipage}
        {\hfill}
        \end{figure}
        \section{The variable gadget} \label{sec:variable}

        In \autoref{fig:var} we have presented the variable gadget. It consists of a sequence of \textit{equalizers} and two vertical segments. Equalizers connect neighbouring wires while vertical segments are used to transfer the value of two wires with only one neighbour, i.e. the furthest left and furthest right wires. 
        
        The first gadget the wire enters after starting in the wire boundary gadget will be the variable gadget. Therefore, we know that optional areas where the wires enter the gadget will either be fully covered of full uncovered. We will prove all claims in this section under the constraint that optional areas are never partially taken at the place the wire enters the gadget.
        
        Observe that optional areas are paired in such a way that for each bottom optional area there is a top optional area in the same column, and they represent places where wires enter and exit the gadget. Wires can enter on either side, depending on whether we need upward or downward wire. In the case we need upward wire, the wire will enter the gadget on the bottom side and will exit on the top optional area above it.
        
        The equalizer connects the neighbouring wires so that either both upper optional areas or both lower optional areas are forced to be covered by the gadget's solution. Because the wires are connected to those optional areas, if the wires are going in the same direction, they will have the same value. If one is going upward and other is going downward, they will have the opposite values. The equalizer consists of:
        
        \begin{itemize}
            \item Two columns, each comprising of four cells, are collectively called a rectangle. Cells of the rectangle have an integer \textbf{2}. A rectangle acts as an equality constraint. 
            \item Four L shaped regions which will enforce the tiling of the optional areas based on the tiling of the rectangle.
        \end{itemize}  

        \begin{lemma} \label{lem:eqvi}
            Equalizer has only two valid ways to cover its rectangle.
        \end{lemma}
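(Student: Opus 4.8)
The plan is to reduce the statement to a short, forced case analysis driven entirely by the integer $2$ appearing in every cell of the rectangle. First I would record the basic consequence of the puzzle rules: since each of the eight rectangle cells carries the number $2$, every block that meets the rectangle must contain exactly two white and two gray cells. As two connected cells of a single color can only form a straight domino, each such block is the union of a white domino and a gray domino of the same shape, joined along a shared edge. Because the left column of the rectangle is entirely white and the right column entirely gray, the white half of any such block lies inside the column of white cells and the gray half inside the column of gray cells, so both halves are \emph{vertical} dominoes. These observations, which mirror the argument already used for the wire gadget, will be the standing facts for the rest of the proof.

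Next I would turn the question into a choice of how to stack these vertical dominoes. Reading the left column from top to bottom, I would start with its topmost white cell and observe that its block either pairs it with the white cell immediately below it inside the rectangle, or with the white cell that the adjacent L-shaped region supplies just outside the rectangle. I would then show that each of these two choices propagates: once the first domino is placed, the fact that a path of same-color cells admits a unique domino tiling after one endpoint is committed forces the vertical position of every remaining white domino, and the connectivity requirement (the two halves of a block must share an edge) forces the matching gray domino opposite each white one. Carrying this propagation through all four rows in both columns leaves exactly two globally consistent configurations.

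I would then identify these two configurations as the two states announced for the equalizer: one in which the dominoes are shifted so that the cells of the upper L-shaped regions are absorbed and the two upper optional areas are covered, and one in which they are shifted downward so that the two lower optional areas are covered instead. To finish, I would check directly that both candidates are genuine local solutions — connected halves of equal size and shape, with every number-$2$ constraint satisfied — and that no third partition can occur, since any attempt to change the stacking phase partway down a column strands a single cell of one color with no same-color neighbor left to complete its domino, forcing either a block of odd size or a block with disconnected halves, contradicting the puzzle rules.

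The main obstacle I expect is precisely this last elimination step: ruling out the ``mixed'' configurations in which the top of the rectangle is in one phase and the bottom in the other. The clean way to handle it is a parity argument along each column — the boundaries between consecutive vertical dominoes must alternate in a fixed rhythm, so a phase change cannot occur without leaving an unpaired cell — and making that argument airtight, rather than the routine verification that the two surviving configurations are valid, is where the real work lies.
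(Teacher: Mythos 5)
Your proposal is correct and takes essentially the same route as the paper: the paper anchors the forced case split at the rectangle's second row (it must pair with row 1 or row 3) and lets the integer-$2$ constraint propagate the choice, while you anchor it at the topmost cell (pair downward inside the rectangle, or outward into the L-shaped region) and propagate along the column — the same forcing argument yielding the same two configurations. One small caution: your standing claim that \emph{both} halves of every block meeting the rectangle are vertical dominoes inside the two columns overstates what holds in the second configuration, where the gray halves of the corner blocks lie inside the L-shaped regions rather than in the rectangle's gray column; since your propagation step only uses that each corner cell has a unique same-color neighbour supplied by its L-region, this looseness does not affect the proof.
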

        \begin{proof}
            Let's order the rows of the equalizer's rectangle from top to bottom, i.e. the top row is \textit{row 1} and the bottom row is \textit{row 4}. The second row can be tiled together either with the first or with the third row.
            
            If the second row is in the same block as the third row (see \autoref{fig:tvar}), then the top left cell of the rectangle is forced to be tiled with the top left L shaped segment. A similar pattern applies on the top right, bottom left, and bottom right cells.
            
            Otherwise, if the second row is in the same block as the first row (see \autoref{fig:fvar}), then the third row must be in the same block as the forth row. Thus, forcing four cells adjacent to the rectangle to be tiled with their only uncovered neighbours. This leaves the two the furthest left and furthest right cells of the equalizer uncovered, which later must be tiled with the rest of the gadget cells.
        \end{proof}
        \begin{figure*}[t]
            \centering
            \begin{subfigure}[t]{0.47\textwidth}
                \centering
                \includegraphics[width=\textwidth]{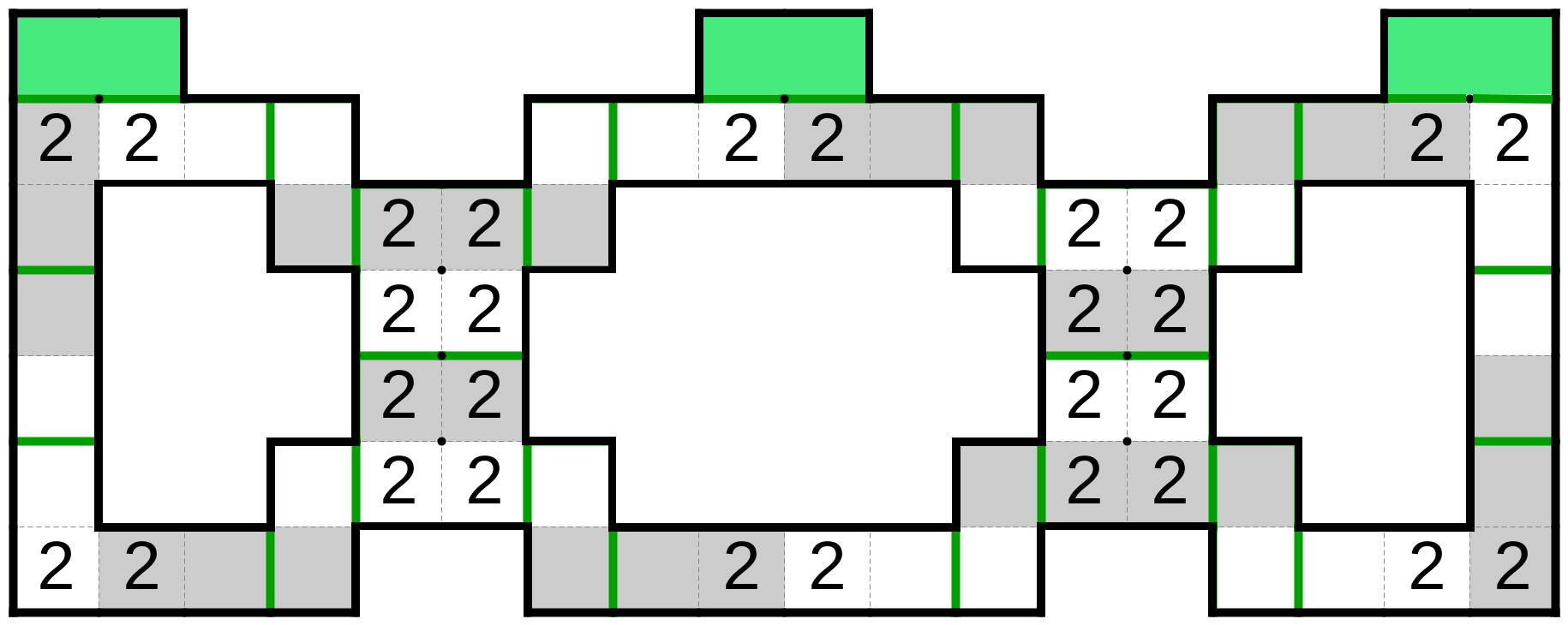}
                \caption[Network2]%
                {{\small Value \textit{false}.}}    
                \label{fig:fvar}
            \end{subfigure}
            \hfill
            \begin{subfigure}[t]{0.47\textwidth}  
                \centering 
                \includegraphics[width=\textwidth]{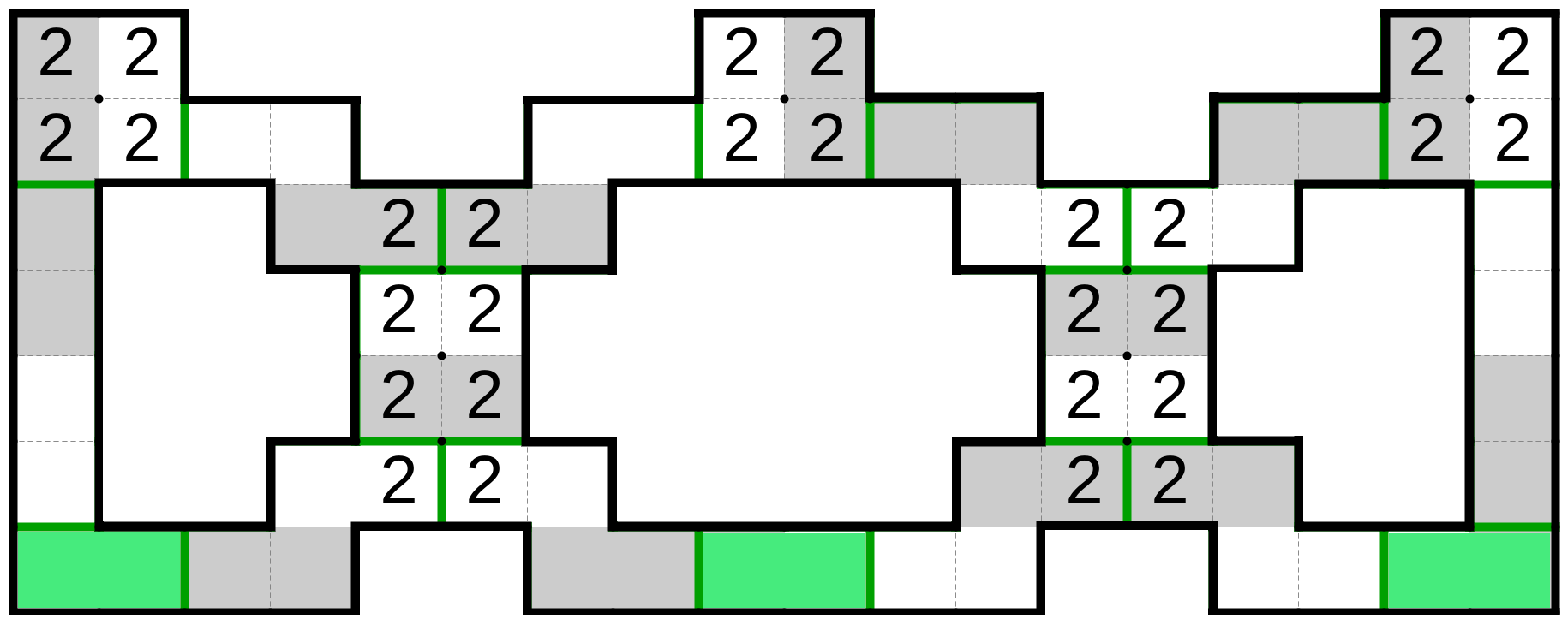}
                \caption[]%
                {{\small Value \textit{true}.}}    
                \label{fig:tvar}
            \end{subfigure}
            \caption[]
            {\small  The variable gadget with its profile table profile table.}
            \label{fig:varr}
        \end{figure*}
        \begin{claim} \label{claim:1}
            Each vertical segment has exactly two valid coverings. 
        \end{claim}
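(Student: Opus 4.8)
The plan is to reuse, almost verbatim, the reasoning that established the wire gadget's behaviour in \autoref{lem:wire} and the equalizer's rectangle in \autoref{lem:eqvi}, since a vertical segment is precisely the device that carries a single truth value up the outermost column of the variable gadget between a bottom optional area and a top optional area. Its cells are marked with the integer $\bm{2}$, so by the puzzle properties from \autoref{chap:1} each such cell lies in a block of exactly four cells --- two white and two gray --- that is connected and whose two colour classes are congruent up to rotation or mirroring. I would begin by fixing the geometry of the segment and labelling its cells, so that the forced block structure can be read off directly from these constraints.

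The heart of the argument, and the step I expect to be the main obstacle, is a coverage dichotomy: the bottom optional area of a vertical segment is covered by the gadget's assigned area if and only if the top optional area is uncovered. If the bottom optional area is covered, then to form a connected block with matching white and gray halves it must merge with the adjacent mandatory cells; tracing the forced groupings upward through the $\bm{2}$-cells, every block closes off before the top is reached, leaving the top optional area uncovered. The reverse implication follows by applying the same reasoning downward from the top. The delicate point is to confirm that the integer, connectivity, and congruence constraints leave no freedom whatsoever --- neither an alternative grouping of the $\bm{2}$-cells nor a partial covering of either optional area can survive, exactly as in the wire and equalizer analyses.

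It then remains to show uniqueness inside each of the two cases. Having fixed which optional area is covered, the remaining $\bm{2}$-marked cells admit only one partition into blocks of four: connectivity forces each block onto a fixed pair of columns, and the congruence requirement on the colour halves removes the last ambiguity in how those four cells split. Thus the covered case contributes exactly one valid covering and the uncovered case exactly one. Combining this with the dichotomy, every valid covering falls into one of two mutually exclusive cases and each case supplies a single covering, so each vertical segment has exactly two valid coverings --- the two that appear, as part of the full variable gadget, in \autoref{fig:fvar} and \autoref{fig:tvar}.
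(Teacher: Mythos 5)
You reach the right conclusion, but by a genuinely different and much heavier route than the paper. The paper's entire proof is an inspection: the two coverings are exhibited in \autoref{fig:fvar} and \autoref{fig:tvar}, and it is declared ``easily observable'' that no others exist; the claim is treated as a small counting fact about a figure-defined object, exactly parallel to \autoref{lem:eqvi} for the equalizer's rectangle, and it is consumed only later, in the lemma that the variable gadget's profile table is sufficient. You instead derive the count structurally: first a coverage dichotomy (the bottom optional area served by the segment is covered if, and only if, the corresponding top one is not), then uniqueness of the tiling within each of the two cases. That skeleton is logically sound, and it is not circular, since you prove the dichotomy from scratch by the wire-style forcing of \autoref{lem:wire} rather than by citing \autoref{lem:opp_areas}; but note that what you prove along the way is essentially \autoref{lem:opp_areas} restricted to the outermost wire, so you duplicate work the paper deliberately postpones and keeps separate from this counting claim. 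Note also that every forcing step you assert --- that the segment's cells carry the integer $\bm{2}$, that blocks close off before the top is reached, that connectivity pins each block onto a fixed pair of columns --- depends on the exact cell layout of the vertical segment, which is fixed only by \autoref{fig:var}; the ``delicate point'' you explicitly defer is precisely the figure inspection that constitutes the paper's whole proof. In short, your route buys a self-contained derivation in principle, at the price of resting on geometric assumptions that only the figure can confirm, whereas the paper buys brevity by leaning entirely on that figure.
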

        \begin{proof}
            Coverings are shown in \autoref{fig:fvar} and \autoref{fig:tvar}. It is easily observable that those are the only two coverings.
        \end{proof}

        \begin{theorem} \label{thm:var}
        The variable gadget has a valid solution if, and only if, all top optional areas have the same tiling, all the bottom optional areas have the same tiling, and top and bottom optional areas have opposite tiling.
        \end{theorem}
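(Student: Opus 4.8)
The plan is to prove both directions of the biconditional by treating the variable gadget as a one-dimensional chain of local units — the two vertical segments at the ends and the equalizers in between — glued along shared columns, each column being a single wire carrying one top and one bottom optional area. \autoref{lem:eqvi} tells us that each equalizer's rectangle admits exactly two coverings, and \autoref{claim:1} gives the analogous dichotomy for each vertical segment, so the first step is to label the two configurations of every unit and to record precisely which optional areas each one covers. The useful reformulation is this: in one configuration a unit covers the top optional area of each of its columns while leaving the bottoms uncovered, and in the other configuration it does the reverse, so within any single column the top and bottom optional areas are always oppositely tiled, and the two columns of one equalizer always share the same pattern (both tops or both bottoms), exactly as recorded in \autoref{fig:fvar} and \autoref{fig:tvar}.

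For the forward direction I would take an arbitrary valid solution and invoke \autoref{lem:eqvi} and \autoref{claim:1} to place every unit into one of its two configurations. The crux is a propagation argument along the shared columns: two consecutive equalizers, and likewise an equalizer together with an adjacent vertical segment, share exactly one column, and the covering of that column's optional areas is dictated by both units simultaneously. Since each unit forces both of its columns into the same top/bottom pattern, agreement on the shared column forces the two neighbouring units into the same configuration; iterating this across the whole chain drives every unit into a single global configuration. Reading off the optional areas in that global configuration yields exactly the claimed conclusion: all bottom areas share one tiling, all top areas share the opposite tiling.

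For the converse I would start from an assignment of the optional areas in which all tops agree, all bottoms agree, and the two are opposite, and simply put every unit into the matching global configuration. By \autoref{lem:eqvi} and \autoref{claim:1} each unit's covering is then a valid local solution; because neighbouring units agree on their shared column, these local solutions glue without conflict, cover every cell of the gadget, and respect the \textit{Double Choco} properties across the seams, producing a global solution of the form depicted in \autoref{fig:fvar} and \autoref{fig:tvar}.

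The step I expect to be the main obstacle is the propagation argument in the forward direction: I must rule out any "mixed" covering of a shared column that is simultaneously consistent with one configuration of the left unit and the opposite configuration of the right unit, and I must check that the L-shaped regions together with the two extremal wires handled by the vertical segments leave no slack for such an inconsistent local choice. Making the gluing in the converse fully rigorous — verifying that no block is forced to straddle a unit boundary and that the connectivity and equal-area conditions survive at every seam — is the other place where care is needed, although \autoref{lem:eqvi} and \autoref{claim:1} already discharge most of this local bookkeeping.
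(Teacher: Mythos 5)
Your proposal is correct and takes essentially the same route as the paper: the paper likewise reduces the theorem to the two-configuration dichotomy of \autoref{lem:eqvi} and \autoref{claim:1}, proving in \autoref{lem:opp_areas} that the top and bottom areas of one wire are oppositely tiled and in a companion lemma that adjacent wires' bottom areas must agree (a forced-uncoverable-cell argument), which is exactly your propagation along shared columns iterated across the chain. The only organizational difference is that you make the converse direction an explicit gluing of local configurations, whereas the paper discharges existence by exhibiting the two global solutions in \autoref{fig:fvar} and \autoref{fig:tvar}.
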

        \begin{proof}
            We will split the proof of this theorem into the next two lemmas.
        \end{proof}
        
        \begin{lemma} \label{lem:opp_areas}
            The bottom optional area is covered by the gadget solution if, and only if, its corresponding top optional area is not covered by that solution.
        \end{lemma}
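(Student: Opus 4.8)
The plan is to reduce the biconditional to a single-column statement and then dispatch the two kinds of columns separately. Every column of the variable gadget that carries a paired optional area belongs either to one of the two vertical segments or to one side of an equalizer, so it suffices to show, within each such column, that the gadget covers the bottom optional area exactly when it leaves the top optional area uncovered. Since the only two structural pieces have already been characterized, this decomposition turns the lemma into a finite case check.

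For the columns belonging to a vertical segment, I would invoke Claim \autoref{claim:1}, which gives exactly two valid coverings, displayed in \autoref{fig:fvar} and \autoref{fig:tvar}. Reading them off, one covering assigns the bottom optional area to the gadget while leaving the top uncovered, and the other does the reverse; because these are the only possibilities, the biconditional holds on every vertical-segment column.

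For the columns belonging to an equalizer, I would use Lemma \autoref{lem:eqvi}, which pins the rectangle down to its two coverings. In each covering I would trace how the four L-shaped regions are forced: the integer $\bm{2}$ on every rectangle cell means each block meeting the rectangle has exactly four cells, so once the rectangle's covering is fixed the L-shapes attach to the optional areas in a unique way — the same rigidity used for the wire gadget in Lemma \autoref{lem:wire}. I would then verify that in the covering of \autoref{fig:tvar} each column's top optional area is taken and its bottom left free, while in the covering of \autoref{fig:fvar} these roles are exchanged; as no third covering exists, top and bottom are always oppositely tiled.

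The main obstacle is the equalizer analysis: I must rule out any mixed local solution in which a single column has both its top and bottom optional areas covered, or neither covered. This is where the integer-$\bm{2}$ and connectivity constraints do the real work, forcing every block that touches the rectangle to be a four-cell region and thereby forbidding a column from absorbing both optional areas at once. Carrying this out through the L-shaped regions, rather than merely inspecting the two figures, is the delicate step of the argument.
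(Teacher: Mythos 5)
Your decomposition premise does not match the gadget's geometry, and that is where the proof breaks. In the variable gadget the paired optional areas are not carried by columns belonging ``to one of the two vertical segments or to one side of an equalizer'': they sit in the wire columns \emph{between} two equalizers, or between an equalizer and a vertical segment, so each bottom/top pair is adjacent to \emph{two} structures at once. The blocks that tile the mandatory cells separating a bottom optional area from its top partner reach into the L-shaped regions of both neighbouring structures --- in the \textit{false} case the four mandatory cells above the bottom optional area are forced into a single block together, and those cells are shared between the two neighbours --- so no column can be analysed in isolation: blocks cross column boundaries, which is exactly what makes the ``mixed'' local solutions you worry about conceivable in the first place. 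The correct unit of decomposition is a wire, not a column, and the argument has to run in the order the paper runs it: the state of the bottom optional area forces the tiling of \emph{both} adjacent equalizers (via \autoref{lem:eqvi}, and \autoref{claim:1} for the outermost wires), and those two forced tilings jointly determine whether the cells above the bottom optional area absorb the top optional area or seal it off. You correctly flag this forcing step as ``the delicate step of the argument,'' but you never carry it out, and confined to a single column it cannot be carried out.

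There is a second omission. The section's standing assumption is only that the optional area where the wire \emph{enters} the gadget is fully covered or fully uncovered; the behaviour at the exit area must be derived, so entry from below and entry from above are genuinely different cases. The gadget is not symmetric under a top-bottom flip: the top optional areas can be tiled only with the two mandatory cells adjacent to them, and the gadget only becomes symmetric about the horizontal midline after additionally inverting colors and discarding the top optional areas. The paper needs an explicit observation of this restricted symmetry to transfer the bottom-entry argument to top-entry wires; your per-column enumeration silently treats the two sides as interchangeable and never engages with this asymmetry.
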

        \begin{proof}
            Suppose the wire enters the gadget through one of the bottom optional areas. If the wire communicates a \textit{true} value, it will force its adjacent equalizers to be tiled as shown in \autoref{fig:tvar}. If that wire is in between one of the two equalizers, the two cells of the mandatory area above that optional area will be forced to tile with the top optional area. The same argument follows if the wire is connected to the equalizer and the vertical segment, since the vertical segment will be forced to tile in a size four block. 
            
            Otherwise, if that wire communicates \textit{false}, the bottom optional area must be in the same block with its only two adjacent cells. This will force its adjacent equalizers, or, if the wire is located on the far left or the far right, the equalizer and the vertical segment, to be tiled as shown in \autoref{fig:fvar}. Thus, forcing the four cells of the mandatory area, above the bottom optional area, to be tiled together, leaving the top optional area uncovered.
            
            Observe that with color inversion and without the top optional areas, our gadget is symmetric about the horizontal line passing through the middle of the equalizers. Also, observe that the top optional area can be tiled only with the two adjacent cells of the mandatory area. Hence, when the wire enters the gadget through one of the top optional areas, a similar argument as in the previous case will apply, with inverted cases. That is, when the wire enters on top optional area with the value \textit{true}, we will argue in a similar way as when the wire enters on the bottom with the \textit{false} value, and conversely.
        \end{proof}

        \begin{lemma}
            The bottom optional area of the gadget is covered by the gadget solution, if, and only if, the bottom optional area next to it is covered as well and likewise for the top optional areas.
        \end{lemma}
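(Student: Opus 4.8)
The plan is to localise the statement to a single equalizer, use the two-covering dichotomy of Lemma \autoref{lem:eqvi}, and then derive the top-area claim for free from Lemma \autoref{lem:opp_areas}. First I would fix two adjacent bottom optional areas and identify what joins them: two interior wire columns are separated by exactly one equalizer, and the two bottom areas in question are precisely the two bottom areas flanking that equalizer. (For a pair involving the furthest-left or furthest-right column the joining structure is an equalizer together with a vertical segment, a case I return to at the end.) It therefore suffices to show that every valid covering of the equalizer leaves its two flanking bottom optional areas in the same state.

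For that I would appeal to Lemma \autoref{lem:eqvi}, which gives exactly two coverings of the equalizer's rectangle, namely \autoref{fig:tvar} and \autoref{fig:fvar}, and trace the forced tiling corner by corner as in the proof of that lemma. In the covering of \autoref{fig:tvar} each corner cell of the rectangle is pulled into its adjacent L-shaped region, which forces the two mandatory cells sitting above each bottom optional area into a common block and thereby puts both flanking bottom areas into the same state. In the covering of \autoref{fig:fvar} the complementary forcing puts both flanking bottom areas into the opposite -- but again common -- state. Since Lemma \autoref{lem:eqvi} rules out any third covering, the two adjacent bottom optional areas must always agree, establishing the bottom half of the statement.

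Finally I would push the conclusion to the top optional areas using Lemma \autoref{lem:opp_areas}: each top optional area is covered exactly when its corresponding bottom area is uncovered. Hence two adjacent top areas agree whenever the two bottom areas beneath them agree, which is the top half of the statement. The boundary pairs are then closed by the same scheme, with Claim \autoref{claim:1} supplying the two forced coverings of the vertical segment in place of one equalizer, so that the shared bottom area is forced consistently on both sides.

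I expect the middle paragraph to be the crux: carrying the forcing cleanly through all four L-shaped regions in each of the two equalizer coverings, and in particular checking that the far-left and far-right configurations, where a vertical segment stands in for an equalizer, still force the shared bottom optional area into a single well-defined state. Once the dichotomy of Lemma \autoref{lem:eqvi} and the opposition of Lemma \autoref{lem:opp_areas} are in hand, the remainder is bookkeeping.
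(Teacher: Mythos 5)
Correct, and essentially the paper's own argument: both proofs work equalizer-by-equalizer, resting on the fact that the shared equalizer admits exactly two tilings, each of which forces its two flanking bottom optional areas into a common state, and both then transfer the claim to the top areas via \autoref{lem:opp_areas}. The only differences are packaging — the paper runs the middle step as a contradiction (assume the two adjacent bottom areas disagree, let the one wire force the equalizer into its corresponding tiling as in \autoref{fig:fvar} or \autoref{fig:tvar}, and exhibit a cell of the other wire that then cannot be tiled consistently), which is just the contrapositive of your direct enumeration via \autoref{lem:eqvi} — and one small correction: adjacent wires are always joined by an equalizer alone, since the vertical segments sit on the outer sides of the leftmost and rightmost wires, so your separate boundary case collapses into the interior one.
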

        \begin{proof}
            As a direct consequence of the previous lemma, it is enough to prove this statement only for the bottom optional area.
            
            Suppose a wire enters the gadget through one of the bottom optional areas with the value \textit{false}, and the wire next to it enters the gadget through the bottom with the value \textit{true}, or equivalently on the top with the value \textit{false} as consequence of Lemma \autoref{lem:opp_areas}. The former wire will force the equalizer to be tiled as in \autoref{fig:fvar}, leaving the cell next to the bottom optional area of the latter wire uncovered, and unable to be tiled.
            
            For the other direction, suppose the wire enters the gadget through one of the bottom optional areas with the value \textit{true}, and the wire next to it enters the gadget through the top optional area with the value \textit{true}, or equivalently on the bottom with the value \textit{false}. The former wire will force the equalizer to be tiled as in \autoref{fig:tvar}. The bottom optional area of the latter wire must be in the same block as the two cells adjacent to it, but that can be done since one of them is already covered by the tiling of the equalizer.
        \end{proof}

        By proving \autoref{thm:var} we have confirmed that the variable gadget will have the desired properties, i.e. that all the upward wires will have the same value, all the downward wires will have the same value, and upward and downward wires will have the opposite value. Observe that the variable gadget can be extended from either side to fit an arbitrary number of wires, by adding more equalizers (see \autoref{fig:variable}).
        
        \begin{lemma}
            The profile table of the variable gadget is \textit{sufficient}.
        \end{lemma}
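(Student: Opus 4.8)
The plan is to show that, under the standing constraint of this section that every optional area where a wire enters is either fully covered or fully uncovered, the two configurations depicted in \autoref{fig:fvar} and \autoref{fig:tvar} are the \emph{only} solutions of the variable gadget. Since a sufficient profile table is by definition one that lists all solutions except those partially covering an optional area, exhibiting exactly these two solutions establishes the claim.

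First I would invoke \autoref{thm:var}, which already pins down the admissible states of the optional areas: in any valid solution all bottom optional areas share a common tiling, all top optional areas share the opposite tiling. This leaves exactly two global cases, which I would name the \textit{true} case (all bottom areas covered, all top areas uncovered) and the \textit{false} case (all bottom areas uncovered, all top areas covered). The entire task then reduces to showing that each of these two optional-area configurations extends to at most one tiling of the mandatory area.

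Next I would argue that fixing either configuration forces a unique tiling of the whole gadget. In the \textit{true} case, the covered bottom optional areas select, through \autoref{lem:eqvi}, the single admissible covering of every equalizer rectangle (the one shown in \autoref{fig:tvar}), which in turn forces each of the four L-shaped regions of every equalizer to tile with its unique remaining neighbour; by \autoref{claim:1} the two vertical segments are likewise driven into their corresponding covering. No cell retains a choice, so the solution coincides with \autoref{fig:tvar}. The \textit{false} case is handled identically, exploiting the horizontal symmetry-with-color-inversion already noted in the proof of \autoref{lem:opp_areas}, and yields exactly \autoref{fig:fvar}. Concluding that precisely two solutions exist, both appearing in the profile table, gives sufficiency.

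The main obstacle is the book-keeping that no tiling freedom survives once the optional areas are fixed: I must verify that the forced equalizer rectangles and vertical segments propagate so as to leave every L-shaped cell, and every mandatory cell lying between adjacent components, with a single possible partner. This is not a genuinely new combinatorial difficulty, however, but rather the same local case analysis already performed in \autoref{lem:eqvi} and \autoref{claim:1}; the proof is therefore essentially an assembly of those results along the sequence of equalizers together with the two terminating vertical segments.
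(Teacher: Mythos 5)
Your proposal is correct and follows essentially the same route as the paper, whose proof simply cites \autoref{thm:var}, Lemma \autoref{lem:eqvi}, and Claim \autoref{claim:1} --- exactly the three ingredients you assemble. Your version merely spells out the forcing argument (optional-area configurations fixed by \autoref{thm:var}, then unique extension via the equalizer and vertical-segment coverings) that the paper leaves implicit.
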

        \begin{proof}
            The statement directly follows from \autoref{thm:var}, Lemma \autoref{lem:eqvi}, and Claim \autoref{claim:1}.
        \end{proof}
        
      The variable gadget is the first gadget into which the wire enters after starting in the wire boundary gadget. Therefore, we will assume that each wire will start from the boundary wire gadget which is directly connected to the variable gadget. The wire will come out on the other side, if that wire is used, or there will be another wire boundary gadget to terminate unused wire. That we create the \textit{complete variable gadget} (see \autoref{fig:variable}).
        
        \begin{theorem} \label{thm:comp-var}
            Every upward wire of the complete variable gadget will communicate the same value, every downward wire will communicate the same value, and upward and downward wires will communicate the opposite values.
        \end{theorem}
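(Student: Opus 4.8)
The plan is to obtain \autoref{thm:comp-var} as a corollary of the analysis of the bare variable gadget, by composing it with the transport property of wires and the permissiveness of the wire boundary gadgets. Recall that the complete variable gadget (see \autoref{fig:variable}) is the variable gadget with a starting wire boundary gadget attached to the side on which each wire originates and, for every column whose wire is not routed on to a clause, a terminating wire boundary gadget on the opposite side; used wires are additionally extended and, where \autoref{fig:bent} requires it, passed through a rotation gadget. First I would fix once and for all the encoding, dictated by the wire and boundary descriptions, that turns the covered/uncovered tiling of an optional area into a truth value, and record which side of the core each kind of wire uses: an upward wire leaves the core through a top optional area on its way to a positive clause, a downward wire through a bottom optional area on its way to a negative clause.

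The central step is to push everything back to \autoref{thm:var}. That theorem already forces, in every valid solution of the core, all top optional areas to carry one common tiling, all bottom optional areas to carry one common tiling, and the two to be opposite; Lemma~\autoref{lem:opp_areas} supplies the column-by-column version of the same fact. Translating through the encoding above, the tilings seen by the upward wires (all at top areas) coincide, the tilings seen by the downward wires (all at bottom areas) coincide, and the top and bottom tilings being opposite forces the upward and downward tilings to be opposite. Thus, at the interface with the core, the upward wires already agree, the downward wires already agree, and the two families are opposite.

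It remains to carry this interface value to the far end of each wire and to check that the capping gadgets neither remove nor add solutions. Faithful transport is immediate from Lemma~\autoref{lem:wire}, applied to each wire gadget in the chain, together with Lemma~\autoref{lem:rot} at any rotation gadget: the value delivered to the clause equals the value read at the core, and extending or rotating a wire leaves that value unchanged. For solvability in each of the two global states I would use that the wire boundary gadget does not enforce a value and that its profile table is \textit{complete}: whichever way the core is tiled, each starting boundary gadget and each terminating boundary gadget has a matching local solution, so both assignments of the variable are realizable and the caps impose no extra constraint. Assembling these three facts yields the theorem.

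I expect the genuine difficulty to lie entirely in the encoding and parity bookkeeping rather than in any new combinatorial content. One must verify that the routing side (top for upward, bottom for downward wires), the odd row count of the wire gadget, and the covered/uncovered-to-true/false convention really combine so that opposite core tilings produce opposite communicated values, and that the shared-optional-area rule---exactly one of two adjacent gadgets covers a shared area---is applied with a consistent orientation at every junction among core, wire, rotation, and boundary gadgets. Once these conventions are pinned down, \autoref{thm:comp-var} follows directly from \autoref{thm:var}, Lemma~\autoref{lem:wire}, and the completeness of the wire boundary gadget's profile table.
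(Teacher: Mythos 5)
Your proposal is correct and takes essentially the same route as the paper: the paper's own proof is a one-line appeal to \autoref{thm:var} together with the construction of the complete variable gadget, and your elaboration via Lemma \autoref{lem:wire}, Lemma \autoref{lem:rot}, and the permissiveness/completeness of the wire boundary gadget's profile table is exactly what that appeal to the construction packages. No gap.
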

        \begin{proof}
            The proof follows from the construction of the complete variable gadget and \autoref{thm:var}.
        \end{proof}
        
        In \autoref{fig:variable}, we show the construction of the complete variable gadget, with the two upward wires (the leftmost and the rightmost wires), two downward wires (the second and the third wire), and one terminated wire. Upward wires communicate the \textit{false} value, while the downward wires communicate the \textit{true} value, which is as expected since the complete variable gadget has a \textit{false} value.
        
        \begin{figure*}[t]
            \centering
            \includegraphics[width=\textwidth]{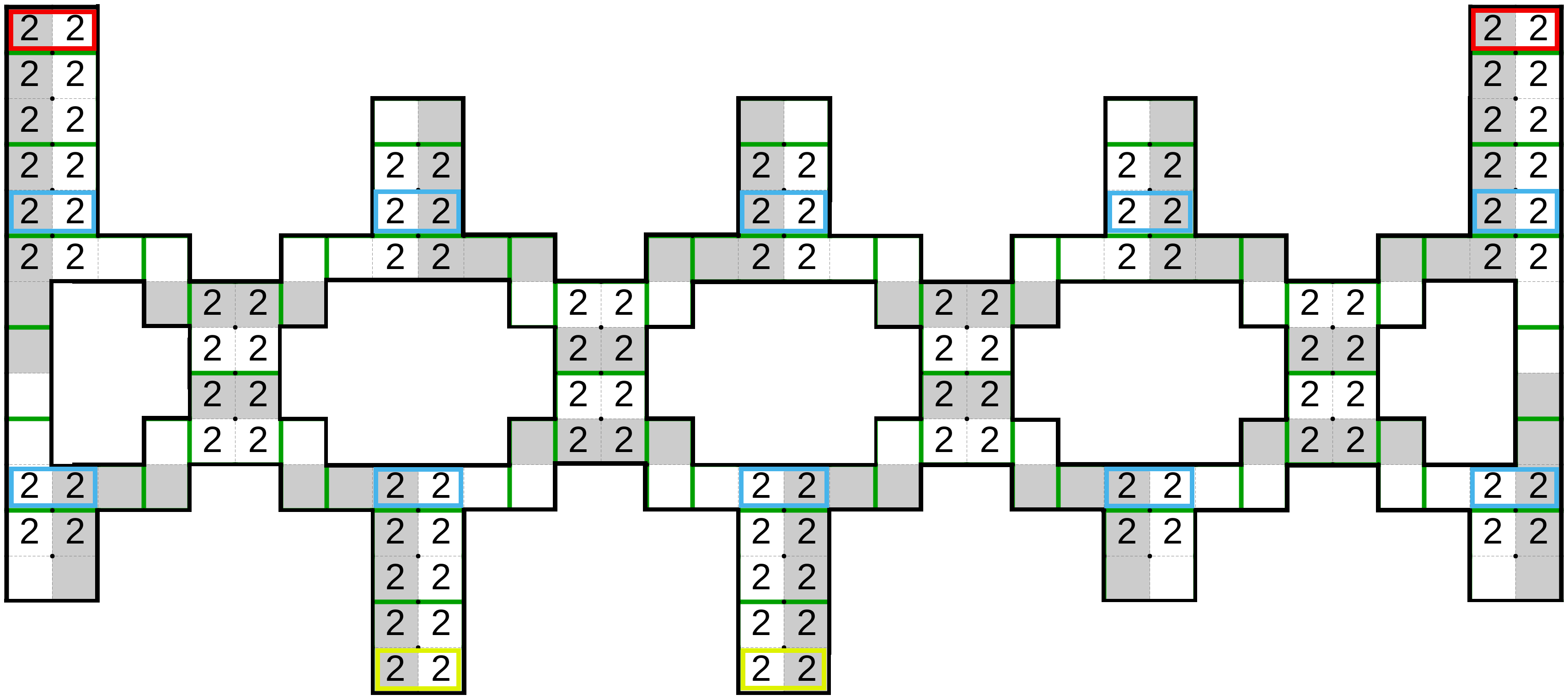}
            \caption[]
            {\small  The complete variable gadget with the \textit{false} value. Intersections of optional areas are in blue rectangles, optional areas covered by the solution of the gadget are in yellow rectangles, while optional areas not covered by the solution are in red rectangles. }
            \label{fig:variable}
        \end{figure*}

        \section{Clause gadget} \label{sec:clause}
        \begin{figure*}[t]
            \centering
            \begin{subfigure}[b]{0.23\textwidth}
                \centering
                \includegraphics[width=\textwidth]{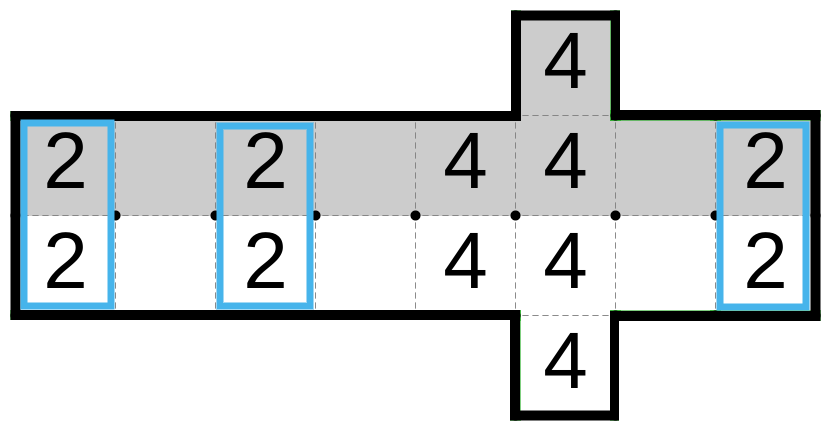}
                \caption[Network2]%
                {{\small Unsolved gadget. }}    
                \label{fig:3or-1}
            \end{subfigure}
            \hfill
            \begin{subfigure}[b]{0.23\textwidth}  
                \centering 
                \includegraphics[width=\textwidth]{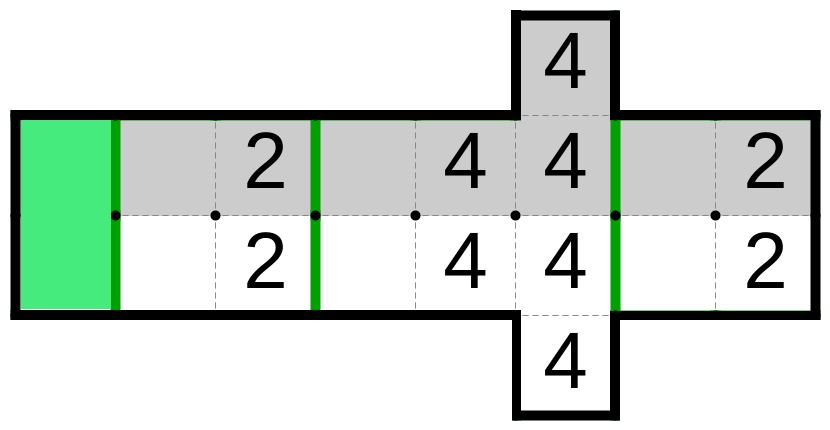}
                \caption[]%
                {{\small \textit{(true, false, false)}}}    
                \label{fig:mean and std of net24}
            \end{subfigure}
            \hfill
            \begin{subfigure}[b]{0.23\textwidth}   
                \centering 
                \includegraphics[width=\textwidth]{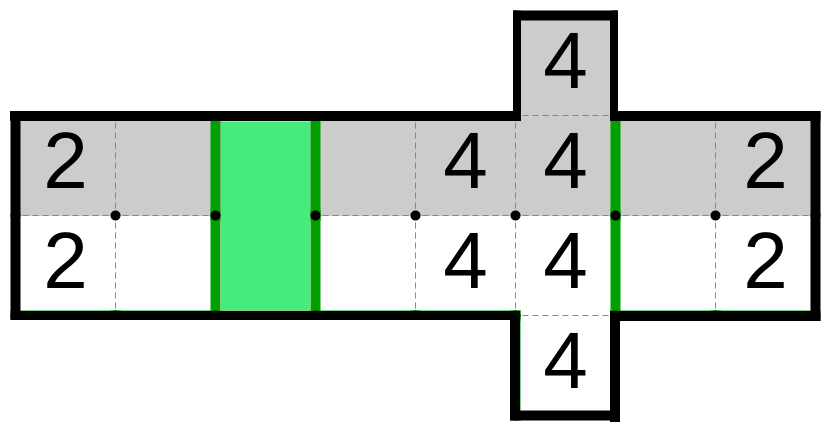}
                \caption[]%
                {{\small \textit{(false, true, false)}}}    
                \label{fig:mean and std of net44}
            \end{subfigure}
            \hfill
            \begin{subfigure}[b]{0.23\textwidth}
                \centering
                \includegraphics[width=\textwidth]{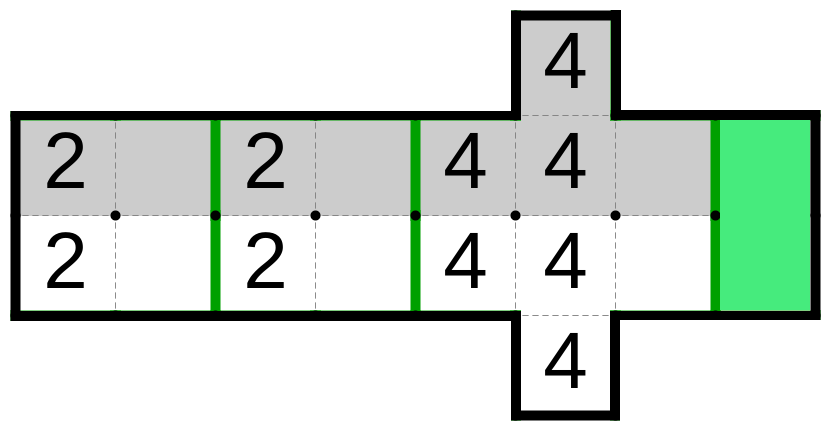}
                \caption[Network2]%
                {{\small \textit{(false, false, true)}}}    
                \label{fig:mean and std of net14}
            \end{subfigure}
            \vskip\baselineskip
            \begin{subfigure}[b]{0.23\textwidth}  
                \centering 
                \includegraphics[width=\textwidth]{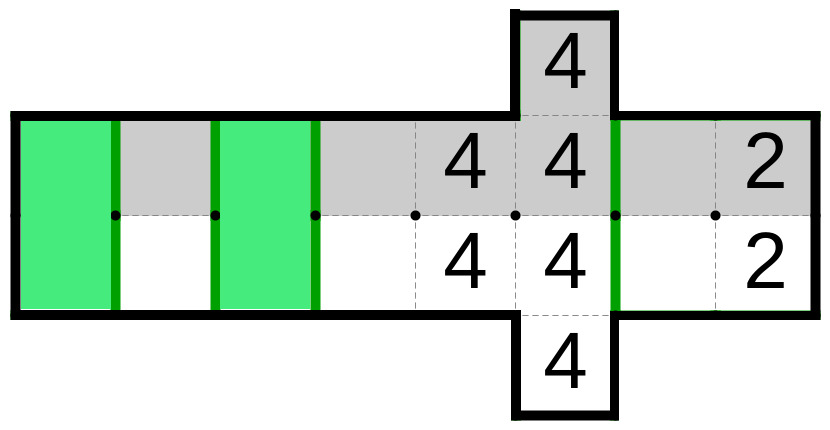}
                \caption[]%
                {{\small \textit{(true, true, false)}}}    
                \label{fig:mean and std of net24}
            \end{subfigure}
            \hfill
            \begin{subfigure}[b]{0.23\textwidth}   
                \centering 
                \includegraphics[width=\textwidth]{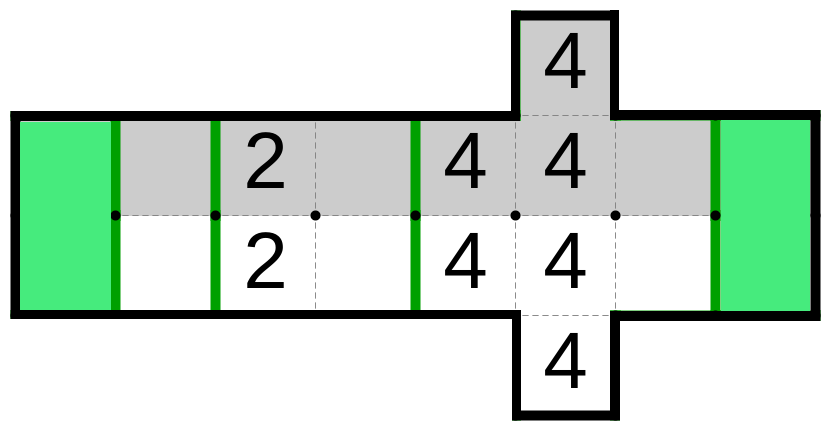}
                \caption[]%
                {{\small \textit{(true, false, true)}}}    
                \label{fig:mean and std of net44}
            \end{subfigure}
            \hfill
            \begin{subfigure}[b]{0.23\textwidth}
                \centering
                \includegraphics[width=\textwidth]{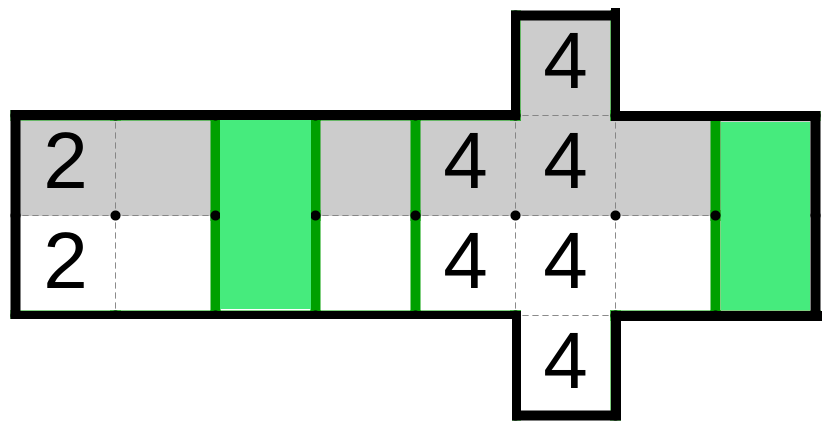}
                \caption[Network2]%
                {{\small \textit{(false, true, true)}}}    
                \label{fig:mean and std of net14}
            \end{subfigure}
            \hfill
            \begin{subfigure}[b]{0.23\textwidth}  
                \centering 
                \includegraphics[width=\textwidth]{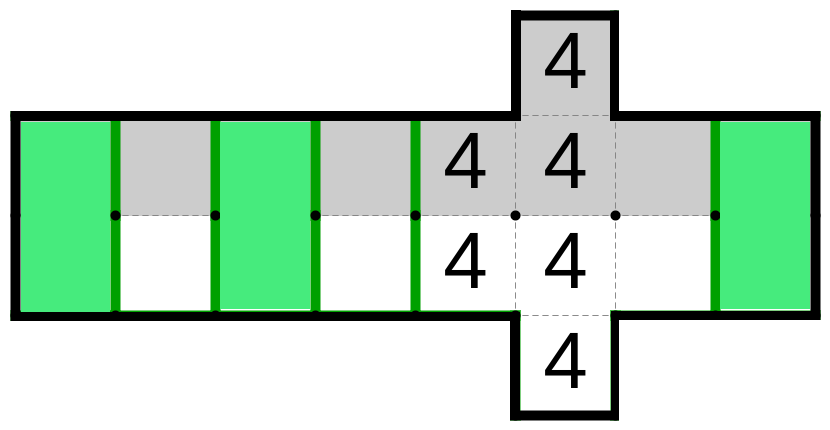}
                \caption[]%
                {{\small \textit{(true, true, true)}}}    
                \label{fig:mean and std of net24}
            \end{subfigure}
            \caption[ ]
            {\small The OR subgadget with the profile table. In the first figure, optional areas are in the blue rectangles. The other figures show possible solutions when at least one optional area is covered by an adjacent gadget. Solutions are shown as \textit{(left, middle, right)} based on whether the corresponding optional area is fully covered or fully uncovered by an adjacent gadget.} 
            \label{fig:OR}
        \end{figure*}
        
        In this section we will present the last required piece for our reduction, the \textit{clause gadget}. It should receive the value from the three wires, and enforce that at least one of the incoming wires communicates the \textit{true} value. However, as it turns out, the gadget is not as simple as it looks at first glance, hence we will construct it step by step, through the introduction of subgadgets, which will eventually combine into the clause gadget.
        
        After the wire exits the variable gadget, it either directly enters the clause gadget or goes through the rotation gadget. Hence, its optional areas are either fully covered or fully uncovered and we will assume so, and not argue the case when the optional areas are partially covered.
        
        In \autoref{fig:OR}, the \textit{OR subgadget} is shown. It consists of three optional areas separated by mandatory areas.
        
        \begin{lemma} \label{lem:OR}
            The OR subgadget is solvable if, and only if, at least one optional area is covered by its adjacent gadget. 
        \end{lemma}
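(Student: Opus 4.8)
The plan is to prove the two directions separately, treating the easy implication by exhibiting solutions and concentrating the real work on the impossibility result.

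For the backward direction (if at least one optional area is covered by an adjacent gadget, then the OR subgadget is solvable) I would simply appeal to the profile table. Once we fix which of the three optional areas are held by neighbouring gadgets, the OR subgadget must tile its mandatory area together with the remaining uncovered optional areas. There are exactly seven configurations in which at least one area is covered by a neighbour, namely the seven cases labelled $(\textit{true},\textit{false},\textit{false})$ through $(\textit{true},\textit{true},\textit{true})$ in \autoref{fig:OR}, and for each a concrete legal tiling is displayed. Verifying that each displayed tiling obeys the \textit{Double Choco} rules (equal and congruent white and gray halves, connectivity of both halves, and the block size forced by any integer label) together with the gadget's area constraints is a routine check, so this direction reduces to reading off the figure.

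The forward direction is the substantive one, and I would prove its contrapositive: if none of the three optional areas is covered by an adjacent gadget, then the OR subgadget has no solution. In this configuration the subgadget's assigned area is forced to be all of its region, the mandatory part together with all three optional areas, so the subgadget must tile this entire region into valid \textit{Double Choco} blocks using only its own cells. The idea is to follow the forced moves starting from the mandatory cells separating the three arms: each integer-labelled cell fixes the size of its block, and the connectivity of the gray and white halves, together with the congruence-up-to-rotation/mirroring requirement, tightly constrains how the central mandatory cells may be grouped with the three arms. I expect to show that the central region can be completed only by borrowing cells from a strict subset of the arms, so that if all three arms must be tiled internally one cell is necessarily left with no legal partner, either orphaned, or forced into a block whose two colour-halves differ in size or are disconnected, or into a block violating an integer label, each of which contradicts the puzzle rules.

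The main obstacle is isolating the right invariant for the all-\emph{false} case rather than grinding through every grouping by hand. The cleanest route is likely a counting or colour-parity argument: I would assign a balanced weight to the cells of the full region so that every legal block contributes the same net amount, and then check that the full region (mandatory plus all three optional areas) carries the wrong total, whereas dropping any single optional area restores balance. Such an invariant would settle the impossibility in one stroke and simultaneously explain why exactly the seven ``at least one \textit{true}'' cases survive; failing that, a short forced-move case analysis on the central mandatory cells, as sketched above, will close the remaining possibilities.
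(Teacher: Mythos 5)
Your backward direction is fine and coincides with the paper's: both reduce it to checking the seven tilings displayed in \autoref{fig:OR} (b)--(h). The gap is in the forward direction, which you never actually close. Your preferred route, a balanced-weight or colour-parity invariant, is almost certainly a dead end here: every legal block, and in fact each optional area of this gadget, contains equally many white and gray cells, so any weight that is constant (e.g.\ zero) on legal blocks, such as white minus gray, takes the same value on the full region as on the regions with one or more optional areas removed. The obstruction is not a counting imbalance at all; it is created by the integer labels and the local geometry, so no such invariant can ``explain why exactly the seven cases survive.''

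Your fallback, the forced-move analysis, is indeed the paper's actual argument, but your proposal only gestures at it (``I expect to show\dots''), and the specific chain matters. The paper's proof is short and concrete: if the gadget itself must cover all three optional areas, the left optional area can only be completed into a square block with its two adjacent mandatory cells; this forces the middle optional area into a square block with the two cells on its right; the right optional area can then only be tiled with the cells on its left; what remains is precisely the six cells labeled \textbf{4}, and these cannot be tiled, since any block containing a cell labeled \textbf{4} must consist of exactly $4+4=8$ cells. Without identifying this forced chain and the final size-8 contradiction, your forward direction remains a plan rather than a proof; as written, the proposal would not be accepted as a complete argument for Lemma~\ref{lem:OR}.
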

        \begin{proof}
            For the latter direction, the proof is shown in \autoref{fig:OR} (b) - (h), where we show one gadget solution for each combination. Observe that there exists one more solution for the cases shown in (g) and (h), in which the segment of the cells with an integer \textbf{4} are in the same block as the two cells on their left, while the cells on their right are in a size two block.
            
            On the other hand, suppose that there exists a solution to the OR subgadget, which covers all three optional areas. The left optional area must be in the same square block as its two adjacent cells, forcing the middle optional area to be in the same square block as the two cells on its right. The right optional area can be tiled only with the cells on its left. Hence, leaving the six cells with an integer \textbf{4} to be uncovered. Since they are the only uncovered cells, they cannot be tiled. 
        \end{proof}
        \newpage
        Even though the OR subgadget has the desired property of the clause gadget, it cannot be the clause gadget since it has a problem. The problem which arises is that the middle wire cannot be connected to the OR subgadget without overlapping of their mandatory areas, which is not allowed. So we come up with an idea to resolve it:
        
        \begin{figure*}[htb]
            \centering
            \includegraphics[width=0.65\textwidth]{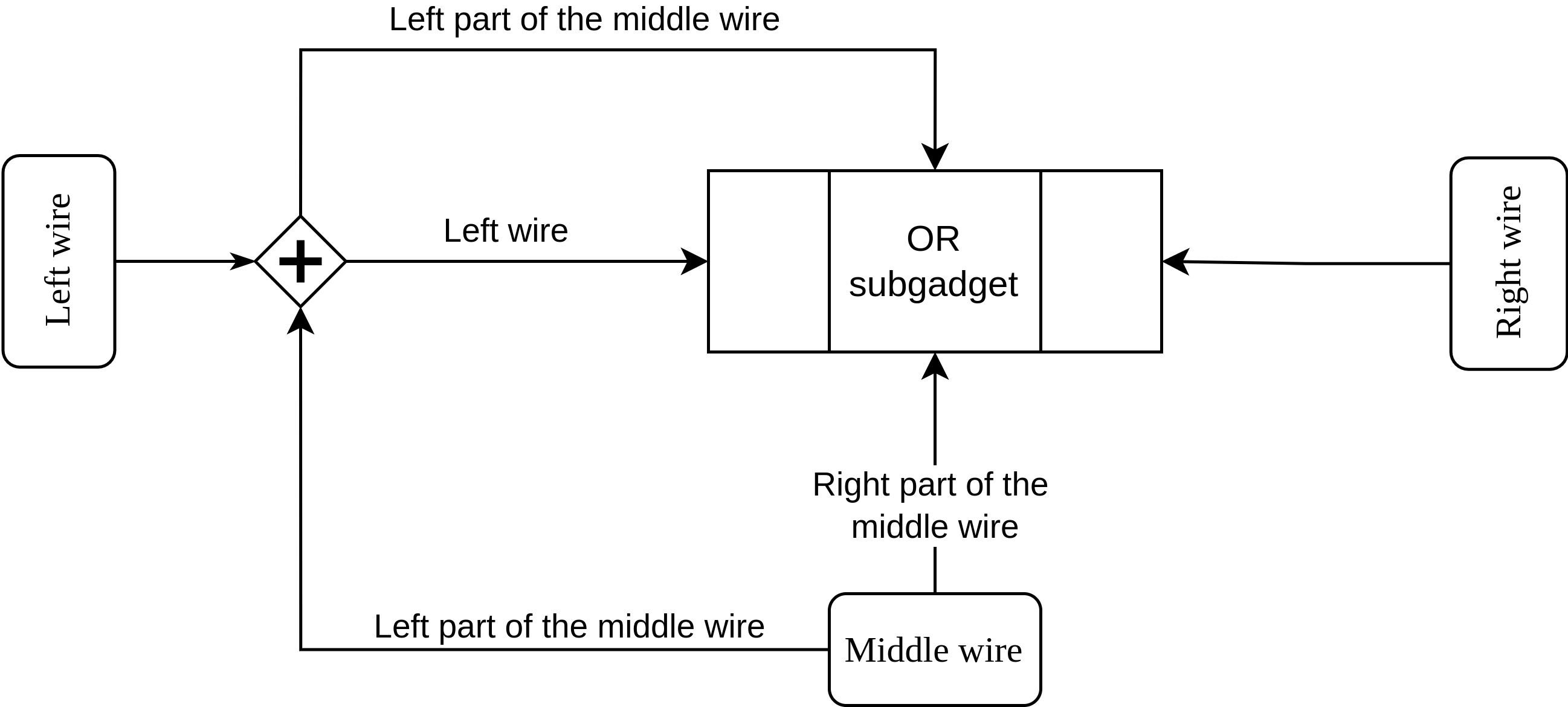}
            \caption[]
            {\small Idea for the solution of the OR subgadget problem. }
            \label{fig:clause-idea}
        \end{figure*}
        
        In \autoref{fig:clause-idea}, we presented the idea for the solution to the above mentioned problem by introducing another element represented by \rotatebox[origin=c]{45}{ $\squareop{\rotatebox[origin=c]{45}{+}}$}, which we will call the \textit{crossover subgadget}. We first split the middle wire by columns into its left and the right part. We then connected the right part of the middle wire with the lower cell of the middle optional area of the OR subgadget. The left part of the middle wire is navigated to the bottom of the crossover subgadget. Meanwhile, the left wire enters the crossover subgadget on its left side. The crossover subgadget should convey the value of the left wire on its right side, from where the wire will be connected to the left optional area of the OR subgadget. The left part of the middle wire should exit on the top of the crossover subgadget, from where it will go toward the OR subgadget and will be connected to the upper cell of the middle optional area.

        \begin{figure*}[htb]
            \centering
            \begin{subfigure}[t]{0.19\textwidth}
                \centering
                \includegraphics[scale = 0.12]{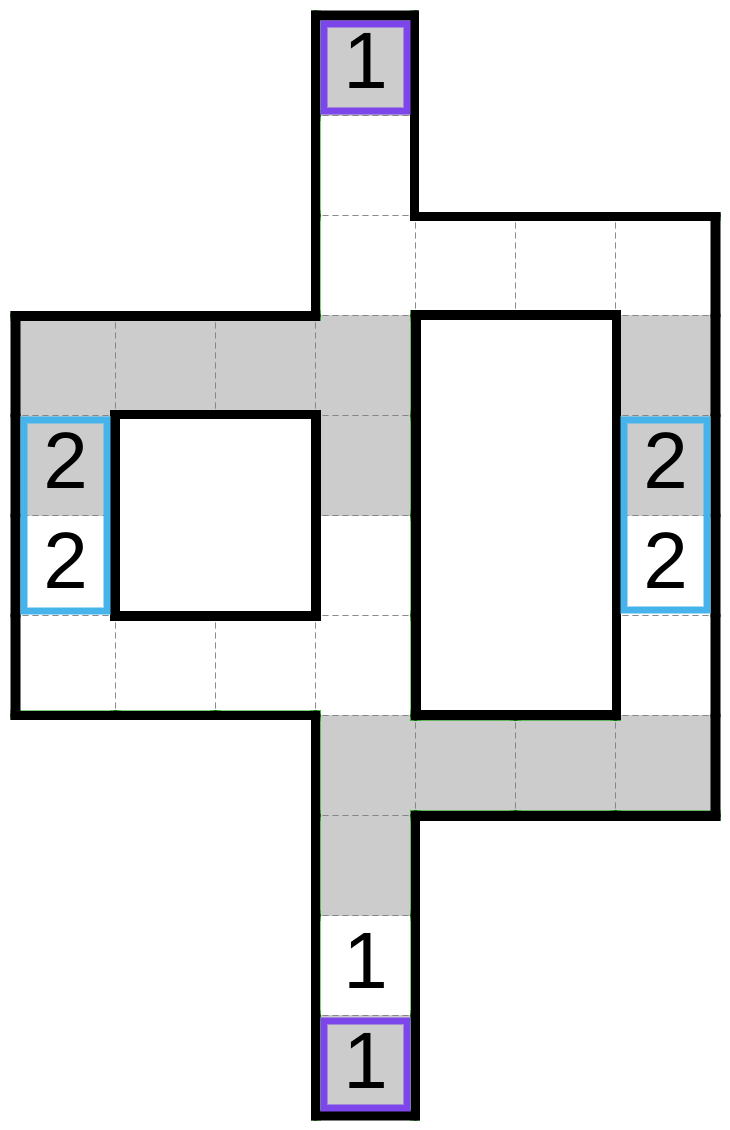}
                \caption[Network2]%
                {{\small An unsolved crossover subgadget. }}    
                \label{fig:uinv}
            \end{subfigure}
            \hfill
            \begin{subfigure}[t]{0.19\textwidth}   
                \centering 
                \includegraphics[scale = 0.12]{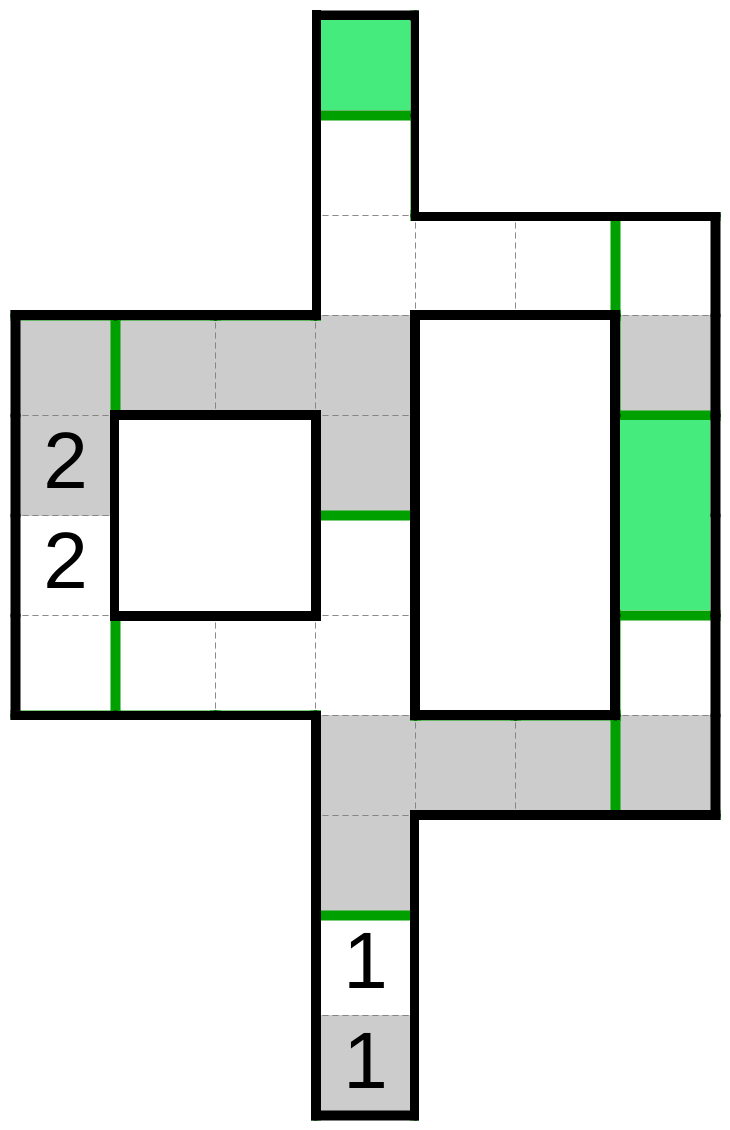}
                \caption[]%
                {{\small \textit{(false, false)}}}    
                \label{fig:tinv}
            \end{subfigure}
            \hfill
            \begin{subfigure}[t]{0.19\textwidth}  
                \centering 
                \includegraphics[scale = 0.12]{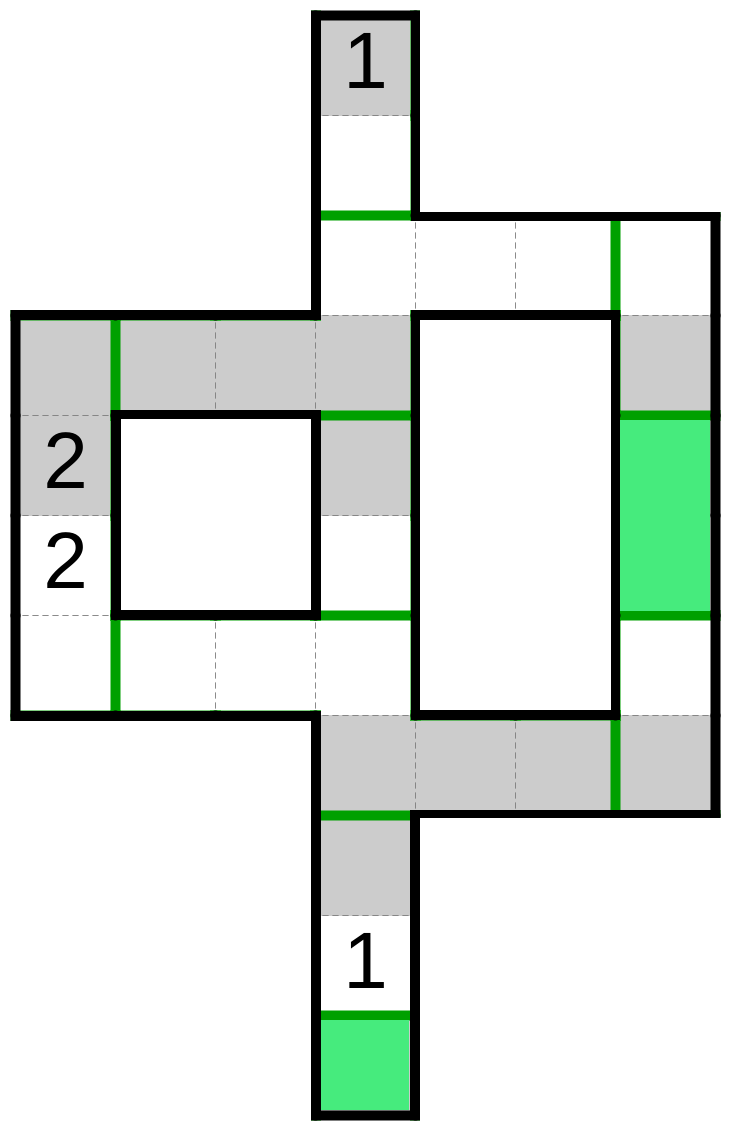}
                \caption[]%
                {{\small \textit{(false, true)}}}    
                \label{fig:finv-1}
            \end{subfigure}
            \hfill
            \begin{subfigure}[t]{0.19\textwidth}  
                \centering 
                \includegraphics[scale = 0.12]{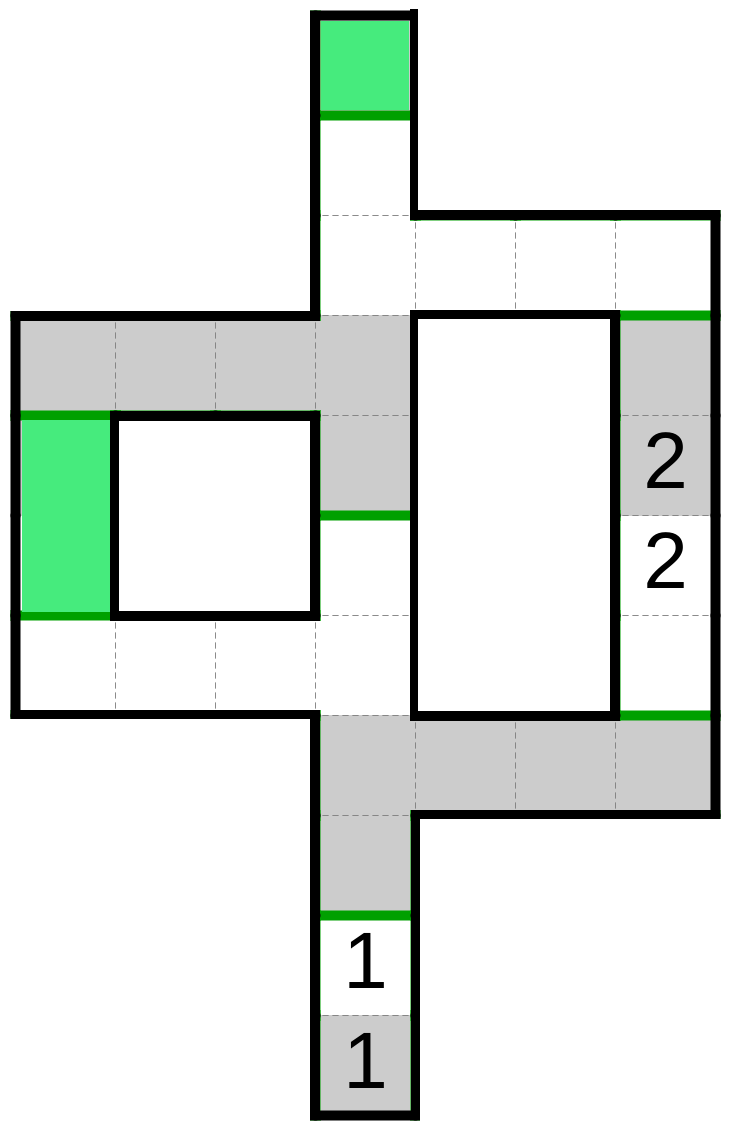}
                \caption[]%
                {{\small \textit{(true, false)}}}    
                \label{fig:finv-2}
            \end{subfigure}
            \hfill
            \begin{subfigure}[t]{0.19\textwidth}  
                \centering 
                \includegraphics[scale = 0.12]{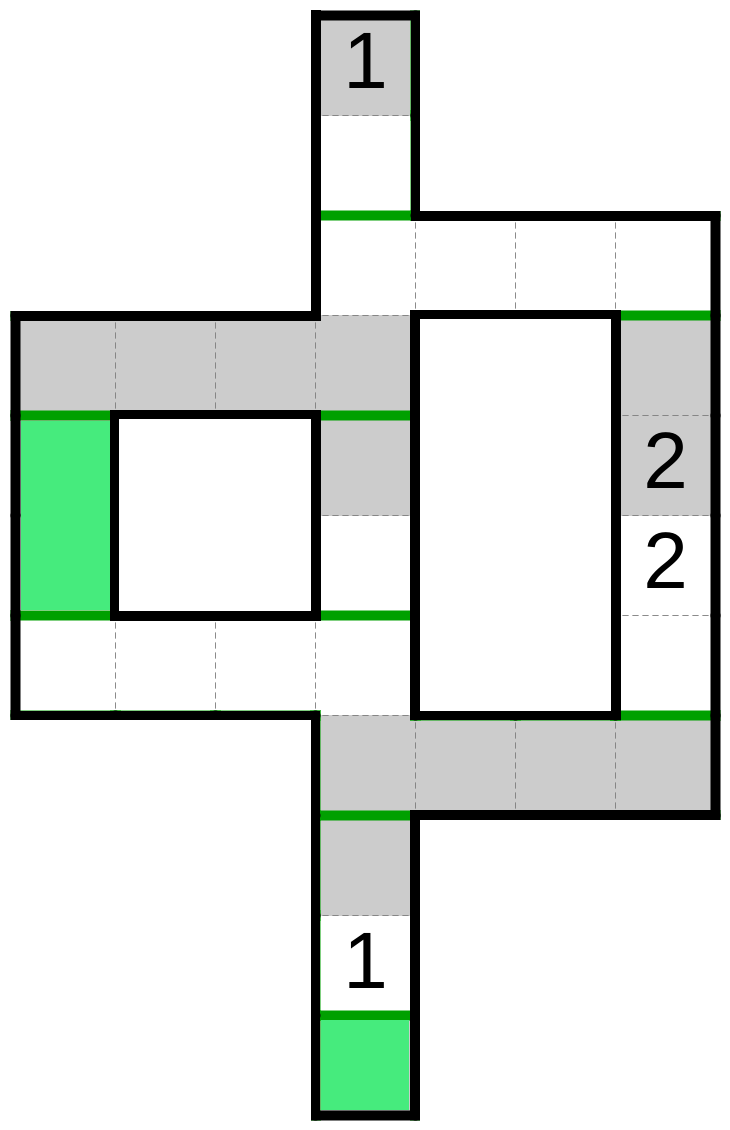}
                \caption[]%
                {{\small \textit{(true, true)}}}    
                \label{fig:finv-2}
            \end{subfigure}
            \caption[]
            {\small The crossover subgadget with its profile table. The paired optional areas are in the rectangle of the same color. Cells not covered by the subgadget solution are filled with green. In the subfigures (b) - (e), the solutions are shown as \textit{(value of left wire, value of the middle (its left part) wire)}.}
            \label{fig:cross}
        \end{figure*}
        
        The crossover subgadget is shown in \autoref{fig:cross}. The left wire will enter the subgadget on its left optional area, and it will exit it on the opposite side. The left part of the middle wire will be connected to the gadget on its bottom optional area, while exiting it on its top optional area.
        
        \begin{lemma} \label{lem:cross}
            The optional area of the crossover subgadget is fully covered by the gadget solution if, and only if, its pair is fully uncovered.
        \end{lemma}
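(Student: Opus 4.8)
The plan is to reuse the same forced-tiling template that established the transfer property for the wire gadget (Lemma \autoref{lem:wire}) and the rotation gadget (Lemma \autoref{lem:rot}), now applied separately to each of the two pairs of optional areas of the crossover subgadget: the pair belonging to the horizontal left wire, which enters on the left optional area and exits on the right, and the pair belonging to the left part of the vertical middle wire, which enters on the bottom optional area and exits on the top. For each pair it suffices to establish two one-directional forcing statements — covering the input area forces the paired output area to be uncovered, and (for the reverse direction of the biconditional) leaving one area uncovered forces the other to be covered — and then combine them, using that by the Double Choco rules each optional area is either fully covered or fully uncovered.

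First I would treat the horizontal pair. If the left optional area is covered by the subgadget solution, connectivity of its gray and white regions forces it to merge with the adjacent mandatory cells into a single block; tracing this forced merge through the body of the gadget, exactly as in the proof of Lemma \autoref{lem:rot} where covering one area pins down its neighbours and then the remaining cells, leaves no room to attach the right optional area to a block of the prescribed size, so it must remain uncovered. For the converse direction I would invoke the integer hints in the crossing cells together with solvability: if the right area were uncovered, the hinted block could not reach its required size unless the left area were covered. I would then repeat this identical pair of arguments verbatim for the vertical pair (bottom versus top), again using the hints to fix block sizes and force the tiling. Each pair then yields the claimed biconditional.

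The delicate step, and the reason this gadget is genuinely harder than a plain wire, is the crossing region in the middle where the horizontal and vertical wires traverse overlapping cells: I must check that the block structure forced by the horizontal wire and the block structure forced by the vertical wire are mutually consistent for all four input combinations \textit{(false, false)}, \textit{(false, true)}, \textit{(true, false)}, and \textit{(true, true)} depicted in \autoref{fig:cross}, so that neither chain of forced tilings ever demands that a crossing cell belong simultaneously to two different blocks, nor to a block violating the equal-size and same-shape-up-to-rotation-or-mirroring constraints. Verifying this finite case analysis — that each of the four combinations admits exactly one consistent completion respecting all Double Choco properties — is where the main work lies; once it is done, the transfer biconditional for each pair, and hence the lemma, is immediate.
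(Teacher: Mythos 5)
Your proposal is correct and takes essentially the same route as the paper: the paper's own proof is exactly a finite case analysis over the combinations of external coverings, exhibiting the forced solution for each case in \autoref{fig:cross} and asserting that checking no other tilings exist is straightforward. Your per-pair forcing decomposition (mirroring Lemma \autoref{lem:wire} and Lemma \autoref{lem:rot}) plus the consistency check at the crossing is just a more explicitly organized version of that same case analysis.
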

        \begin{proof}
            In \autoref{fig:cross} (b) - (d), we show the solutions for each combination of the of the optional area coverings. It is straightforward to check that any other tiling of the subgadget is not prohibited.
        \end{proof}
        
        \begin{lemma}
            The profile table of the crossover subgadget is \textit{sufficient}.
        \end{lemma}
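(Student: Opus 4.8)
The plan is to show that the four tilings depicted in \autoref{fig:cross}(b)--(e) are precisely all the solutions of the crossover subgadget in which every optional area is either fully covered or fully uncovered; sufficiency then follows directly from the definition of a sufficient profile table, since by construction the optional areas will never be partially covered in any puzzle solution.

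First I would invoke Lemma~\ref{lem:cross}, which establishes for each of the two paired optional areas (the left--right pair carrying the left wire and the bottom--top pair carrying the left part of the middle wire) that an area is covered by the gadget solution exactly when its partner is uncovered. Under the standing assumption that each optional area is fully covered or fully uncovered by its adjacent gadget, this collapses the boundary behaviour to the four configurations parameterised by the pair (value of the left wire, value of the middle wire), which are exactly the four cases enumerated in the subfigures.

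Next, for each of these four boundary configurations I would argue that the interior tiling is forced, so that each configuration admits a unique solution. Following the pattern used for the wire and rotation gadgets, I would propagate constraints inward from the boundary: a covered optional area must be merged into the same block as its adjacent mandatory cells, and the integer labels carried by the interior cells pin down the sizes of the blocks they belong to, while the \emph{Double Choco} requirements (equal white and gray counts, connectivity of each colour, and matching shape up to rotation and mirroring) determine the shapes. Tracing these forced choices through the crossing region should leave no freedom, yielding exactly the tiling shown. Concluding that each of the four admissible boundary configurations has a single completion, I would deduce that the four listed tilings exhaust the solutions with fully covered or fully uncovered optional areas, which is precisely sufficiency.

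The main obstacle I anticipate is this last forcing step: the crossover is the most intricate subgadget, since it must route two wires past one another, and the cautionary example of the OR subgadget (which admitted an extra solution in cases (g) and (h)) shows that hidden alternative tilings can appear. The delicate point is the region where the two wire routes meet and the blocks belonging to the two wires compete for the same cells; I would check each of the four cases there especially carefully to confirm that the number labels genuinely force a unique completion and that no second tiling of the interior is compatible with the fixed boundary.
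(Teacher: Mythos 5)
Your proposal is correct and matches the paper's approach: the paper proves sufficiency ``by the same argument as in Lemma~\ref{lem:cross}'', i.e., an exhaustive check that the exhibited tilings are the only solutions with fully covered or fully uncovered optional areas, which is exactly the case analysis you outline (pairing constraint reduces the boundary to four configurations, then the interior completion is forced in each). Your version is simply a more explicit elaboration of what the paper dismisses as ``straightforward to check'', including the sensible caution about hidden alternative tilings in the crossing region.
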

        \begin{proof}
            By the same argument as in Lemma \autoref{lem:cross}.
        \end{proof}
        
        Thus, we have proven that the crossover subgadget will convey the value of the incomming wires on the opposite sides from the ones they enter, as we wanted.
        
        We can now present the \textit{clause gadget}, constructed as the combination of the previous subgadgets, in \autoref{fig:clause}. In the figure we can observe that the crossover subgadget is on the left side, and its right optional area is glued together with the left optional area of the OR subgadget. However, there is still one part of the gadget we did not discuss: the part that connects the middle optional area of the clause gadget with the middle optional area of the OR subgadget. We will call it the \textit{middle wire split subgadget} (see \autoref{fig:wsplitter}). We will first discuss its properties then return to the clause gadget.
        
        \begin{figure*}[htb]
            \centering
            \begin{subfigure}[t]{0.55\textwidth}
                \centering
                \includegraphics[scale = 0.15]{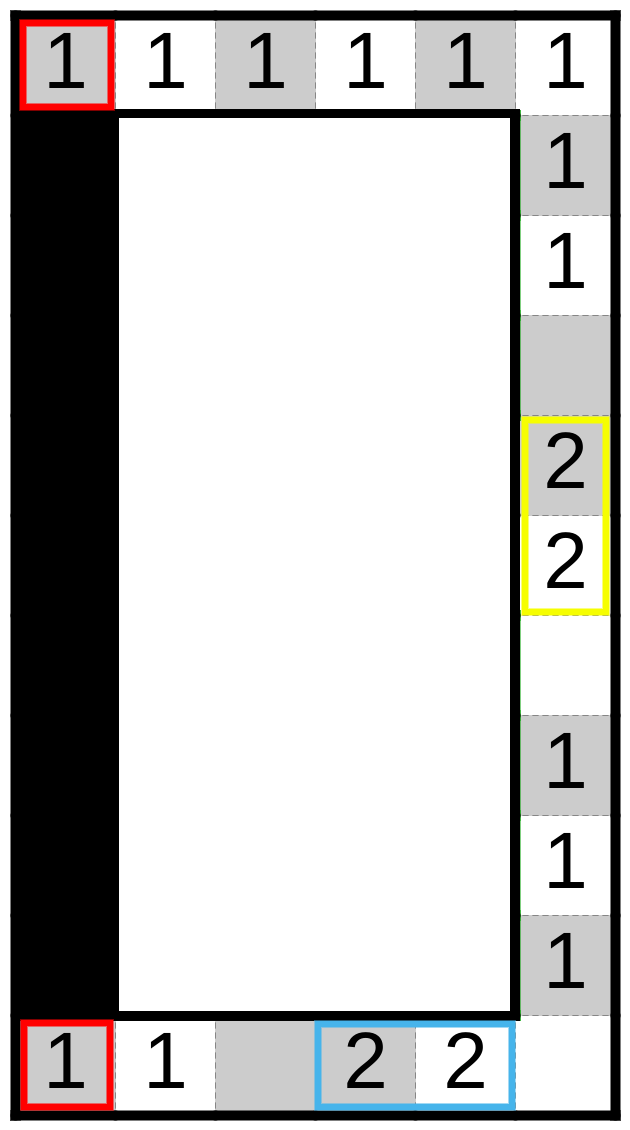}
                \caption[Network2]%
                {{\small An unsolved middle wire split subgadget. The black is the mandatory area of the crossover subgadget. The areas in blue, red, and yellow rectangles are the optional areas with the middle wire, the crossover, and the OR subgadgets respectively.}}    
                \label{fig:uwsplitter}
            \end{subfigure}
            \hfill
            \begin{subfigure}[t]{0.21\textwidth}  
                \centering 
                \includegraphics[scale = 0.15]{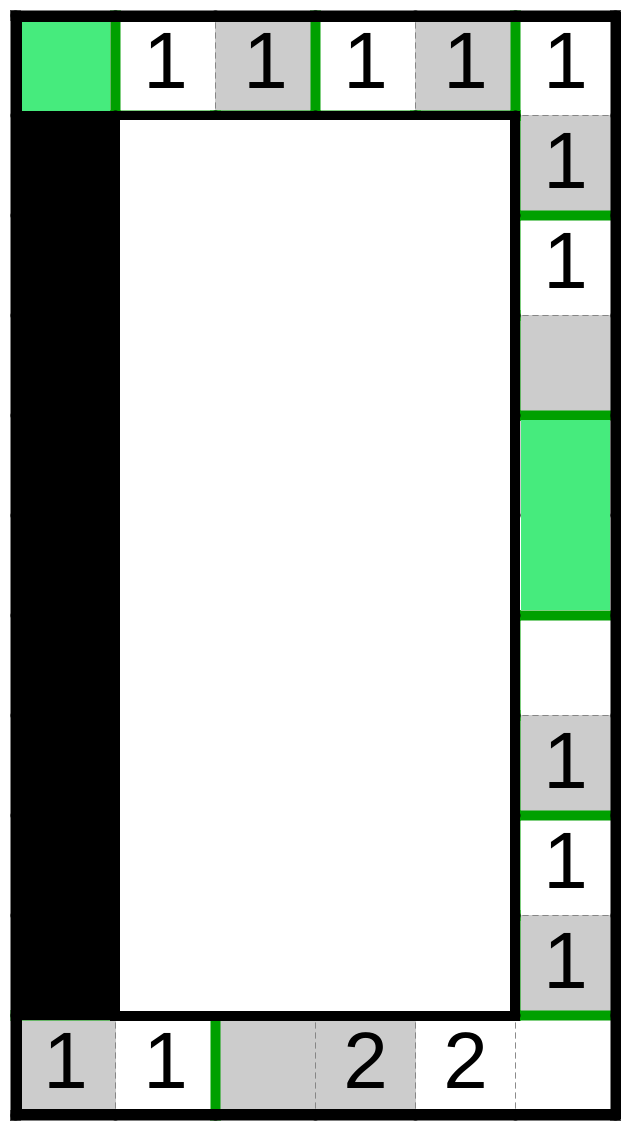}
                \caption[]%
                {{\small The split middle wire communicating \textit{false}.}}    
                \label{fig:fwsplitter}
            \end{subfigure}
            \hfill
            \begin{subfigure}[t]{0.21\textwidth}   
                \centering 
                \includegraphics[scale = 0.15]{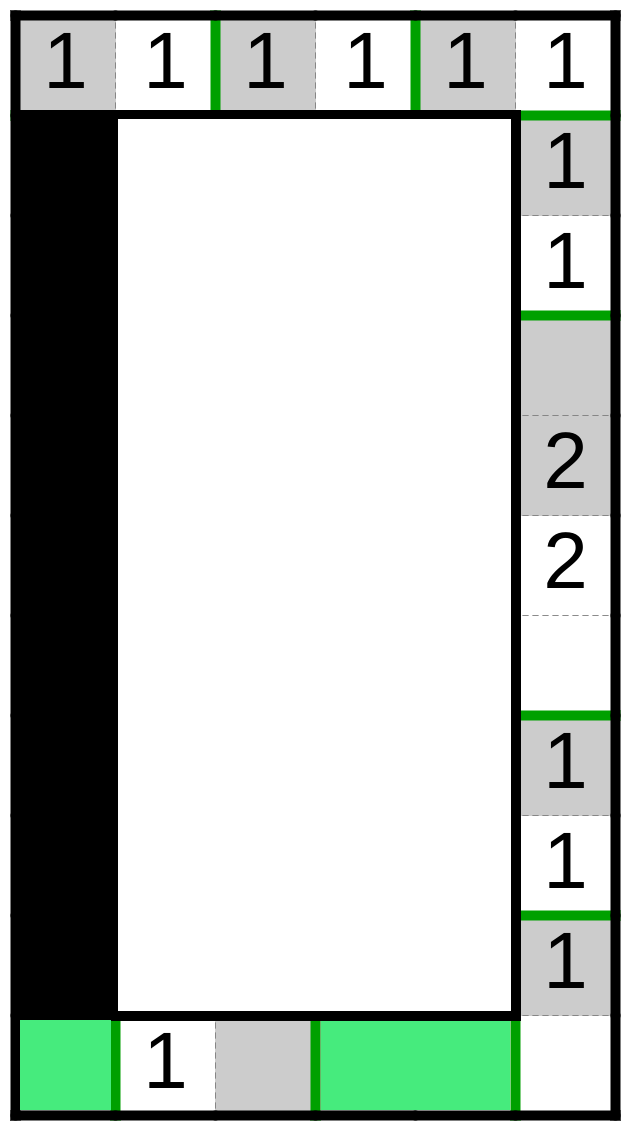}
                \caption[]%
                {{\small The split middle wire communicating \textit{true}.}}    
                \label{fig:twsplitter}
            \end{subfigure}
            \caption[]
            {\small  The middle wire split subgadget with solutions for the different values of the middle wire. The green ares are covered by adjacent gadgets.}
            \label{fig:wsplitter}
        \end{figure*}
        
        In \autoref{fig:wsplitter}, we can  see that the subgadget shares two optional areas with the crossover subgadget, while the right optional area represents the connection with the OR subgadget. The middle wire split subgadget will transfer the value of the middle wire from the blue optional area to the middle optional area of the OR subgadget (yellow optional area).
        \newpage
        \begin{lemma} \label{lem:wsplit}
            The middle optional area of the OR subgadget is covered by the solution of the middle wire split subgadget if, and only if, the middle wire communicates \textit{true}.
        \end{lemma}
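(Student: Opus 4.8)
The plan is to establish the biconditional by the same forcing technique used throughout this chapter (compare Lemma~\ref{lem:wire} and Lemma~\ref{lem:opp_areas}): I will show that the covering state of the blue optional area, which carries the value communicated by the middle wire, propagates deterministically through the mandatory cells of the subgadget and uniquely fixes the covering state of the yellow optional area shared with the OR subgadget. Concretely, I would first fix the convention, read off from \autoref{fig:fwsplitter} and \autoref{fig:twsplitter}, that the middle wire communicates \textit{true} exactly when the subgadget's solution covers the blue optional area, so that the claim reduces to showing ``blue covered $\iff$ yellow covered''.

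For the forward direction I would assume the blue optional area is covered by the subgadget's solution. By the \textit{Double Choco} connectivity and equal-area properties, together with the integer constraints on the cells of the subgadget, the block containing the blue area is forced; this in turn forces its neighbouring mandatory cells, and I would trace this chain of forced tilings cell by cell along the subgadget until the only way to complete a valid block is to cover the yellow optional area, reproducing the solution in \autoref{fig:twsplitter}. For the converse I would argue the contrapositive: assuming the blue optional area is uncovered, the mandatory cells adjacent to it must close off into their own blocks, and the same propagation, now run in the complementary configuration, leaves the cells adjacent to the yellow area already consumed, so the yellow area cannot be covered, matching \autoref{fig:fwsplitter}. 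As in the earlier wire lemmas, each step of the chain is forced purely by block size (via the integers) and by the requirement that each block have connected gray and white parts of equal size.

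The step I expect to be the main obstacle is handling the extra optional area that this subgadget shares with the crossover subgadget (the red area), which is not itself part of the value being relayed from blue to yellow. I must verify that, whatever covering state the adjacent crossover subgadget imposes on this red area (consistent with Lemma~\ref{lem:cross}), the forced chain from blue to yellow still goes through uniquely, so that the biconditional is robust against the crossover interaction. I expect this to reduce to checking that the two admissible states of the red area are each compatible with, and do not short-circuit, the propagation in both the covered and uncovered cases; hence the crux is careful bookkeeping of how the blue, red, and yellow areas together with the integer-constrained mandatory cells partition into blocks, rather than any single isolated observation.
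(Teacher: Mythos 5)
There is a genuine gap, and it lies exactly where you flagged your ``main obstacle'': the red optional areas shared with the crossover subgadget are not a side complication to be shown harmless --- they are the relay mechanism itself. Your plan is to trace a forced chain of tilings cell by cell through the mandatory cells of the middle wire split subgadget, from the blue area to the yellow area, and then to verify that this chain is ``robust against'' whichever state the crossover imposes on the red areas. No such internal chain exists: the blue/lower-red side and the upper-red/yellow side of the subgadget are not linked by any forcing that stays inside the subgadget's own cells. The paper's argument goes \emph{through} the crossover: the state of the blue area forces the state of the lower red area; Lemma~\ref{lem:cross} (the crossover's pairing of its bottom and top optional areas) then transfers this to the upper red area; and the state of the upper red area is what forces whether the yellow area is covered. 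In particular, your robustness claim is false as stated --- if both states of the red areas were compatible with a given state of the blue area, the biconditional would simply fail, since the yellow area's state is dictated by the upper red area, not by the blue area directly. So Lemma~\ref{lem:cross} is not a consistency condition to check against; it is the indispensable middle link of the implication chain.

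A secondary point: your proposed convention, that the middle wire communicates \textit{true} exactly when the subgadget's solution covers the blue area, is inverted relative to the paper. Following the wire-boundary semantics, \textit{true} means the adjacent wire covers the shared area, so the split subgadget leaves the blue area \emph{uncovered} when the wire is \textit{true}, and covers it (tiling it with the cells on its left and right) when the wire is \textit{false}. With the correct convention the chain reads: wire \textit{false} $\Rightarrow$ split covers blue $\Rightarrow$ split covers lower red $\Rightarrow$ (Lemma~\ref{lem:cross}) crossover covers upper red $\Rightarrow$ split leaves yellow uncovered; and wire \textit{true} $\Rightarrow$ split leaves lower red uncovered $\Rightarrow$ (Lemma~\ref{lem:cross}) split must cover upper red $\Rightarrow$ split covers yellow.
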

        \begin{proof}
            If the middle wire has the value \textit{false}, then the blue optional area must be in the same block as the cells on its left and right. Hence, the lower red optional area will be in the same block as the cell on its right. Therefore, by Lemma \autoref{lem:cross}, the upper optional area is not covered by this solution. On the way inducing the solution shown in \autoref{fig:fwsplitter}.
        
            If the middle wire communicates the \textit{true} value, then the cell on the left of the blue optional area in the above figure must be in the same block as its only uncovered neighbour. Thus, leaving the lower red optional area uncovered by the gadget solution. Hence, by Lemma \autoref{lem:cross}, the upper red optional area must be covered by the solution of the middle wire split subgadget, forcing the solution shown in \autoref{fig:twsplitter}.
        \end{proof}
        
        With this in mind, we can prove that our clause gadget works as it should. 
        
        \begin{theorem} \label{thm:clause}
            The clause gadget is solvable if, and only if, at least one incoming wire communicates the \textit{true} value. 
        \end{theorem}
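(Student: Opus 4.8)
The plan is to reduce the solvability of the clause gadget to the solvability of its OR subgadget, leaning entirely on the interface behaviour of the connecting subgadgets that has already been established. Recall from Lemma \autoref{lem:OR} that the OR subgadget admits a local solution if and only if at least one of its three optional areas is covered by an adjacent gadget. Hence it suffices to show that each of the three optional areas of the OR subgadget is covered by its neighbouring subgadget exactly when the corresponding incoming wire communicates \textit{true}. Once this correspondence is in place, matching ``at least one optional area covered'' with ``at least one wire true'' makes the theorem immediate.

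First I would handle the three wires separately. The right wire reaches the right optional area of the OR subgadget through a plain wire, so by the wire propagation already proven (Lemma \autoref{lem:wire} together with the wire profile table) the right optional area is covered if and only if the right wire is \textit{true}. For the middle wire, Lemma \autoref{lem:wsplit} states precisely that the middle optional area of the OR subgadget is covered by the middle wire split subgadget if and only if the middle wire communicates \textit{true}, so this case is already packaged. The remaining and most delicate case is the left wire, whose value is routed through the crossover subgadget (\autoref{fig:clause}) before reaching the left optional area of the OR subgadget.

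For the left wire I would trace the value through the crossover subgadget using Lemma \autoref{lem:cross}, which guarantees that each optional area of the crossover is covered precisely when its paired area is uncovered. Following the path from the left wire's entry on the left side to its output on the right side, and carefully counting the covered/uncovered flips along the way, I would confirm that the net effect is to convey the left wire's truth value faithfully, so that the left optional area of the OR subgadget is covered if and only if the left wire is \textit{true}. A key structural point is that both the crossover and the middle wire split subgadgets admit solutions for every combination of their boundary optional areas (their profile tables being \textit{sufficient}); consequently these connecting subgadgets never obstruct a global solution, and any obstruction to solving the clause gadget can only originate in the OR subgadget itself.

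Finally I would assemble the three equivalences: fixing the truth values of the three incoming wires fixes which of the three optional areas of the OR subgadget are covered, and every such choice is realizable by the always-solvable connecting subgadgets. By Lemma \autoref{lem:OR} the OR subgadget, and therefore the entire clause gadget, is solvable if and only if at least one of these optional areas is covered, i.e. if and only if at least one incoming wire communicates \textit{true}. I expect the main obstacle to be the bookkeeping in the crossover step: verifying that the chain of covered/uncovered inversions across the crossover preserves, rather than flips, the intended correspondence between \textit{true} and ``covered'', since an off-by-one parity error there would silently invert the gadget's logic.
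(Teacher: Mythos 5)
Your proposal is correct and follows essentially the same route as the paper's proof: it reduces solvability of the clause gadget to Lemma \autoref{lem:OR} for the OR subgadget, and establishes the ``covered iff \textit{true}'' correspondence for the left and middle inputs via Lemma \autoref{lem:cross} and Lemma \autoref{lem:wsplit}, exactly as the paper does. Your extra care about the right wire and about the crossover parity is sound but amounts to spelling out steps the paper leaves implicit.
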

        \begin{proof}
            By Lemma \autoref{lem:cross}, the left optional area of the OR subgadget is covered by the crossover subgadget if, and only if, the left wire has the \textit{true} value. Lemma \autoref{lem:wsplit} implies that the middle optional area of the OR subgadget is covered by the solution of the middle wire split subgadget if, and only if, the middle wire communicates \textit{true}. That combined with Lemma \autoref{lem:OR} proves the statement.
        \end{proof}
        
        \newpage
         \begin{figure}[htp]
            {\hfill}
            \centering
            \begin{minipage}[b]{.45\textwidth}
                 \includegraphics[width=\textwidth]{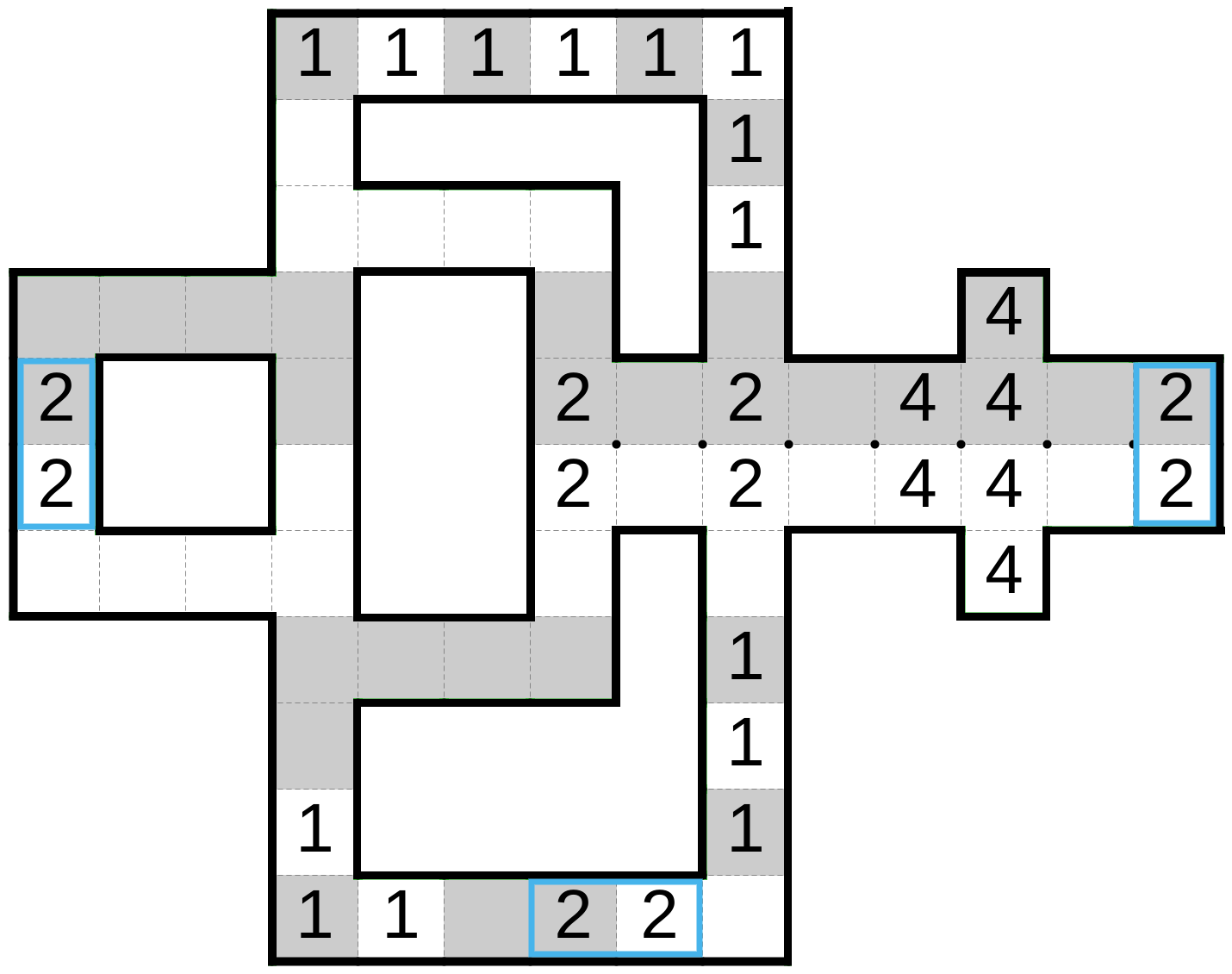}
                \caption[]
                {\small The clause gadget. The optional areas are in the blue rectangles, representing the places where the clause receives the value from the wires.}
                \label{fig:clause}
            \end{minipage}%
            {\hfill}
            \centering
            \begin{minipage}[b]{.5\textwidth}
               \centering
            \begin{subfigure}[t]{0.45\textwidth}
                \centering
                \includegraphics[scale = 0.15]{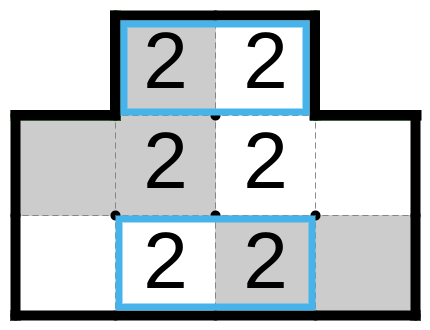}
                \caption[Network2]%
                {{\small An unsolved color inversion gadget. }}
                \label{fig:uinv}
            \end{subfigure}
            \hfill
            \begin{subfigure}[t]{0.45\textwidth}   
                \centering 
                \includegraphics[scale = 0.15]{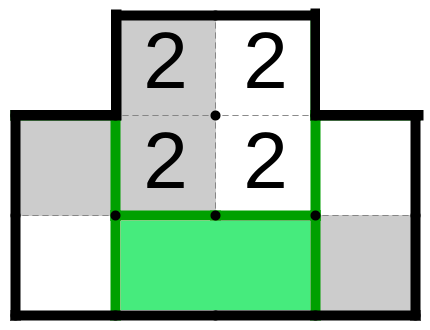}
                \caption[]%
                {{\small Inverting the color of the wire communicating communicating \textit{true}.}}    
                \label{fig:tinv}
            \end{subfigure}
            \vskip \baselineskip
            \begin{subfigure}[t]{0.45\textwidth}  
                \centering 
                \includegraphics[scale = 0.15]{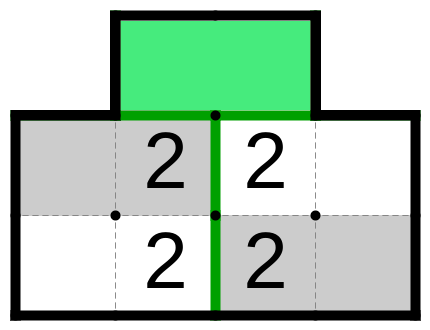}
                \caption[]%
                {{\small The first solution of inverting the color of the wire communicating \textit{false}.}}    
                \label{fig:finv-1}
            \end{subfigure}
            \hfill
            \begin{subfigure}[t]{0.45\textwidth}  
                \centering 
                \includegraphics[scale = 0.15]{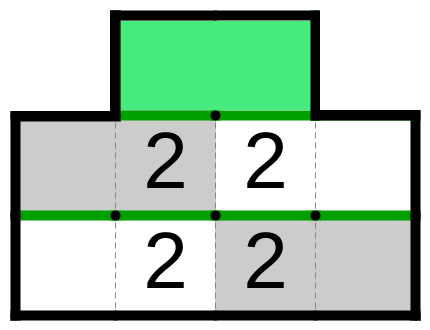}
                \caption[]%
                {{\small The second solution of inverting the color of the wire communicating \textit{false}.}}    
                \label{fig:finv-2}
            \end{subfigure}
            \caption[]
            {\small The color inversion gadget with the profile table. Paired optional areas are in blue rectangles. It receives the value on the lower optional area, while it outputs it on the upper one.}
            \label{fig:inv}
            \end{minipage}
            {\hfill}
        \end{figure}

        Even though our clause gadget has desired characteristics, there is one problem that still may occur. Some wires incident to the gadget may have the opposite coloring of the one in the optional areas. Although the clause gadget is invariant on the color inversion, the color inversion will change coloring of all the optional areas, which may shift the problem to other optional areas. Hence, we need a different solution.
    
        The solution to this problem is the \textit{color inversion} gadget, presented in \autoref{fig:inv}. This gadget will receive the truth value from the wire on its lower optional area, and the color inverted wire of the same value will then exit on its upper optional area.
        
        \begin{lemma} \label{lem:inv}
            The profile table of the color inversion gadget in \autoref{fig:inv} is \textit{sufficient}.
        \end{lemma}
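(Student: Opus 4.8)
The plan is to reuse the enumeration template that settled the wire, rotation, and crossover gadgets. Because the color inversion gadget lies on a wire and is only ever adjacent to wire cells, the construction of this chapter guarantees that each of its optional areas is either fully covered or fully uncovered. It therefore suffices to list exactly the local solutions in which every optional area is taken in whole or not at all; the partial-covering solutions are precisely those a \textit{sufficient} table is permitted to omit, so establishing that the three tilings of \autoref{fig:inv} exhaust the full-covering/uncovering solutions will prove the claim.

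I would split on the value carried by the lower (input) optional area. When the input wire communicates \textit{true}, I expect the tiling to be forced: the integer hints fix the block sizes, and the Double Choco requirements that each block split into equal white and gray halves, that each half be connected, and that the two halves be congruent up to rotation or mirroring, pin down each successive cell into a unique block. Tracing this forced propagation should terminate in the single \textit{true} solution of \autoref{fig:inv}, so that case contributes exactly one entry to the table. When the input wire communicates \textit{false} the tiling is no longer rigid, and the gadget admits the two \textit{false} completions shown; here I would first check that both are legal Double Choco solutions and then argue, by exhausting the ways the interior cells may be grouped into blocks under the same constraints, that every branch collapses onto one of the two displayed tilings.

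The main obstacle is this \textit{false} case. Unlike the \textit{true} case, it is underdetermined and yields two solutions rather than one, so a clean forced-propagation argument is unavailable and the exhaustion has to be carried out by hand to confirm that no third legal tiling slips through. Once both cases are dispatched, every solution with fully covered or fully uncovered optional areas appears in \autoref{fig:inv}, and the profile table is therefore \textit{sufficient}.
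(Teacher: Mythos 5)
Your proposal is correct and matches the paper's approach: the paper's entire proof is the one-line assertion that ``it is easy to check'' that the only solutions with fully covered or fully uncovered optional areas are those in \autoref{fig:inv} (b)--(d), which is exactly the enumeration you outline. Your organization of that check --- splitting on the input value, with the \textit{true} case forced to a single tiling and the \textit{false} case requiring a hand exhaustion that yields the two displayed tilings --- is just a structured version of the same inspection, and your predicted solution counts (one and two) agree with the paper's profile table.
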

        \begin{proof}
            It is easy to check that the only solutions where the optional areas are either covered or uncovered are shown in \autoref{fig:inv} (b) - (d).
        \end{proof}
        
        \begin{lemma} \label{lem:inv1}
            The lower optional area is the covered by the solution of the color inversion gadget if, and only if, the upper one is not covered.
        \end{lemma}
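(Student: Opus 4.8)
The plan is to obtain this as an immediate consequence of the sufficient profile table already established in Lemma~\autoref{lem:inv}. That lemma tells us that the only gadget solutions in which each optional area is either fully covered or fully uncovered are the three depicted in \autoref{fig:inv}~(b)--(d). Since Lemma~\autoref{lem:inv1} concerns exactly the covered/uncovered status of the two paired optional areas, it therefore suffices to inspect these three solutions and confirm that in each of them the lower and upper optional areas have \emph{opposite} coverage. The true-case solution of \autoref{fig:inv}~(b) realizes one of the two opposite patterns, and both false-case solutions of \autoref{fig:inv}~(c) and \autoref{fig:inv}~(d) realize the other; as these exhaust the relevant profile table, no solution can cover (or leave uncovered) both optional areas simultaneously, and the biconditional follows.

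As an alternative, and more in keeping with the local forcing arguments used for the wire gadget in Lemma~\autoref{lem:wire} and the crossover subgadget in Lemma~\autoref{lem:cross}, I would argue directly from the puzzle properties. First I would assume the lower optional area is covered by the gadget solution; then the adjacent constrained cells are forced to complete their blocks in a way that consumes their free neighbours, which propagates along the gadget and leaves the cell abutting the upper optional area with only one admissible partner, forcing the upper optional area to stay uncovered. Running the identical chain of forced block completions starting from the covered upper optional area yields the converse implication, so the two paired areas are never covered together.

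The main obstacle I anticipate is purely bookkeeping rather than conceptual: the \emph{false} value admits two distinct internal tilings (\autoref{fig:inv}~(c) and~(d)), so whichever route I take, the case analysis must check the covered/uncovered pattern of the paired optional areas separately for both of them, and some care is warranted because the color-inversion geometry is less symmetric than the plain wire, so the forcing chains on the two sides are not mirror images. Crucially, however, because Lemma~\autoref{lem:inv} already guarantees the profile table is sufficient, no further solutions can appear, and the verification reduces to a finite inspection of a known list rather than a fresh solvability argument.
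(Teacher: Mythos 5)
Your primary argument---deducing the biconditional by combining the sufficiency of the profile table from Lemma~\autoref{lem:inv} with an inspection of the three solutions in \autoref{fig:inv}~(b)--(d), each of which has opposite coverage of the paired optional areas---is exactly the paper's proof, which states that the lemma ``immediately follows from \autoref{fig:inv} and Lemma \autoref{lem:inv}.'' Your proposal is correct, and the alternative direct forcing argument you sketch is unnecessary given that the profile-table route already closes the case.
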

        \begin{proof}
            The proof immediately follows from \autoref{fig:inv} and Lemma \autoref{lem:inv}.
        \end{proof}
        The color inversion gadget will be required only when the wire arrives to the clause gadget, thus we can make the color inversion gadget the integral part of the clause gadget. We will call such gadget the \textit{complete clause gadget} (see \autoref{fig:comp-clause}). Since the clause gadget is invariant with respect to color change, we can let the middle wire determine the coloring of the gadget. Depending on the coloring of the left and right wire, we will either use or not use the color inversion gadgets.
        
        If the color of the wire should be changed, we will append the color inversion gadget surrounded from both sides with the wire gadgets, otherwise we will append the longer wire gadget. Observe that this is not necessary, but we will do it in order to ensure the width of the clause gadget remains unchanged. Newly adopted parts of the gadget will only transfer the value of the wire with the possible inversion of the wire's color.
        
        \begin{figure}[t]
            \centering
            \includegraphics[width=\textwidth]{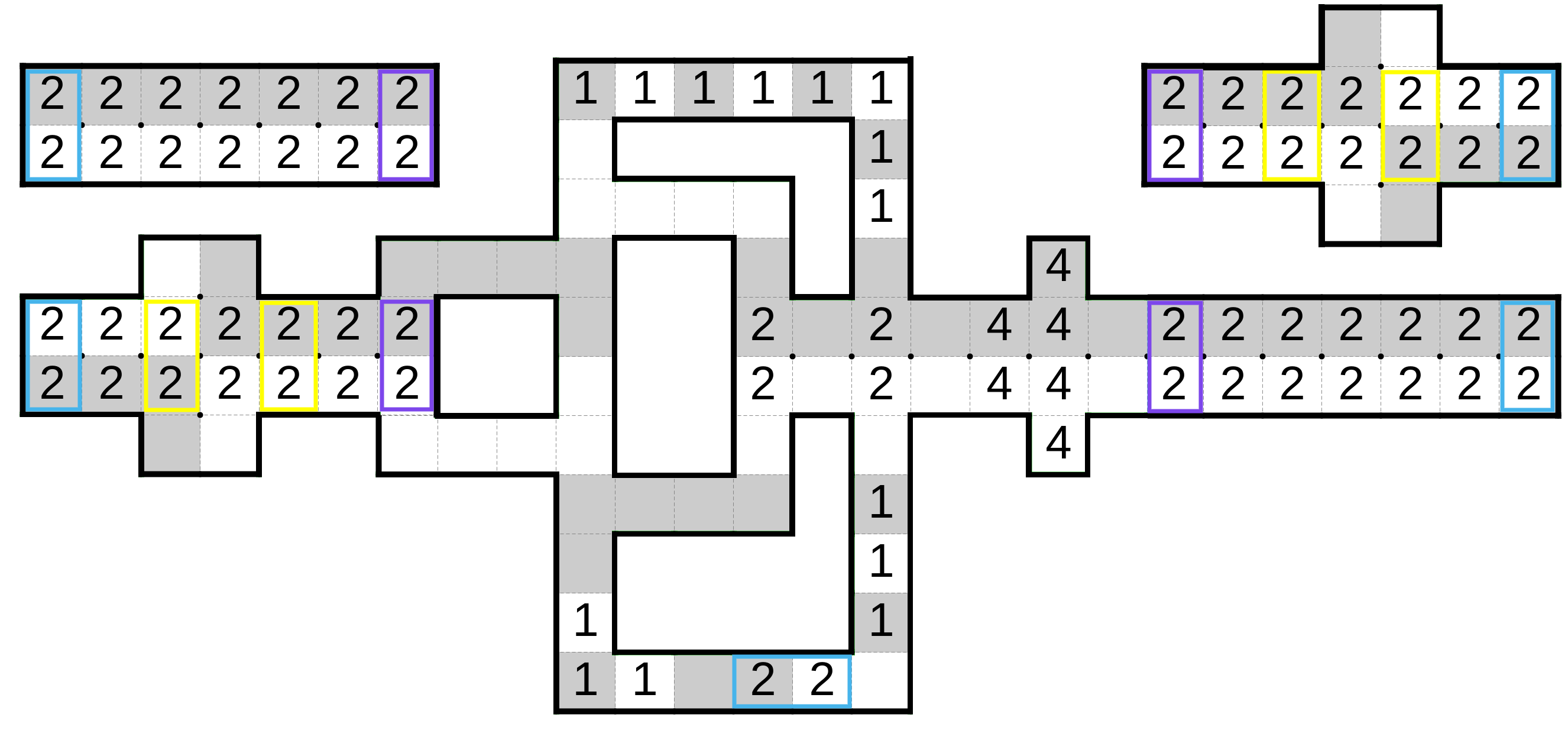}
            \caption[]
            {\small The complete clause gadget. The places where the wires are connected to the gadget are in blue rectangles. The optional areas shared between the color inversion gadgets and additional wire gadgets are shown in yellow rectangles. Area in the purple rectangles is the area where the newly added parts of the gadget are connected with the gadget from \autoref{fig:clause}. We can observe that color of the left wire is inverted while the color of the other wires remains unchanged. The two smaller segments not connected to the rest of the gadget are not the part of the gadget. They serve as possible replacements for the parts below them in the case that the left or the right wire have different coloring than the optional areas of the gadget. }
            \label{fig:comp-clause}
        \end{figure}

        \begin{lemma}
            The complete clause gadget is solvable if, and only if, at least one incoming wire communicates the \textit{true} value.
        \end{lemma}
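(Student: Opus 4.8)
The plan is to reduce this statement to \autoref{thm:clause}, which already characterizes solvability of the (inner) clause gadget, by arguing that every component attached in passing from the clause gadget to the complete clause gadget is \emph{value-transparent}: it relays each incoming truth value unchanged to the corresponding optional area of the inner clause gadget, affecting only the coloring. Concretely, for each of the three wires the complete clause gadget interposes, between the external wire and the matching optional area of the inner clause gadget, either an extended wire gadget or a color inversion gadget flanked by wire gadgets. First I would record that each such piece preserves the truth value: a wire gadget does so by Lemma \autoref{lem:wire}, and a color inversion gadget does so by Lemma \autoref{lem:inv} and Lemma \autoref{lem:inv1}, changing only the color. Since value-transparent pieces compose along a wire into a value-transparent path, the value presented to each inner optional area equals the value communicated by the corresponding incoming wire.

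Next I would dispose of the coloring. The inner clause gadget is invariant under color inversion, so I let the middle wire fix the global coloring and insert a color inversion gadget on a left or right wire exactly when that wire's coloring disagrees with the clause's optional areas; otherwise a plain (longer) wire gadget suffices. In either case the coloring now matches at the inner optional area while the truth value is unchanged, so the hypotheses of \autoref{thm:clause} are met with the three relayed values being precisely the three incoming wire values.

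With these two observations the equivalence follows by composition. Given a solution of the complete clause gadget, its restriction to the inner clause gadget is a solution whose three optional areas carry the relayed values; conversely, any solution of the inner clause gadget extends, using the relay paths (each solvable for either incoming value), to a solution of the whole. Hence the complete clause gadget is solvable if, and only if, the inner clause gadget is, which by \autoref{thm:clause} holds if, and only if, at least one of its optional areas receives \textit{true}, i.e.\ if, and only if, at least one incoming wire communicates \textit{true}.

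The step requiring the most care is this two-way extension argument: I must check that each relay path is solvable for \emph{both} incoming values and, more delicately, that it introduces no spurious solution letting a \textit{false} wire appear as \textit{true} at the inner optional area. The color inversion gadget is the sensitive point, since by Lemma \autoref{lem:inv} it admits two distinct solutions for the \textit{false} value; I would appeal to the exact pairing of Lemma \autoref{lem:inv1} (lower covered if, and only if, upper uncovered) to confirm that this multiplicity never corrupts the transmitted value, so that the composition remains faithful and the reduction to \autoref{thm:clause} is sound.
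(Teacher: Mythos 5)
Your proposal is correct and takes essentially the same approach as the paper: the paper's proof is a one-line appeal to the construction of the complete clause gadget together with \autoref{thm:clause}, Lemma \autoref{lem:wire}, and Lemma \autoref{lem:inv1}, which is exactly the composition argument you spell out in detail. Your additional care about value-transparency of the relay pieces and the two \textit{false}-solutions of the color inversion gadget makes explicit what the paper leaves implicit, but it is the same reduction.
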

        \begin{proof}
            The proof is an immediate consequence of the construction of the complete clause gadget, and \autoref{thm:clause}, Lemma \autoref{lem:wire}, and Lemma \autoref{lem:inv1}.
        \end{proof}
        
        Observe that the place where the left and the right wires enter the gadget are equally distanced from the middle optional area, in a way that emulates \autoref{fig:bent}.
        
    \newpage
    \section{Connecting the dots} \label{sec:conn-3}
        
        We will construct the puzzle starting from the instance of Bent RPM-3SAT (see \autoref{fig:bent}), and convert each of its parts to the equivalent gadget.
        
        First, we start from the variable line. We will place all the \textit{complete variable gadgets} on the same row of the puzzle, with enough space (even number of cells) between their closest cells, so that even the wires of the adjacent variables can be connected to the same clause. Next, we place the \textit{complete clause gadgets} according to the figure, either above or below the variable line, depending on whether the clause is positive or negative. The middle optional area of the \textit{complete clause gadgets} must be in the same columns as the optional area of the \textit{complete variable gadgets}, where their middle wires will exit. We widen the space between the clauses more than in the figure, to ensure that there will be an empty space between them, as well as with the wires. Finally, we draw the wires, starting from the \textit{complete variable gadgets}, using the \textit{wire gadgets} to connect it to the \textit{complete clause gadgets}. The middle wire can be directly connected to the clause. We will draw the left and the right wires of the clause until the wire's optional area is in the row below or above, for positive and negative clauses respectively, the optional area where it should enter the gadget. Then, we append the desired \textit{rotation gadgets} on top of the wire gadgets, and from there extend the wires enough to connect it with the \textit{complete clause gadgets}.
        
        \begin{lemma}
            All three wires of every \textit{complete clause gadget} can be connected to it. 
        \end{lemma}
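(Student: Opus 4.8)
The plan is to show that the geometric construction described in Section \ref{sec:conn-3} never forces two wires to collide, so that every complete clause gadget can indeed receive all three of its incoming wires. The statement is fundamentally a planarity and spacing argument rather than a puzzle-logic argument: all the gadgets have already been shown to behave correctly, so what remains is to verify that the embedding inherited from the Bent RPM-3SAT instance can be realized on the discrete grid without overlaps. First I would recall the key structural facts about the Bent RPM-3SAT embedding (Figure \ref{fig:bent}): the variables all lie on a single horizontal variable line, positive clauses sit strictly above it and negative clauses strictly below, and every wire runs rectilinearly. Because the original RPM-3SAT embedding is planar and non-crossing, and the bending operation only shrinks each clause and reroutes its left/right wires locally without touching anything else (as argued just before Section \ref{sec:sol}), the continuous embedding we start from already has pairwise-disjoint wires.

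The main work is then to argue that scaling this continuous embedding onto the grid preserves disjointness while leaving enough room for the finite width of each gadget. Here I would proceed in three steps. First, I would fix the horizontal placement: place the complete variable gadgets along one row, leaving an even number of empty cells between consecutive gadgets, and align the middle optional area of each complete clause gadget with the grid column of the middle wire that feeds it, exactly as prescribed. Second, I would argue that by widening the spacing between clauses (the construction explicitly permits spacing the clauses farther apart than in the figure), any two distinct clause gadgets, together with the vertical wire segments leading into them, occupy disjoint sets of columns, so no two clauses and no two of their incident wires can interfere horizontally. Third, I would handle the left and right wires of a single clause: these are routed straight up (or down) until their optional area lies one row below (or above) the target, at which point a rotation gadget turns them $90\si{\degree}$ toward the clause center and a short horizontal extension completes the connection. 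I would verify that the even spacing between variable gadgets guarantees at least the finite horizontal room the rotation gadget plus extension requires, so that the left wire, middle wire, and right wire of one clause reach their three optional areas without overlapping each other or their mandatory areas.

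The hard part will be the third step: checking that the three wires of one clause genuinely fit without collision, because this is where the gadgets have positive width and the figure is only schematic. The two left/right wires must rotate inward and run horizontally to meet the clause, and I must ensure the horizontal extensions do not cross the middle wire or each other, and that the rotation gadgets are far enough from the clause body that their mandatory areas stay disjoint. The clean way to discharge this is to observe (as noted at the end of Section \ref{sec:clause}) that the complete clause gadget places its left and right entry points symmetrically about the middle optional area, mirroring the bent embedding; combined with the freedom to widen inter-clause spacing and to scale the variable-line spacing up arbitrarily, one can always choose a scale factor large enough that every gadget's bounding box fits in the allotted grid region. I would therefore conclude by stating that since each of the finitely many gadgets has bounded dimensions and the continuous embedding is strictly non-crossing with positive clearance, a sufficiently large uniform scaling of the grid realizes all wires and gadgets disjointly, so all three wires of every complete clause gadget can be connected to it.
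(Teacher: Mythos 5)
There is a genuine gap: you have misidentified what this lemma actually needs to establish. Non-crossing of wires and gadgets is not the content of this statement --- the paper handles that separately (immediately after the lemma) as a direct consequence of following the planar bent embedding and adding extra spacing. The real issue is \emph{exact alignment}: whether each wire can be made to terminate precisely on the clause gadget's optional area, given that the building blocks only come in discrete sizes. Wire gadgets extend a wire's length in increments of exactly two cells, and the paper's proof rests on a parity computation: the middle wire must exit the complete variable gadget an odd number of rows from where it enters the complete clause gadget; appending a rotation gadget plus a three-row wire gadget to the clause's left or right optional area leaves a fixed gap of $12$ columns to the middle optional area, and that gap can only be stretched by even amounts. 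The lemma then follows because the construction chooses even horizontal spacing between complete variable gadgets, and the wires within a single complete variable gadget are already at even horizontal distances from one another, so every wire-to-wire distance in the puzzle is even and the parity constraints can always be met.

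Your proposal never engages with this modular condition. Your concluding move --- ``a sufficiently large uniform scaling of the grid realizes all wires and gadgets disjointly'' --- guarantees room, but room is not the obstruction: if a wire's column sits at odd horizontal distance from its target entry column, no amount of additional space lets a chain of two-cell-increment wire gadgets and a fixed-size rotation gadget bridge that distance, and scaling an odd distance by an odd factor leaves it odd. To repair the argument you would need to replace the scaling step with the paper's discrete bookkeeping: fix the gadget dimensions, note that the fixed offsets (such as the $12$-column rotation-to-clause gap) are even, and show that all spacings you are free to choose can be chosen so that every required distance has the correct residue modulo $2$.
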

        \begin{proof}
            In order to connect the middle wire of the clause with the \textit{complete clause gadget}, the optional area where the wire exits the \textit{complete variable gadget} must be an odd number of rows away from the optional area where it enters the \textit{complete clause gadget}. Hence, we will space them accordingly when constructing the puzzle. If we append the rotation gadget directly on the left or the right optional area of the \textit{complete clause gadget}, and we put the \textit{wire gadget} with the three rows of the mandatory area on the other side of the rotation gadget, the bottom optional area of the \textit{wire gadget} will be in the same row as the middle optional area of the \textit{complete clause gadget} while there will be 12 columns in between them. That distance can only be extended for an even number of columns, by adding another \textit{wire gadget} in between the \textit{rotation} and the \textit{complete clause gadgets}.  
            
            Because any two \textit{complete variable gadgets} have even distance between them, which we choose, and the horizontal distance between each wire until the end of its \textit{complete variable gadgets} is even, the distance between any two wires in the puzzle is even. 
            
            Hence, we can position the \textit{complete variable gadgets} so that all three wires of all of the \textit{complete clause gadgets} can be connected to them. 
        \end{proof}

        Since our construction follows the planar drawing, there would be no intersections of the gadgets. Because we created additional space between the gadgets, we forbid unwanted interference between non connected gadgets. 
        
        \begin{lemma} \label{lem:asdcNP}
            The arbitrary shaped \textit{Double Choco} is in the \textit{NP}.
        \end{lemma}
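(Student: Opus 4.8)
The plan is to exhibit, for every yes-instance, a polynomial-size witness together with a polynomial-time verifier, which is exactly what the definition of \textit{NP} in \autoref{sect:NP} demands. The natural witness is the solution itself: the partition of the non-hole cells of the board into blocks, encoded (for instance) as a labelling that assigns to each cell the index of the block containing it. As already observed in \autoref{chap:1}, every block contains at least one white and one gray cell, so there are at most $\frac{mn}{2}$ blocks and the labelling has size $O(mn\log(mn))$, which is polynomial in the input.

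First I would check that the labelling describes a genuine partition, namely that every non-hole cell receives exactly one label and that the blocks are pairwise disjoint and cover all such cells; this is immediate from the encoding and runs in time linear in the board size. Then, processing the blocks one at a time, I would verify each of the \textit{Double Choco} properties from \autoref{chap:1}. Counting the white and the gray cells of a block and comparing the two totals checks the color-balance property (and fixes the common size of the two monochromatic regions). The connectivity requirements---that the white region is connected, that the gray region is connected, and that their union is connected---are each decided by a single breadth- or depth-first search on the relevant cells, in time linear in the block size. The numerical constraint is checked by comparing the size of the monochromatic region against every number written inside the block.

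The step I expect to be the main obstacle is verifying that, within each block, the white region and the gray region have the \emph{same shape up to rotation and mirroring}, since congruence must be tested modulo the dihedral group of the square rather than merely by direct overlap. I would handle this by extracting the two monochromatic regions as polyominoes, translating each so that its bounding box is anchored at a fixed corner, and then comparing the white polyomino against the gray one under each of the eight isometries of the square---the four rotations and their reflections---realigning bounding boxes by a translation after applying each transformation. Because there are only eight transformations and each comparison is linear in the block size, this congruence test also runs in polynomial time.

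Finally, summing the costs over all blocks gives a verifier running in time polynomial in $mn$ that accepts if and only if the labelling encodes a valid solution. Hence, by the definition of \textit{NP}, the arbitrarily shaped \textit{Double Choco} lies in \textit{NP}. The same argument applies verbatim to the (hole-free) original puzzle of \autoref{chap:4}, since holes only restrict the set of cells that must be covered.
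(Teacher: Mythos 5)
Your proposal is correct and follows essentially the same route as the paper: the certificate is the block decomposition itself, and a polynomial-time verifier checks the \textit{Double Choco} properties from \autoref{chap:1}. The paper states this in one sentence; you merely spell out the details it leaves implicit, including the shape-congruence check via the eight isometries of the square.
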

        \begin{proof}
            Given the certificate (the set of blocks) to the puzzle instance, we can validate it in the polynomial time by checking if the set of the blocks satisfies the \textit{Double Choco} properties from \autoref{chap:1}. 
        \end{proof}
        
        \begin{lemma}
            If the Bent RPM-3SAT has \textit{n} variable and \textit{m} clauses, then our reduction (puzzle) has the size polynomial in $n+m$ and it can be performed in polynomial time.
        \end{lemma}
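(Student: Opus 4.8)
The plan is to bound two quantities separately: the number of cells of the resulting puzzle (its size) and the time needed to write it down. The key fact I would invoke at the outset is that a Rectilinear Planar Monotone 3SAT formula with $n$ variables and $m$ clauses admits a rectilinear embedding on a grid whose width and height are polynomial in $n+m$ (this is exactly the embedding produced by \cite{de2012optimal}), and that passing to the \emph{Bent RPM-3SAT} variant only shrinks clauses and reroutes wires inside the continuous domain, so the bent embedding still fits in a grid of polynomial dimensions. Since the construction in \autoref{sec:conn-3} follows this embedding up to a bounded rescaling, bounding the embedding suffices to bound the puzzle.

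Next I would account for each gadget family and check that the translation from embedding features to gadgets incurs at most a polynomial blow-up. There are exactly $m$ complete clause gadgets, each of constant size, and $n$ complete variable gadgets, whose widths sum to $O(m)$: each equalizer and vertical segment is of constant size, and the number of wires entering a variable gadget is at most its number of literal occurrences, so the total over all variables is bounded by the number of literal slots, $3m$. The wires contribute $O(m)$ wire gadgets (one per literal occurrence, plus terminators), each of constant width and of height at most the height of the embedding; the rotation and color-inversion gadgets are of constant size and are used $O(1)$ times per wire. Multiplying the number of gadgets by their individual cell counts and by the polynomial grid dimensions shows that the total number of cells is polynomial in $n+m$; and because every integer written in a cell is either $\bm 2$ or $\bm 4$, each cell is described by $O(1)$ bits, so the whole puzzle has a description of polynomial bit-length.

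For the time bound I would observe that the embedding itself is computable in polynomial time, that placing the complete variable gadgets on a common row and the complete clause gadgets above or below it is a direct transcription of the embedding coordinates, and that drawing each wire, inserting the rotation gadgets, and choosing whether to insert a color-inversion gadget are all local decisions determined by the embedding. Each such decision is made in time proportional to the size of the gadget it produces, so the total running time is proportional to the number of cells plus the cost of computing the embedding, both polynomial in $n+m$.

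The hard part will be making the spacing arguments of \autoref{sec:conn-3} quantitative: the construction repeatedly asks for ``enough space (even number of cells)'' between adjacent variable gadgets and to ``widen the space between the clauses,'' and one must check that the amount of padding and the resulting wire lengths (which depend on where the rotation gadgets are inserted relative to a clause's middle optional area) remain bounded by a polynomial rather than accumulating across successive features. I would handle this by pinning the required inter-gadget spacing to a fixed multiple of the embedding's grid resolution, so that the whole picture is a rescaling of the bent embedding by a constant (or at worst polynomial) factor, after which polynomiality of the embedding transfers immediately to the puzzle.
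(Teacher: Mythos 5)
Your proposal is correct and takes essentially the same route as the paper's own proof: the puzzle inherits polynomial dimensions from the Bent RPM-3SAT embedding (the paper bounds the height by $O(m)$ and the width by $O(w)$ with the number of wires $w = O(n+m)$ via planarity, where you instead count literal occurrences, at most $3m$), and the element-by-element translation into gadgets is a polynomial-time operation. Your write-up is somewhat more detailed than the paper's --- counting gadgets per family, noting the $O(1)$ bit-length per cell, and explicitly flagging the even-spacing/padding issue that the paper's proof leaves implicit --- but the underlying argument is the same.
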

        \begin{proof}
            The construction of the puzzle follows the Bent RPM-3SAT embedding, hence the height of the puzzle is $O(m)$.  Meanwhile, its width is $O(w)$, where $w$ represents the number of wires of the Bent RPM-3SAT. Since the number of edges in the planar graph is at most linear in the number of the vertices, we have that the width of our puzzle is $O(n + m)$. Hence, the size of our puzzle is polynomial in $n + m$. Translation of each element of the Bent RPM-3SAT (variable, wire or clause) to the homologous gadget and its inserting to the puzzle, takes the most polynomial time. Thus, our reduction is performed in the time polynomial in $n+m$.
        \end{proof}
        
        \begin{lemma} \label{lem:asdcNPhard}
            The Bent RPM-3SAT is solvable if, and only if, the puzzle is solvable.
        \end{lemma}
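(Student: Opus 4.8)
The plan is to establish the biconditional by translating between satisfying assignments of the Bent RPM-3SAT instance and valid puzzle profile assignments, using the truth value carried by each wire as the bridge. The key structural fact I will rely on is that, by the construction of \autoref{sec:conn-3} together with the spacing argument preceding the lemma, every optional area of the puzzle is shared by exactly two gadgets and no two non-adjacent gadgets interact; hence a global solution restricts to a local solution on each gadget, and conversely a consistent family of local solutions glues into a global one. With this correspondence in hand the proof reduces to checking that the gadget semantics match the logical semantics of the formula.

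For the forward direction, suppose the formula has a satisfying assignment. I would set each \textit{complete variable gadget} to the value assigned to its variable. By \autoref{thm:comp-var} this makes every upward wire carry that value and every downward wire the opposite, consistently. Propagating along the wires, \autoref{lem:wire} (wire gadget), \autoref{lem:rot} (rotation), and \autoref{lem:inv1} (color inversion) guarantee that each wire faithfully transfers its value to the optional area where it meets its \textit{complete clause gadget}. Since the assignment satisfies every clause, at least one incoming wire of each \textit{complete clause gadget} carries \textit{true}, so by \autoref{thm:clause} each clause gadget admits a local solution compatible with these wire values. Because all optional-area coverings are pairwise complementary between adjacent gadgets, these local solutions assemble into a valid puzzle profile assignment, i.e.\ a solution of the puzzle.

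For the backward direction, suppose the puzzle is solvable and fix a solution. Restricting it to each \textit{complete variable gadget} and reading off the common value of its upward wires (well defined by \autoref{thm:comp-var}) yields a truth assignment. The same wire lemmas show this value is transmitted unchanged along each wire to the clause gadgets. Because the global solution solves every \textit{complete clause gadget}, \autoref{thm:clause} forces at least one of its three incoming wires to carry \textit{true}; translated back, each clause contains a satisfied literal. Hence the extracted assignment satisfies the formula.

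The main obstacle I anticipate is not the logical bookkeeping above but justifying the decomposition step rigorously: showing that an \emph{arbitrary} global solution really does split into a local solution on each gadget. A priori a single block of the puzzle could span several gadgets and spoil the per-gadget reasoning. I would resolve this using the solution framework of \autoref{sec:sol}: the extra spacing inserted in \autoref{sec:conn-3} ensures non-adjacent gadgets never share a cell, while adjacent gadgets overlap only on a shared optional area that each gadget must either fully cover or fully uncover. Consequently every block is confined to a single gadget's full area, so the restriction to each gadget is a legitimate local solution appearing in its (sufficient or complete) profile table, and the complementarity of shared optional areas matches exactly the wire-transfer semantics used in both directions above.
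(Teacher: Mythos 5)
Your proposal is correct and takes essentially the same route as the paper: configure the complete variable gadgets from the satisfying assignment, propagate values through wire, rotation, and color-inversion gadgets to the complete clause gadgets, and invoke the clause gadget's solvability criterion in both directions. The only difference is that you explicitly justify the decomposition of a global solution into per-gadget local solutions, a point the paper leaves implicit in its solution framework (\autoref{sec:sol}) and the sufficiency of the profile tables; this is a welcome tightening, not a different argument.
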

        \begin{proof}
            Given the solution to the Bent RPM-3SAT instance, we can configure the \textit{complete variable gadgets} to have the same value as their respective variables in the original problem. The \textit{wire} and the \textit{rotation} gadgets will transfer that value to the \textit{complete clause gadgets}. Since, in the original problem, each clause is connected to at least one literal with \textit{true} value, the \textit{complete clause gadgets} will be solvable because at least one of its adjacent \textit{wire gadgets} communicates \textit{true}.
            
            If the puzzle is solvable, then by the property of the \textit{complete variable gadget}, each \textit{wire gadget} carries either the \textit{true} or the \textit{false} value. That value is propagated to the \textit{complete clause gadgets} which are all solvable. Hence, the assigning of the variable's value of the original problem, according the to value of the respective \textit{complete variable gadgets}, provides at least one \textit{true} literal per the clause of the Bent RPM-3SAT.
        \end{proof}

        \begin{theorem}
            The arbitrary shaped Double Choco is NP-complete.
        \end{theorem}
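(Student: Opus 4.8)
The plan is to apply the standard two-part recipe for NP-completeness: first establish membership in NP, then exhibit a polynomial-time many-one (mapping) reduction from a problem already known to be NP-hard. Membership is immediate from Lemma \autoref{lem:asdcNP}, which supplies a polynomial-time verifier that checks a candidate set of blocks against the \textit{Double Choco} properties of \autoref{chap:1}. So the substance of the argument lies entirely in hardness.

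For hardness I would reuse the chain of reductions developed in this chapter, whose source is \textit{Bent RPM-3SAT}. First I would observe that \textit{Bent RPM-3SAT} is itself NP-hard: \textit{RPM-3SAT} is NP-complete by \cite{de2012optimal}, and the bending transformation of \autoref{sec:rect} only locally reshapes clauses and redirects wires within a continuous planar embedding, hence preserves the computational properties. Consequently, any polynomial-time reduction \emph{from} \textit{Bent RPM-3SAT} \emph{to} arbitrary shaped \textit{Double Choco} transfers NP-hardness.

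Next I would discharge the two requirements of a mapping reduction in turn. Efficiency: the construction of \autoref{sec:conn-3} translates each variable, wire, rotation, and clause of the bent embedding into its homologous gadget, and the dimension bound (height $O(m)$, width $O(n+m)$) shows the resulting puzzle has size polynomial in $n+m$ and is computable in polynomial time. Correctness: Lemma \autoref{lem:asdcNPhard} states precisely that the formula is satisfiable if and only if the puzzle is solvable, which is exactly the biconditional a mapping reduction requires. Chaining these gives a computable $f$ sending yes-instances to yes-instances and no-instances to no-instances. The conclusion is then mechanical: arbitrary shaped \textit{Double Choco} is in NP and admits a polynomial-time reduction from the NP-hard \textit{Bent RPM-3SAT}, hence it is NP-complete.

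The genuinely hard part is not this final assembly but the upstream work the assembly cites — in particular both directions of Lemma \autoref{lem:asdcNPhard}, which depend on every gadget faithfully propagating a single truth value (the wire, rotation, crossover, and color-inversion lemmas) and on the clause gadget being solvable exactly when at least one incoming wire is \textit{true} (Theorem \autoref{thm:clause}). Were I carrying this out from scratch, I would expect the crossover and middle-wire-split subgadgets to be the main obstacle, since routing the middle literal into the OR subgadget without overlapping mandatory areas is the one place where the clean planar-wiring abstraction of \autoref{fig:bent} does not transfer to the grid for free, and it is where a subtle gadget bug would most plausibly hide.
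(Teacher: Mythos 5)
Your proposal is correct and matches the paper's own argument exactly: the paper proves this theorem ``immediately from the previous three statements,'' namely NP membership (Lemma~\autoref{lem:asdcNP}), the polynomial size and running time of the construction, and the equivalence of Lemma~\autoref{lem:asdcNPhard}, with the NP-hardness of \textit{Bent RPM-3SAT} having been justified in \autoref{sec:rect} just as you recount. Your additional explicit unpacking of the mapping-reduction definition and your remarks on where the real difficulty lies are sensible commentary but do not constitute a different route.
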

        \begin{proof}
            Immediately from the previous three statements.
        \end{proof}

    \chapter{Double Choco is NP-complete} \label{chap:4}
        In this chapter we come back to our original problem. We will prove that the \textit{Double Choco} with the rectangular $m \times n$ board is \textit{NP-complete}, by using the results from \autoref{chap:3} and slightly modifying the gadgets, while leaving their properties unchanged, and filling the empty areas in the board. 
        
        Our gadgets will be surrounded by a thin layer, called a shield, which satisfies the following:
        \begin{itemize}
            \item The shield can be tiled by using solely its own cells.
            \item None of the shield's cells can be in the same block as any of the internal gadget cells in any puzzle solution.
        \end{itemize}
        
        These properties will ensure that any puzzle solution has a valid \textit{profile assignment}. If the puzzle was not solvable prior to shielding, the shield will not make it solvable.
        
        By shielding the gadgets, we prohibit undesired interference between the non connected gadgets, as well as interference between the external areas with the gadgets. We will present the shielded gadgets in \autoref{sec:swire}, \autoref{sec:svar}, and \autoref{sec:sclau}.
        
        Even now, it can happen that we have holes in our puzzle, between the shields of the different gadgets, thus, we need to fill them. That will be done by using fillers. In \autoref{sec:fholes}, we will argue that all the holes in our puzzle have the specific shape that will allow us to cover them whenever the puzzle is created according to the described procedure. Both the shields and the fillers are external areas (see \autoref{sec:sol}), meaning that they consist only of mandatory areas.
        
        Because each optional area belongs to the two gadgets, it must be shielded by only one of them, to avoid the shields overlapping. Hence, we decide to surround the optional areas with the shield of the gadget that receives the value through it, while the optional areas will not be shielded on the gadgets which outputs the value on it.
        
        All the shielded gadgets in this chapter can be extended in a similar way as in \autoref{chap:3} and additionally following the described shielding pattern.
        
        \newpage
       
        \begin{figure*}[t]
            \centering
            \begin{subfigure}[t]{0.3\textwidth}
                \centering
                \includegraphics[scale = 0.2]{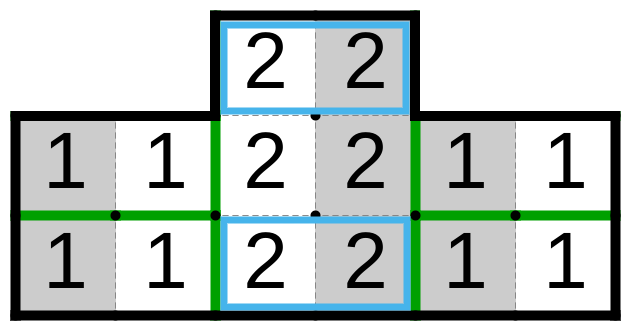}
                \caption[Network2]%
                {{\small Wire gadget with the shield.}}    
                \label{fig:s-wire}
            \end{subfigure}
            \hfill
            \begin{subfigure}[t]{0.3\textwidth}  
                \centering 
                \includegraphics[scale = 0.2]{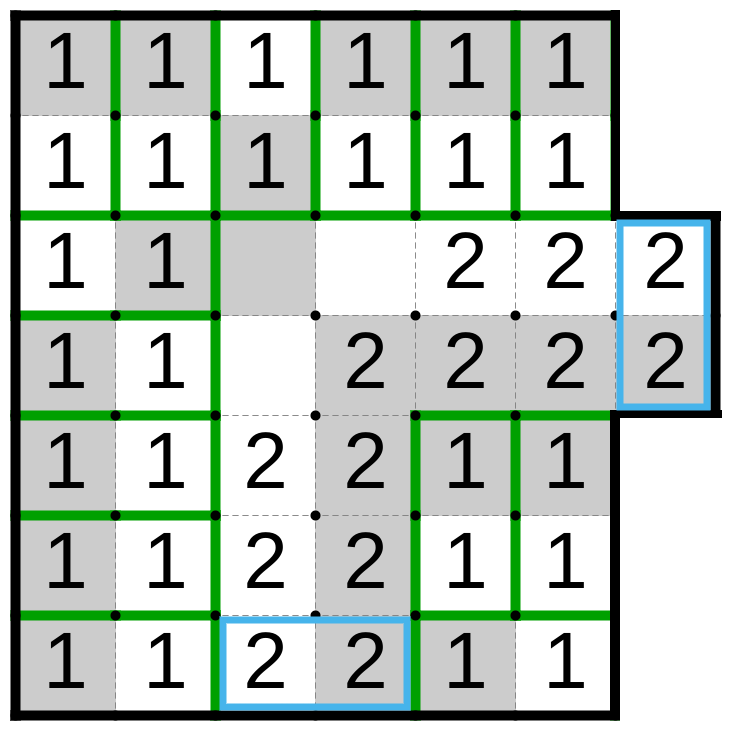}
                \caption[]%
                {{\small Right rotation gadget with the shield.}}    
                \label{fig:s-rangle}
            \end{subfigure}
            \hfill
            \begin{subfigure}[t]{0.3\textwidth}   
                \centering 
                \includegraphics[scale = 0.2]{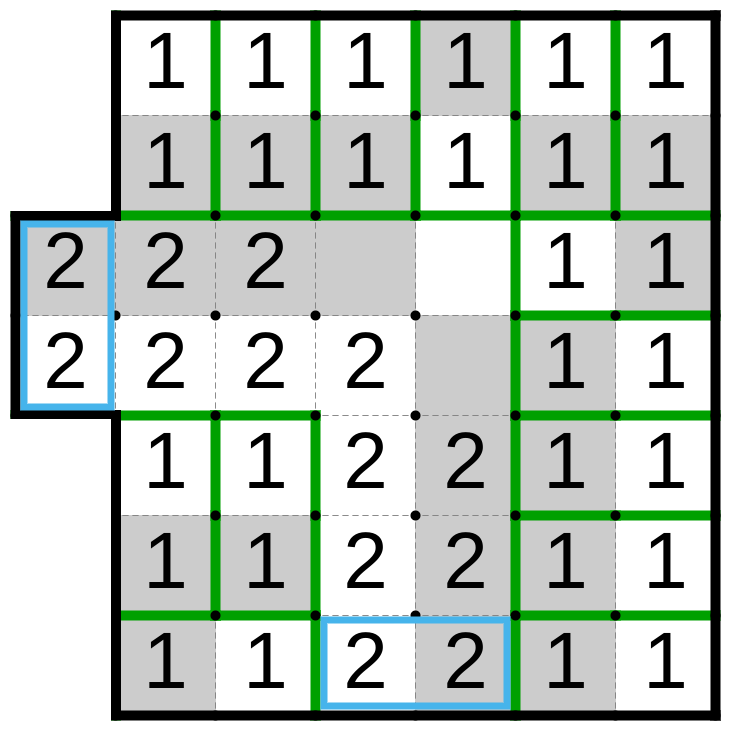}
                \caption[]%
                {{\small Left rotation gadget with the shield.}}    
                \label{fig:s-langle}
            \end{subfigure}
            \caption[]
            {\small  The wire and rotation gadgets with the shield. Shield is represented as the covered area. Optional areas are in the blue rectangles.}
            \label{fig:s-wires}
        \end{figure*}
        \section{Shielding the wire and rotation gadgets} \label{sec:swire}
        
        In \autoref{fig:s-wires}, we have presented the shielded version of the wire and rotation gadgets. The shield is represented with the areas containing an integer \textbf{1}, surrounding the gadgets everywhere except on the optional areas. The wire gadget stays unmodified, and, on the other hand, our rotation gadgets from \autoref{fig:angle} are extended with the wire gadgets appended to both of their optional areas, in order to shield them easier. Every gadget in the above figure has one optional area surrounded by the shield and one not. We will call them the input and output optional area respectively.
        
        \begin{claim}
            The shields of the gadgets in \autoref{fig:s-wires} can be tiled by solely using its own cells.
        \end{claim}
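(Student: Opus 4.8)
The plan is to reduce the claim to the existence of a domino partition of the shield and then to exhibit one. By the \textit{Double Choco} properties from \autoref{chap:1}, any block containing a cell with the integer \textbf{1} must have a white area of exactly one cell and a gray area of exactly one cell; since these two single-cell areas are required to be connected to each other, such a block is nothing but a white--gray domino, i.e. two orthogonally adjacent cells of opposite color. Every shield cell in \autoref{fig:s-wires} carries the integer \textbf{1}, so tiling the shield using solely its own cells is equivalent to partitioning the shield region into white--gray dominoes lying entirely within the shield.

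First I would fix the coloring convention, observing that along each straight portion of the band the shield cells are colored so that orthogonally adjacent shield cells receive opposite colors. I would then exhibit the explicit pairing indicated in \autoref{fig:s-wires}: the cells of each straight segment of the band are paired consecutively two-by-two, and each such pair consists of one white and one gray cell and hence forms a legal block. A single domino trivially satisfies all remaining \textit{Double Choco} constraints, since its two color areas are connected, equal in size, and of identical (single-cell) shape up to rotation or mirroring. Thus the interior of every straight run of the shield is tiled, and it remains only to close up the ends and bends.

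The step I expect to be the main obstacle is the treatment of the corners and of the places where the band passes an optional area, where the shield changes direction or its local width varies. There I would argue by a case analysis over the finitely many corner configurations occurring in \autoref{fig:s-wires}, checking that in each the remaining shield cells admit a consistent opposite-color pairing, so that no cell is left without a shield partner and the two color counts balance locally. Because the wire gadget is unmodified and the two rotation gadgets are obtained from one another by vertical mirroring together with color inversion (as already noted earlier), it suffices to verify the corner cases for a single rotation gadget; the wire case and the mirrored rotation case then follow from those invariances. Finally, I would emphasize that only the existence of this internal tiling is asserted here; the complementary requirement, that no shield cell can ever share a block with an internal gadget cell, belongs to the second shield property and is established separately.
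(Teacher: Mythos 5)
Your proposal is correct and takes essentially the same route as the paper: the paper's proof simply exhibits the explicit tiling in \autoref{fig:s-wires}, and your argument amounts to the same thing, namely observing that (since every shield cell carries the integer \textbf{1}) a valid tiling is a partition into white--gray dominoes and then exhibiting that partition. The domino characterization you include is slightly more than the existence claim needs, but it is accurate and is the same observation the paper uses later when proving the second shield property.
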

        \begin{proof}
            The solution is shown in the figure.
        \end{proof}

        \begin{lemma} \label{lem:swires}
            None of the shield's cells in the above figure, can be in the same block as the any of the gadget's cells.
        \end{lemma}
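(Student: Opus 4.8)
The plan is to exploit the numerical constraints of Double Choco together with the fact that the shield is built entirely from cells carrying the integer $\bm{1}$, whereas the wire and rotation gadgets are built from cells carrying the integer $\bm{2}$. First I would record the consequence of the puzzle rules for a single shield cell: any block that contains a cell labelled $\bm{1}$ must have its white area and its gray area each of size exactly one, so such a block consists of precisely two cells, one white and one gray. Dually, any block containing a cell labelled $\bm{2}$ must have white and gray areas of size two, i.e. four cells in total.

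Next I would argue by contradiction. Suppose some shield cell $s$ lies in the same block $B$ as some gadget cell $g$. By the first observation $|B| = 2$, and since every block is connected, this forces $B = \{s, g\}$ with $g$ directly adjacent to $s$. Because every gadget cell bordering the shield carries the integer $\bm{2}$ (the wire gadget consists solely of such cells, and the rotation gadgets in \autoref{fig:s-wires} were padded with wire gadgets precisely along their shielded sides), the block containing $g$ would have to consist of four cells, contradicting $|B| = 2$. Hence no shield cell can share a block with a gadget cell, which is exactly the claim.

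The main obstacle is making the dichotomy airtight along the entire shield boundary: I must verify that every gadget cell actually adjacent to a shield cell is one of the numbered ($\bm{2}$) cells, so that the size-two versus size-four contradiction applies cell by cell. For the wire gadget this is immediate, since all of its cells are labelled $\bm{2}$; for the rotation gadgets it is precisely the reason they were extended with wire gadgets on their optional sides in \autoref{fig:s-wires}. Should any boundary cell fail to carry a number, the fallback is to note that the forced two-cell block $\{s,g\}$ could never be completed into a legal local gadget solution, since the gadget's own integer-$\bm{2}$ cells pull $g$ into a four-cell block; the construction is arranged so that this degenerate case never arises.
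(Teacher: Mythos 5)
Your core dichotomy fails at exactly the point you yourself flag as the ``main obstacle,'' and the fallback you offer does not close the gap. It is not true that every gadget cell bordering the shield carries the integer $\bm{2}$: the rotation gadget's own mandatory cells (the core from \autoref{fig:angle}) carry no integer at all --- which is why the paper analyzes that gadget (Lemma~\ref{lem:rot}) via connectivity of the white and gray areas rather than via forced block sizes --- and those unlabelled cells still border the shield after wire gadgets are appended, since the appending only flanks the two optional areas and does not wrap the core. For such an unlabelled gadget cell $g$ of color opposite to an adjacent shield cell $s$, your arithmetic yields no contradiction: the two-cell block $\{s,g\}$ has one white and one gray cell, is connected, has congruent halves, and contains only the integer $\bm{1}$, so it is a perfectly legal block. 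Your fallback reasoning that ``the gadget's own integer-$\bm{2}$ cells pull $g$ into a four-cell block'' is unsound --- an unlabelled cell is under no obligation to join any particular block, so nothing pulls it anywhere --- and the assertion that the construction avoids this degenerate case is false: it arises along the rotation gadget's core.

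What actually rules out this case, and what your proof is missing, is a color argument: the shield is constructed so that wherever it abuts a gadget cell lacking a distinguishing integer, the shield cell has the \emph{same} color as that gadget cell. A common block would then contain two cells of the same color, while the shield cell's integer $\bm{1}$ forces the white and the gray area of its block to consist of exactly one cell each --- contradiction. The paper's proof is precisely this two-case split: either the integers differ ($\bm{1}$ versus $\bm{2}$, the case you handle correctly), or the colors agree and the $\bm{1}$ kills the block. To repair your proof you must add this color-matching observation for every unlabelled cell on the gadget boundary, rather than asserting such cells do not exist.
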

        \begin{proof}
            Any of the shield's cells are either filled with a different integer than its adjacent gadget's cells, or it has the same color as the adjacent gadget's cells, while having an integer \textbf{1} in it. Thus, by the \textit{Double Choco} properties, they cannot be in the same block.
        \end{proof}
        
        \begin{claim}
            The properties of shielded gadgets stay the same as in the ones in \autoref{sec:wire}. 
        \end{claim}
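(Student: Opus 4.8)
The plan is to reduce the shielded gadgets back to their unshielded counterparts from \autoref{sec:wire}, so that every property proven there — the iff covering relations of \autoref{lem:wire} and \autoref{lem:rot}, together with the sufficiency of their profile tables — transfers verbatim. The crucial leverage is \autoref{lem:swires}: no shield cell can ever share a block with a gadget cell. Hence in any solution of a shielded gadget the cells split into two independent families, the shield cells and the gadget cells, and each family must be tiled using only blocks drawn from within itself.

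First I would argue that the shield contributes nothing to the gadget's internal logic. By the preceding claim the shield admits a tiling using solely its own cells, and by \autoref{lem:swires} every shield cell is in fact forced into a block of shield cells only. Therefore the restriction of any shielded solution to the gadget cells is itself a valid solution of the corresponding unshielded gadget, and conversely any unshielded solution extends uniquely by the shield tiling. This yields a bijection between shielded and unshielded solutions that fixes the covering status of every optional area, so the iff relations and the sufficiency established in \autoref{sec:wire} hold unchanged for the wire gadget, which is otherwise untouched.

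The one genuine modification is to the rotation gadgets, which are extended by appending a wire gadget to each optional area before shielding. Here I would invoke the value-transfer behavior of the wire: by \autoref{lem:wire} and the sufficiency of the wire's profile table, an appended wire gadget admits exactly two solutions, each putting its two optional areas in opposite covering states, so it faithfully relays the truth value between its far and near ends. Composing such a relay onto each arm of the rotation gadget therefore only relocates the input and output optional areas outward while preserving the original rotation relation of \autoref{lem:rot}; the extended gadget still couples its input and output optional areas by the same iff, and introduces no new solutions, since each appended wire is rigid up to the single covering choice already accounted for.

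The step I expect to be the main obstacle is the bookkeeping in this last composition: I must confirm that stacking the appended wire gadgets onto the rotation's optional areas does not create a spurious block straddling the seam between wire and rotation, and that the parity of the covering alternation along each arm lines up so that the net input--output relation matches \autoref{lem:rot} rather than its negation. Both points follow from the fact that adjacent paired optional areas must be oppositely covered in any valid profile assignment, but they are exactly where a careless gluing could silently flip the gadget's semantics, so I would verify them explicitly against the solution pictures in \autoref{fig:s-wires}.
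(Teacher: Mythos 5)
Your proposal is correct and follows essentially the same route as the paper: the paper's own proof is the one-liner ``the statement follows from their construction and Lemma~\ref{lem:swires}'', i.e.\ exactly your two ingredients of (i) using Lemma~\ref{lem:swires} to decouple shield blocks from gadget blocks so that solutions restrict/extend between the shielded and unshielded gadgets, and (ii) observing that the appended wire gadgets on the rotation arms merely relay the covering alternation. Your write-up just makes explicit (including the seam and parity checks) what the paper leaves implicit.
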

        \begin{proof}
            The statement follows from their construction and Lemma \autoref{lem:swires}.
        \end{proof}
         
        \newpage
        
        \begin{figure}[t]
            \centering
            \includegraphics[width=0.7\textwidth]{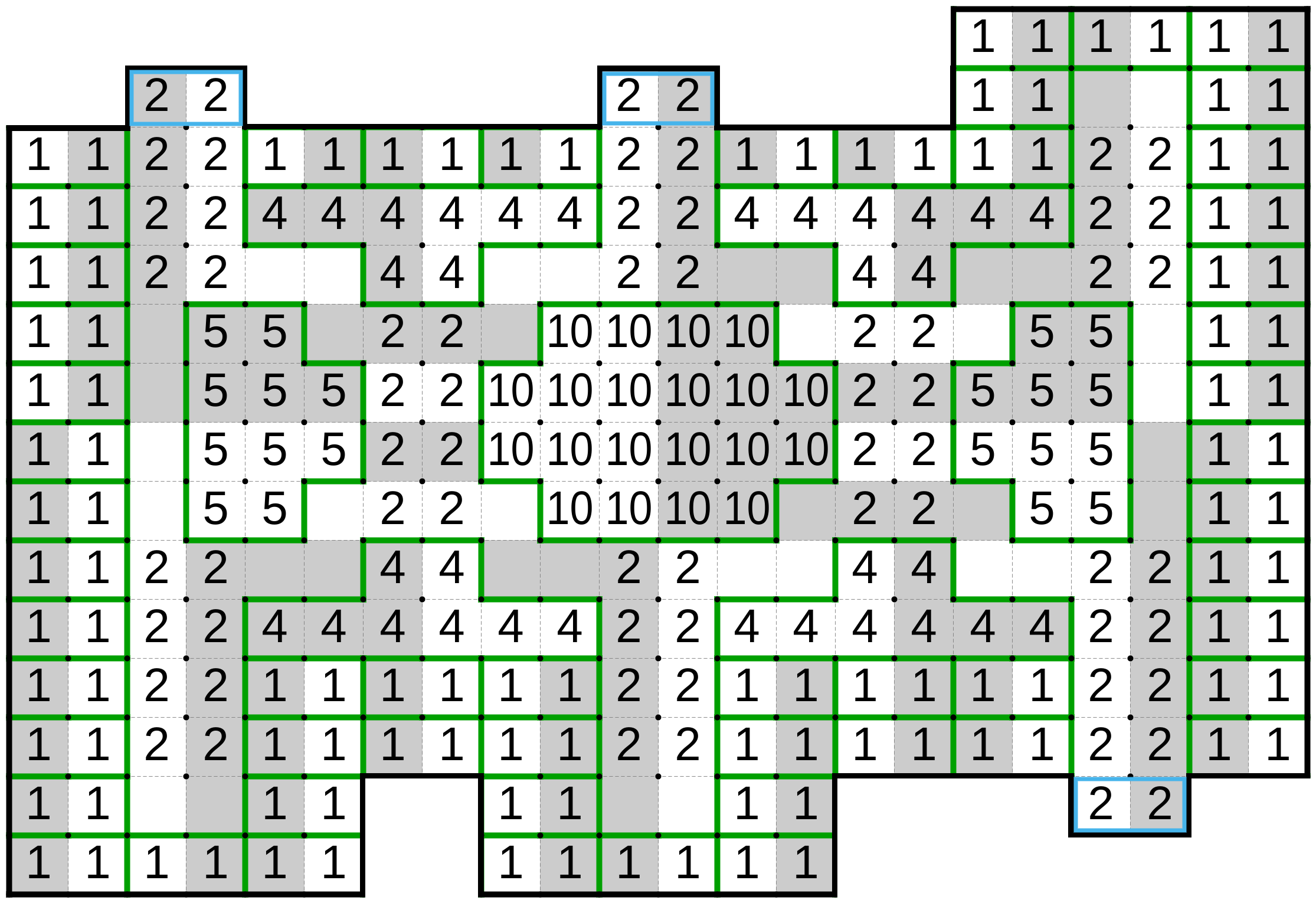}
            \caption[]
            {\small The complete variable gadget with the shield. The shield is represented by the covered area. Optional areas are in the blue rectangles.}
            \label{fig:s-var}
        \end{figure}
        
        \section{Shielding the complete variable gadget} \label{sec:svar}
        
        We present the shielded complete variable gadget in \autoref{fig:s-var} with the three wires, two going upwards and one going downward. The gadget stays mainly unmodified, except where we added additional wire gadgets on the lower side, in between the optional area of the variable gadget (see \autoref{fig:var}) and the wire boundary gadgets, or the wire gadgets, depending on the case. Observe, that the all optional areas are the output ones.
        
        Holes between the equalizers and the vertical lines, as well as the ones between the two equalizers are filled with the areas of the \textbf{5} and \textbf{10} integers respectively. Even though there cannot be any interference between the neighbouring gadgets, if we filled them arbitrarily we risk the ill behaviour of our gadget.
        
        \begin{claim}
            The shield of the gadgets in \autoref{fig:s-var} can be tiled by solely using its own cells.
        \end{claim}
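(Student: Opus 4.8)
The plan is to settle this exactly as the analogous claim for the wire and rotation shields was settled: exhibit the concrete self-contained tiling already drawn in \autoref{fig:s-var} and then verify that every block it uses obeys the \textit{Double Choco} properties from \autoref{chap:1}. The substance is entirely in the verification, and it is cleanest to organise it by partitioning the shield into three kinds of regions according to the integer written in their cells: the thin integer-$\textbf{1}$ layer surrounding the gadget, the integer-$\textbf{5}$ fillers occupying the holes between an equalizer and a neighbouring vertical segment, and the integer-$\textbf{10}$ fillers occupying the holes between two adjacent equalizers. Since the property being asserted is only tileability (an existence statement), it suffices to display one valid block decomposition per region type; no minimality argument is needed.

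First I would dispatch the integer-$\textbf{1}$ layer. A cell carrying a $\textbf{1}$ must lie in a block whose white and gray parts each consist of a single cell, so the only admissible block is a domino made of one white and one gray cell, and for single cells the congruence and connectivity requirements hold automatically. Hence the whole layer is handled once we observe that the thin strip, with the coloring fixed by the construction, decomposes into such opposite-colored dominoes, which is precisely the pairing shown in the figure. Next I would treat the integer-$\textbf{5}$ and integer-$\textbf{10}$ fillers: each such hole must be swallowed by a single block of $10$ (respectively $20$) cells, split into a white half and a gray half of equal size whose shapes agree up to rotation or mirroring. The reason for choosing these particular integers is exactly to force each hole into one block of the correct cardinality rather than letting it be tiled arbitrarily; for each hole shape dictated by the gadget geometry I would exhibit a concrete partition of its cells into two congruent, mutually connected monochromatic halves, as drawn, and read off that the cell count matches the written integer.

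The step I expect to be the main obstacle is the congruence condition for the $\textbf{5}$ and $\textbf{10}$ fillers: unlike the integer-$\textbf{1}$ dominoes, these blocks must be cut into two halves that are genuinely the same shape up to symmetry, so one must check that the fixed hole geometry together with the cell coloring actually admits such a split. Once a valid split is displayed for each distinct hole shape, the remaining obligations, namely equal size of the halves, connectivity of each half and of the two halves with each other, and agreement of the count with the integer, are immediate, and the claim follows in the same figure-driven manner as the corresponding claim for \autoref{fig:s-wires}.
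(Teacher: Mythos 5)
Your proposal is correct and takes essentially the same approach as the paper, whose entire proof is to point at the tiling exhibited in \autoref{fig:s-var}. You simply make explicit the verification the paper leaves to the figure: the integer-$\textbf{1}$ layer decomposes into opposite-colored dominoes, and each integer-$\textbf{5}$ or integer-$\textbf{10}$ hole is covered by a single block with two congruent monochromatic halves, exactly as drawn.
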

        \begin{proof}
            The solution is shown in the figure.
        \end{proof}
        
        \begin{lemma} \label{lem:svar}
            None of the shield's cells in the above figure can be in the same block as the any of the gadget's cells.
        \end{lemma}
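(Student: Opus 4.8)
The plan is to follow verbatim the strategy of the analogous Lemma \autoref{lem:swires}, which established exactly this isolation property for the wire and rotation gadgets. The whole argument rests on two obstructions that come directly from the \textit{Double Choco} properties of \autoref{chap:1}: a block containing a cell labelled with an integer $k$ must have its white region and its gray region each consist of exactly $k$ cells, and a block must contain equally many white and gray cells. First I would record the two ways in which a shield cell and an adjacent gadget cell can be forbidden from sharing a block. The \emph{number obstruction}: if the shield cell carries integer \textbf{1} while the neighbouring gadget cell carries integer \textbf{2}, then any common block would need its monochromatic region to have size both $1$ and $2$, which is impossible. The \emph{color obstruction}: if the shield cell carries integer \textbf{1} and has the same color as the adjacent gadget cell, then the common block would be forced to contain two cells of that color, contradicting that integer \textbf{1} permits only a single cell of each color in the block.

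Next I would check that every shield cell of \autoref{fig:s-var} that touches a gadget cell falls into (at least) one of these two cases, reducing the lemma to a finite verification along the shield--gadget interface. The interior gadget cells of the complete variable gadget---the cells of the equalizer rectangles, the two vertical segments, and the appended wire gadgets---all carry integer \textbf{2}, so wherever the shield (integer \textbf{1}) abuts them the number obstruction settles the matter. The only gadget cells lacking an integer are the L-shaped regions of the equalizers; along their boundary I would instead invoke the color obstruction, verifying from the figure that the shield presents a same-colored integer-\textbf{1} cell at every such contact, exactly as in the wire case. Combining the two obstructions over the entire interface then yields the claim.

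The step I expect to be the main obstacle is precisely this interface verification, because the complete variable gadget has a far more irregular boundary than the wire and rotation gadgets: the shield must thread between consecutive equalizers, around the two vertical segments, and past the interior holes occupied by the filler regions with integers \textbf{5} and \textbf{10}. I would have to confirm the dichotomy not merely along the outer perimeter but also along every internal seam the shield forms, in particular where it runs adjacent to the integer-free L-shaped regions and where it meets the filler regions. The filler cells themselves are not shield cells, so they lie outside this lemma's statement; nevertheless I would note that the same number obstruction (integers \textbf{5} and \textbf{10} against \textbf{1} and \textbf{2}) keeps them from interfering, which is exactly what makes the chosen fillers---rather than arbitrary ones---safe, as foreshadowed in the remark preceding the lemma.
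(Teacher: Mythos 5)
Your treatment of the integer-\textbf{1} part of the shield is essentially the paper's own argument: the paper, like you, reduces the lemma to a finite check along the shield--gadget interface and invokes the block-size constraints of \autoref{chap:1} (the paper in fact claims the color obstruction alone suffices there --- every integer-\textbf{1} shield cell adjacent to the gadget sees only same-colored gadget neighbours --- whereas you allow either obstruction, which is the same scheme as in Lemma \autoref{lem:swires} and is fine). The divergence, and the genuine gap, is in how you dispose of the filler areas carrying the integers \textbf{5} and \textbf{10} (the paper's proof also mentions cells with integer \textbf{4}, which you never address). You declare these cells to be outside the lemma's scope, but the paper's proof explicitly covers them, and it must: they fill the holes \emph{inside} the complete variable gadget, so they are part of the external area whose isolation the lemma is asserting.

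Your fallback claim --- that ``the same number obstruction (integers \textbf{5} and \textbf{10} against \textbf{1} and \textbf{2}) keeps them from interfering'' --- does not work, because the cells these fillers actually touch are the L-shaped regions of the equalizers, which carry \emph{no} integer at all. An unnumbered cell can perfectly well belong to a block containing an integer-\textbf{5} cell, so no integer mismatch forbids a filler block from leaking into the gadget; this is exactly the ``ill behaviour'' the paper warns about in the sentence preceding the lemma, and it is why the fillers were given those specific values and shapes rather than arbitrary ones. The paper closes this case not by an integer-mismatch argument but by a geometric forcing claim: it is checked from \autoref{fig:s-var} that any cell with integer \textbf{4}, \textbf{5} or \textbf{10} admits no block other than the one shown, so in every solution these areas tile internally and never merge with gadget cells. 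Your proof needs this (or an equivalent) verification; without it, the isolation of the gadget from its internal fillers is simply unproven.
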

        \begin{proof}
            Every shield's cell with the integer \textbf{1} in it, if directly adjacent to the gadget, has the neighbour inside the gadget of the same color, thus, they cannot be in the same block.
            It is easy to check that any cell with the integers \textbf{4}, \textbf{5} or \textbf{10} cannot be in any other block different from the ones shown in \autoref{fig:s-var}.
        \end{proof}
        
        \begin{claim}
            The properties of the shielded gadget stays the same as its counterpart in \autoref{sec:variable}.
        \end{claim}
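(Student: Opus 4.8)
The plan is to reduce the claim to the already-established behaviour of the unshielded complete variable gadget in \autoref{sec:variable}, by showing that the shield and the filler cells are completely decoupled from the core gadget. The crucial ingredient is Lemma \autoref{lem:svar}, which guarantees that no shield cell — and, by its second half, no filler cell carrying an integer \textbf{4}, \textbf{5} or \textbf{10} — can ever share a block with a core gadget cell in any puzzle solution.

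First I would use this isolation to show that every solution of the shielded gadget splits into two independent parts: a tiling of the shield-and-filler cells, and a tiling of the core variable-gadget cells. The former exists and is forced — the shield by the preceding claim, and the 5- and 10-cells by the second sentence of Lemma \autoref{lem:svar}, which pins each of them to the block drawn in \autoref{fig:s-var}. Consequently the shielded gadget is solvable exactly when its core is, and the admissible core tilings coincide cell-for-cell with the solutions analysed in \autoref{sec:variable}, because the internal adjacency structure of the variable gadget is unchanged and the core cells have no way to leak into the surrounding cells.

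Next I would transport the structural results. Since the core tilings are in bijection with the unshielded solutions, Lemma \autoref{lem:eqvi}, Claim \autoref{claim:1} and Lemma \autoref{lem:opp_areas} apply unchanged, so \autoref{thm:var} carries over: the shielded gadget admits a valid solution if, and only if, all top optional areas agree, all bottom optional areas agree, and top and bottom are opposite. The extra wire gadgets attached on the lower side merely relay each optional-area value without alteration by Lemma \autoref{lem:wire}, so the correspondence between optional-area coverings and variable truth values is preserved and \autoref{thm:comp-var} holds for the shielded gadget as well; the profile table stays \emph{sufficient} by the same argument as before, now restricted to the core.

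The main obstacle I expect lies with the filler cells rather than with the shield. The shield is disposed of uniformly by a color- or integer-mismatch with each of its neighbours, but the 5- and 10-cells sit in the interior of the gadget, directly adjacent to the equalizers, so one must check that their only admissible blocks are exactly those shown; otherwise a stray block mixing a filler cell with an equalizer cell could inject a spurious solution and destroy the equality constraint that makes the equalizer work. This is precisely what the second sentence of Lemma \autoref{lem:svar} rules out, so once that lemma is in hand the remaining argument is routine bookkeeping mirroring \autoref{sec:variable}.
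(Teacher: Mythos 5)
Your proposal is correct and takes essentially the same approach as the paper: the paper's own proof is the one-line remark that the claim ``follows from the construction and Lemma \autoref{lem:svar}'', which is precisely the isolation argument you spell out. Your write-up simply makes explicit the decoupling of the forced shield-and-filler tiling from the core tilings (and the resulting transfer of \autoref{thm:var} and the sufficiency of the profile table) that the paper leaves implicit.
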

        \begin{proof}
             The statement follows from the construction and Lemma \autoref{lem:svar}.
        \end{proof}

        \newpage
        
        \begin{figure}[t]
            \centering
            \includegraphics[width=0.8\textwidth]{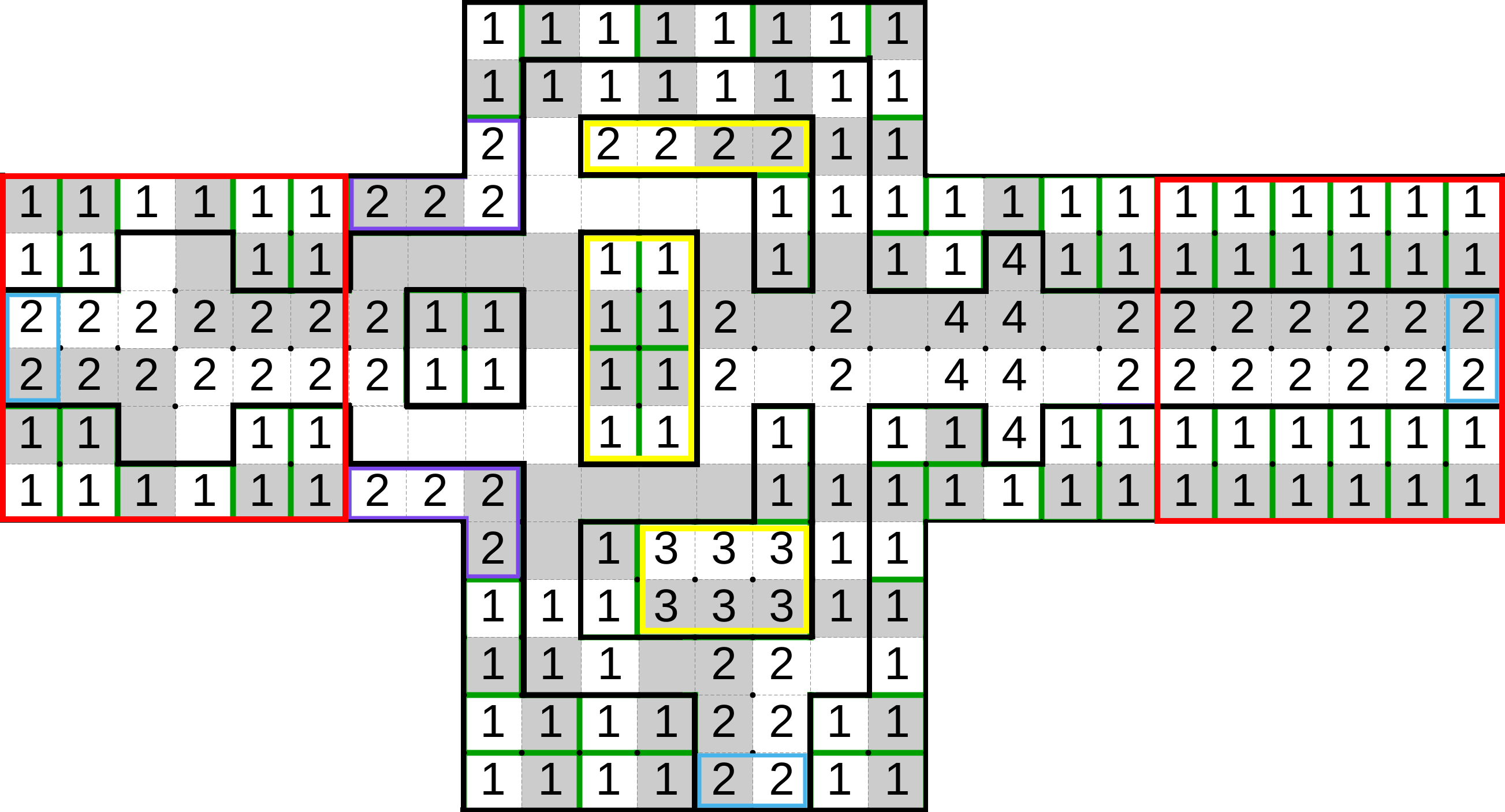}
            \caption[]
            {\small The complete clause gadget with the shield. The gadget is separated from the shield by the black border. Shield is represented as the covered area. The optional areas are in the blue rectangles. Areas in the red are those that depend on the color of the wire. Therefore, on both of those areas we can use either one of them.}
            \label{fig:s-clause}
        \end{figure}
        
        \section{Shielding the complete clause gadget} \label{sec:sclau}

        In \autoref{fig:s-clause}, we present the complete clause gadget with the shield. As expected, the gadget has the only input optional areas. The only modified part of the gadget in \autoref{fig:comp-clause}, is the middle optional area where we extend the gadget by appending an additional wire gadget on its optional area. The gadget with the shield misses the four rectangular areas in order to have rectangular shape. We omitted presenting those rectangles in the figure in order to simplify it. We will prove in the next claim that they can be filled to satisfy the shield's properties. Hence, from now on, we will assume that the complete clause gadget with the shield has rectangular shape.
        
         \begin{claim}
            The shields of the complete clause gadget can be tiled solely using its own cells.
        \end{claim}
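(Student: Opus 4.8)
The plan is to establish the first shield property --- tileability using only the shield's own cells --- by splitting the shield into the portion already drawn in \autoref{fig:s-clause} and the four rectangular regions that were suppressed there in order to square off the bounding box. For the drawn portion I would simply certify the tiling exhibited in the figure, exactly as was done for the wire, rotation, and variable shields. Because every shield cell carries the integer \textbf{1}, each of its blocks must consist of one white and one gray cell, i.e.\ a domino, which trivially meets the equal-size, equal-shape, and connectivity requirements of \textit{Double Choco}. Hence the only genuine work is to show that the four omitted rectangles can each be filled and tiled on their own.

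Next I would make the filling explicit. Since we, as the reduction designer, are free to choose the colors and the integer labels of these four rectangles, I would fill each of them with integer-\textbf{1} cells colored in a checkerboard pattern. A rectangle colored this way has equally many white and gray cells whenever its area is even, and it admits a brick-laying domino tiling in which every domino covers one white and one gray cell; each such domino is a valid size-one block. Consequently each padding rectangle is tiled entirely by its own cells, and stitching these four tilings together with the tiling of the drawn portion yields a tiling of the whole shield that uses no gadget cell, proving the claim.

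The hard part will be the bookkeeping of parity and dimensions: I must guarantee that each of the four padding rectangles can be assigned an even area (equivalently, at least one even side) and a checkerboard coloring that is simultaneously compatible with the requirement that the completed complete-clause gadget be a single axis-aligned rectangle. This reduces to checking that the gaps left at the corners of \autoref{fig:s-clause}, once the surrounding wire, crossover, OR, and middle wire split subgadgets have been shielded, always have at least one even dimension. Because the widths and heights of all the constituent subgadgets are fixed constants and the shield adds a layer of fixed thickness, this is a finite case check that I can settle by direct inspection; after it is verified I may legitimately treat the shielded complete clause gadget as rectangular throughout the remainder of the chapter.
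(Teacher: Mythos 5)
Your proposal is correct and follows essentially the same route as the paper: the paper likewise certifies the drawn portion by the tiling exhibited in \autoref{fig:s-clause}, observes that the four omitted corner rectangles each have even width, and fills them (via Lemma \ref{lem:rectangles}) with disjoint $2\times 1$ dominoes, each given one white and one gray cell labeled \textbf{1}. Your checkerboard-plus-brick-laying phrasing is just an inline re-derivation of that same domino decomposition, so there is no substantive difference.
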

        \begin{proof}
            Part of the solution is shown in \autoref{fig:s-clause}. For the four rectangular areas not shown in the figure, observe that the width of all four of them is even, hence by Lemma \autoref{lem:rectangles}, they can be filled with $2 \times 1$ disjointed rectangles. We can make each rectangle have one white and one gray cell with the integer \textbf{1} in them, and they will satisfy the properties of the shield.
        \end{proof}
        
        \begin{lemma} \label{lem:sclause}
            If the gadget and the shield in the above figure has a solution, than none of the shield's cells can be in the same block as any of the gadget's cells in any solution of the puzzle.
        \end{lemma}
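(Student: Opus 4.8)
The plan is to mirror the local arguments already used in \autoref{lem:swires} and \autoref{lem:svar}: I will show that every interface between a shield cell and a gadget cell is blocked either by an integer mismatch or by a same-color clash, so that the \textit{Double Choco} properties forbid the two cells from lying in a common block. Since a block is connected, if some block contained both a shield cell and a gadget cell, then along a connecting path there would be an adjacent pair straddling the gadget--shield border that share that block. It therefore suffices to rule out such border adjacencies: I will show that no directly adjacent shield/gadget pair can occupy the same block, and the global statement follows.

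First I would record the two obstruction mechanisms. If a shield cell carries an integer that differs from the integer carried by an adjacent gadget cell, the two cannot coexist in one block, because any block containing a numbered cell must have exactly that number of cells of each color, and a single block cannot simultaneously realise two different counts. If instead a shield cell carries the integer \textbf{1} and has the \emph{same} color as the adjacent gadget cell, then its block would have to be a single white/gray domino (one cell of each color); the same-color neighbour cannot complete that domino, so the \textbf{1}-cell must pair with its opposite-colored shield neighbour and stays inside the shield.

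The substance of the proof is then a case check around the entire perimeter of the complete clause gadget of \autoref{fig:comp-clause}, as drawn in \autoref{fig:s-clause}. Because this gadget is assembled from the crossover, OR, middle-wire-split, color-inversion and wire subgadgets, its boundary coloring is not uniform, so I would traverse it region by region and verify that at each border adjacency one of the two mechanisms applies: along the wire and inversion segments the bordering shield cells use integer \textbf{1} laid out to match the color of the adjacent gadget cell, while next to the OR subgadget cells (which carry integer \textbf{4}) the obstruction instead comes from the integer mismatch \textbf{1} versus \textbf{4}. The four corner rectangular fillers are, by the preceding claim, tiled by $2 \times 1$ dominoes of integer \textbf{1} each with one white and one gray cell; the hypothesis that the gadget together with its shield admits a solution guarantees these fillers are present and so-tiled, and they are handled by exactly the same same-color argument.

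The step I expect to be the main obstacle is this perimeter audit: I must confirm that the figure's coloring has been chosen so that every place where a \textbf{1}-shield cell abuts a gadget cell is a same-color abutment, and that the only opposite-color abutments occur against cells whose integer is not \textbf{1} (here, the \textbf{4}-cells of the OR subgadget). Keeping track of the color inversions introduced by the color-inversion and crossover subgadgets is what makes this delicate, since a single misaligned cell would break the domino argument. Once the coloring audit is complete, the conclusion is immediate: no border pair shares a block, hence no block mixes shield and gadget cells, and the shield properties required in \autoref{chap:4} are preserved for the complete clause gadget.
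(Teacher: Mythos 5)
Your reduction to border adjacencies and your two generic obstructions (integer mismatch, and a same-color neighbour against an integer-\textbf{1} cell) reproduce only the first sentence of the paper's proof, and they do not suffice for this gadget: the paper states explicitly that some of the shield's cells --- those inside the yellow and purple rectangles, i.e.\ the seams where the appended wire and color-inversion gadgets meet each other and where they join the clause gadget of \autoref{fig:clause} --- do \emph{not} carry the integer \textbf{1}. At those cells neither of your mechanisms applies, so the ``perimeter audit'' you propose cannot be completed as planned. This is exactly the part of the argument the paper has to do by hand: it shows that the white cells of the upper yellow area can only be tiled with gray cells of that same area, that each purple area (fenced in by \textbf{1}-cells on every side except toward the gadget's interior) is forced to form a single block of its own, and --- the most delicate step --- that the middle yellow area must be locally solved, which the paper proves by contradiction: if the left cells of its third and fourth rows were not in the same block, a white region to the left of its bottom row would be left untileable. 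None of these dedicated forced-tiling arguments appears in your proposal, and they cannot be recovered from your two mechanisms.

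A secondary inaccuracy: you spend the hypothesis ``the gadget and the shield has a solution'' on the four corner rectangles, but their domino tiling is the content of the \emph{preceding} claim (via Lemma \autoref{lem:rectangles}), not of this lemma. So your proposal is structurally right about the generic integer-\textbf{1} cells, but it is missing the arguments for precisely the cells where the generic reasoning breaks down, which is where the actual substance of this lemma lies.
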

        \begin{proof}
            Any shield's cell, except some in the yellow and the purple rectangle, have the integer \textbf{1} in it, and if they it is directly adjacent to the gadget, all its neighbours inside the gadget have the same color or different integer in it, thus, they cannot be in the same block.
            
            The white cells of the upper yellow area can be tiled only with gray cells of the same area. Furthermore, the gray cells in the lower yellow area must be in the same block as its corresponding white cells. 
            
            The upper purple area is surrounded with cells with the integer \textbf{1} on all sides except for the interior of the gadget. Because the the gray cells of the area must be in the same block as the lower white cell, the whole area will be in the same block. Otherwise, the upper white cell will remain uncovered and it cannot be tiled on the any other way. The same argument applies for the lower purple area.
            
            We are left to prove that the middle yellow area must be locally solved. Observe that its cells in the second row must be covered with the cells above them, also the right cells in the third row must be in the same block as the cell below it. Hence, we have to prove that the left cells in the third and the fourth row must be in the same block, and we will do so by contradiction. Assume that they are not in the same block, then the left cell in the third row is tiled with the cell on its left, while the cell in the fourth row is in the same block as the cell below it. But, then, the white area left of the bottom row of this area cannot be tiled. Hence, we have reached the contradiction.
        \end{proof}
        
        \begin{claim}
            The properties of shielded gadget stays the same as its counterpart in \autoref{sec:clause}.
        \end{claim}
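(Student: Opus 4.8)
The plan is to reduce this claim to the corresponding result of \autoref{sec:clause} by showing that the shield is ``transparent'' to the internal gadget, so that attaching it neither creates nor destroys solutions. The decisive ingredient is Lemma \autoref{lem:sclause}, which guarantees that in any puzzle solution no shield cell lies in the same block as any cell of the internal complete clause gadget. First I would invoke this decoupling to argue that every solution of the shielded gadget splits into two independent parts: a tiling that uses only shield cells, and a tiling of the internal gadget that uses only internal cells. Conversely, any pair consisting of a valid shield tiling and a valid internal solution glues into a solution of the shielded gadget, precisely because the two families of cells can never share a block.

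Next I would observe that the shield is always locally solvable: by the preceding claim the shield admits a tiling using solely its own cells (including the four omitted rectangular regions), and this tiling is available regardless of what happens inside. Consequently the shield imposes no constraint on solvability, and the shielded complete clause gadget is solvable if, and only if, the underlying unshielded complete clause gadget of \autoref{sec:clause} is solvable.

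Finally I would chain this equivalence with the solvability criterion already established for the unshielded gadget. By \autoref{thm:clause} together with the construction of the complete clause gadget, the unshielded version is solvable exactly when at least one incoming wire communicates \textit{true}; since the only modifications in \autoref{fig:s-clause} relative to \autoref{fig:comp-clause} are the appended wire gadget on the middle optional area (whose behaviour is governed by Lemma \autoref{lem:wire}) and the colour-inversion handling (Lemma \autoref{lem:inv1}), the truth values carried by the three incident wires are unchanged. Hence the shielded gadget inherits verbatim the property that it is solvable if, and only if, at least one incoming wire is \textit{true}, which is exactly its counterpart's property in \autoref{sec:clause}.

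I expect the main obstacle to be the first step rather than the last: making precise that the decoupling of Lemma \autoref{lem:sclause} really does let \emph{arbitrary} puzzle solutions---not merely the local ones---factor cleanly into a shield part and a gadget part, so that no global interaction routed through a neighbouring gadget can smuggle a shield cell into an internal block. Once that factorization is justified, the remaining equivalences are immediate bookkeeping on top of the already-proven results of \autoref{sec:clause}.
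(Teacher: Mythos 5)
Your proposal is correct and follows essentially the same route as the paper, whose entire proof is that the statement ``follows from the construction and Lemma \autoref{lem:sclause}'': your decoupling argument via Lemma \autoref{lem:sclause}, the local solvability of the shield, and the appeal to the unshielded gadget's properties are exactly the expansion of that one-line proof. The global-factorization worry you raise at the end is already discharged by the statement of Lemma \autoref{lem:sclause} itself, which is phrased for any solution of the whole puzzle, not merely local ones.
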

        \begin{proof}
             The statement follows from the construction and Lemma \autoref{lem:sclause}.
        \end{proof}

        \section{Connecting the dots} \label{sec:fholes}
            We will follow the same construction as in \autoref{sec:conn-3}, and increase the space between the gadgets to make room for the shields. Let us assume that the leftmost column of the leftmost complete variable gadget is the first column of our puzzle, while the rightmost column of the rightmost complete variable gadget is the last column of the puzzle. Observe that, because those gadgets have even width, and they are evenly spaced, our puzzle also has even width.
            
            Now we will prove that the remaining holes in our puzzle can be filled by the \textit{filler areas}.
            
            \begin{lemma} \label{lem:rectangles}
                Every rectangle with an even number of cells in at least one dimension can be represented as the union of the disjointed $2 \times 1$ rectangles.
            \end{lemma}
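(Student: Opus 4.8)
The plan is to prove that any rectangle with at least one even side can be perfectly tiled by dominoes (the $2 \times 1$ rectangles). Let the rectangle have dimensions $a \times b$, and suppose without loss of generality that $a$ is even (if instead $b$ is even, the same argument applies after rotating the rectangle by $90\si{\degree}$, which is permissible since we only care about a tiling as a combinatorial partition of the cells).

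First I would fix the orientation so that the even dimension $a$ runs vertically, giving a grid of $a$ rows and $b$ columns. Since $a$ is even, I would partition the $a$ rows into $a/2$ consecutive pairs of adjacent rows. This reduces the problem to tiling a single strip of $2$ rows and $b$ columns, because the full rectangle is the disjoint union of these $a/2$ strips, and a tiling of each strip yields a tiling of the whole by taking the union.

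Next I would tile each $2 \times b$ strip by placing $b$ vertical dominoes side by side, one in each column: the domino in column $j$ occupies the two cells in that column (the cell in the top row and the cell in the bottom row of the strip). These dominoes are pairwise disjoint, each is a valid $2 \times 1$ rectangle, and together they cover all $2b$ cells of the strip. Taking the union over all $a/2$ strips then covers every cell of the original rectangle exactly once with disjoint $2 \times 1$ rectangles, which is precisely the claimed decomposition.

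The argument is entirely constructive and there is no real obstacle; the only point requiring minor care is the reduction at the start, namely arguing that it suffices to handle the case where the even side is the one we pair up. This is immediate because tiling is invariant under the rotation that swaps the two dimensions, so the assumption that the even side is vertical loses no generality. The remaining steps are just the explicit placement of vertical dominoes within each two-row strip, which need no computation beyond observing that disjoint strips tiled by disjoint dominoes give a disjoint tiling of their union.
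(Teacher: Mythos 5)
Your proof is correct and is essentially the same as the paper's: both arguments fix the even dimension without loss of generality and then lay down dominoes explicitly along that dimension (the paper splits each row of even length into horizontal dominoes, while you pair rows into $2 \times b$ strips and fill them with vertical dominoes, which is the same construction up to a $90\si{\degree}$ rotation). No gap in either version.
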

            \begin{proof}
                Let's suppose that the rectangle's have an even number of columns, a similar argument applies when the number of rows is even. Take one row, it has an even number of cells, so we can split it into $2 \times 1$ rectangles. Doing the same for the each row, we get a desired set of rectangles.
            \end{proof}
            
            \begin{theorem} \label{thm:space}
                Every empty area of our puzzle can be separated into the union of the disjointed rectangles with the even width.
            \end{theorem}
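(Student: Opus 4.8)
The plan is to reduce the statement to a parity property of the column boundaries of the gadgets, and then to invoke the standard decomposition of a rectilinear region into rectangles. I would set up horizontal coordinates so that the vertical grid lines of the board are indexed $0, 1, \ldots, W$, where column $k$ lies between lines $k-1$ and $k$ and $W$ is the (even) width of the puzzle established just before the theorem. The crucial observation is that a rectangle occupying the columns between the lines $x=a$ and $x=b$ has even width exactly when $a$ and $b$ have the same parity; hence it suffices to show that every vertical edge of every empty area lies on a grid line whose index is even.

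First I would establish the invariant that every vertical boundary of every (shielded) gadget sits on an even grid line. This follows inductively from the construction in \autoref{sec:fholes}: the leftmost column of the leftmost complete variable gadget is declared to be the first column of the puzzle, so the left boundary of the board is at $x=0$; each shielded gadget occupies an even number of columns; and consecutive gadgets, as well as the wires, are separated by an even number of columns. Starting from $x=0$ and repeatedly adding even widths and even gaps, every left and right gadget boundary — and the right board boundary $x=W$ — lands on an even grid line. Since the boundary of any empty area is composed solely of pieces of these gadget and shield boundaries together with the board boundary, every vertical edge of every empty area lies on an even grid line.

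Next I would apply the classical rectilinear-polygon decomposition to each empty area: through every reflex vertex, extend a horizontal segment inside the region until it first meets the boundary. These cuts partition the empty area into finitely many disjoint rectangles, and the left and right edge of each resulting rectangle is a sub-segment of a vertical edge of the empty area. By the invariant above, both of these vertical edges lie on even grid lines, so each rectangle spans from an even line to an even line and therefore has even width. Taking the union over all empty areas completes the argument; combined with \autoref{lem:rectangles} it shows that every hole can be tiled by disjoint $2\times 1$ rectangles, exactly as the shielding claims require.

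The main obstacle I anticipate is not the decomposition step, which is routine, but the careful verification of the even-offset invariant throughout the two-dimensional layout. One must check that it is respected not only by the horizontal chain of complete variable gadgets, but also by the wires and rotation gadgets that climb vertically toward the clauses, by the clause gadgets placed above and below the variable line, and by the additional spacing introduced in this chapter to accommodate the shields. Because every wire-type gadget is two columns wide and the spacing in \autoref{sec:conn-3} and \autoref{sec:fholes} is always chosen even, I expect the invariant to propagate; but confirming that no gadget is ever anchored at an odd column — in particular after the shield padding is added — is the delicate point that the proof must nail down.
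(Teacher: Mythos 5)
Your proposal is correct, but it is organized quite differently from the paper's own proof, and the comparison is instructive. The paper never sets up coordinates or performs an explicit geometric decomposition: it directly enumerates the kinds of horizontally adjacent elements that can bound an empty region (variable next to variable, wire next to wire, clause next to clause, wire next to clause, plus the two board-edge cases, with rotation gadgets absorbed into their input wires) and argues case by case that each horizontal gap is even, leaving the step from ``all gaps are even'' to ``the region splits into even-width rectangles'' implicit. You invert this structure: you first prove a single global invariant --- every vertical boundary of every shielded gadget, and hence every vertical edge of every empty area, lies on an even grid line --- and then make the decomposition rigorous by cutting horizontally through reflex vertices, which is a genuine improvement, since the resulting rectangles have vertical sides contained in vertical edges of the region and therefore automatically span even grid lines. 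What your approach buys is a uniform treatment of arbitrarily shaped rectilinear holes (the paper's phrase ``horizontally adjacent'' is never really defined for regions that snake between several gadgets) and a clean interface with \autoref{lem:rectangles}. What it defers is exactly the content of the paper's proof: your closing ``delicate point'' --- checking that no gadget is ever anchored at an odd column, including the clause gadgets above and below the variable line and the rotation gadgets --- is precisely the four-case enumeration (plus the rotation-gadget and board-boundary remarks) that constitutes the paper's entire argument. So the two proofs are complementary: yours supplies the geometric decomposition the paper omits, while the paper supplies the parity bookkeeping you postpone; a fully airtight write-up would combine both halves.
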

            \begin{proof}
                We will argue that the space between any of the gadgets in our construction has even width. First observe that every shielded gadget has even width. Since our construction follows \autoref{fig:bent}, we can observe that horizontally adjacent elements of the graph can be:
                \begin{enumerate}
                    \item Variable with another variable - by our construction every two shielded complete variable gadgets are distanced horizontally by an even number of cells. 
                    \item Wire with another wire - By the previous item, any two complete variable gadgets are evenly spaced. Additionally, any wire gadget is the even number of columns away of the vertical borders of its complete variable gadget (see \autoref{fig:s-var}). Thus, any two wire gadgets are distanced horizontally by an even number of cells. 
                    \item Clause with another clause - by the construction the middle wire gadget will be connected directly to the complete clause gadget. By the previous item, the middle wires of the different clauses are evenly spaced. Furthermore, the middle optional area of the complete clause gadget is an even number of cells horizontally away from both the left and right edges of the shielded complete clause gadget (see \autoref{fig:s-clause}). Thus, the horizontal distance between any shielded wire and complete clause gadget is even.
                    \item Wire with the clause - the wire gadget has an even distance to the middle wire of the corresponding complete shielded clause gadget, hence its distance to the complete shielded clause gadget is even as well.
                \end{enumerate}
                In the previous enumeration we exclude rotation gadgets because they can have the horizontal empty area only on the one side, opposite of the rotation, but on that side, it has the same horizontal coordinates as its input wire gadget. Hence, we looked at it as the part of the wire gadget.
                
                Because the puzzle is bounded it has two more types of empty areas: from the vertical edge of the puzzle to the the shielded wire gadget, or the complete shielded clause gadget. But those cases are similar to the second and the fourth item in the previous enumeration, since each border is touching the shielded complete variable gadget, which has even width and even horizontal distance to all shielded wire and complete shielded clause gadgets. Thus, those horizontal distances are even as well.
            \end{proof}
            
            \begin{lemma} \label{lem:filler}
               Empty areas of our puzzles can be filled with the \textit{filler areas}.
            \end{lemma}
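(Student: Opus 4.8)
The plan is to obtain this lemma as a direct corollary of the two preceding results, reducing the filling problem to tiling by $2 \times 1$ dominoes. By \autoref{thm:space} every empty area of the puzzle splits into a union of disjoint rectangles, each of even width; by \autoref{lem:rectangles} each such rectangle is in turn a union of disjoint $2 \times 1$ rectangles. Hence it is enough to describe a single \textit{filler area} occupying one $2 \times 1$ domino and to verify that it qualifies as an external area in the sense of \autoref{sec:sol}, namely that it is solvable using only its own cells and that it does not interact with any gadget.

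First I would fix the filler: color the two cells of the domino one white and one gray, and place the integer $\bm{1}$ in both. This is locally solvable at once, since the two cells form a single block of one white and one gray cell --- each color region is a single connected cell of equal size and congruent shape, and the constraint $\bm{1}$ is met. Therefore every domino is a legitimate external area consisting only of mandatory cells, and placing one such domino on each piece of the decomposition from \autoref{thm:space} and \autoref{lem:rectangles} covers every empty region of the board.

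It then remains to confirm non-interaction with the gadgets, which is where I expect the only real care to be needed. The key observation is that the fillers sit exclusively in the holes lying strictly between shields: every gadget is enclosed by its shield except along the shared optional areas, and those optional areas are covered by the shield of the receiving gadget rather than exposed to a hole. Consequently every hole is bounded entirely by shield cells, so no filler cell is ever adjacent to a gadget cell; together with \autoref{lem:swires}, \autoref{lem:svar}, and \autoref{lem:sclause}, which already forbid any shield--gadget block, this shows that no block can mix a filler cell with a gadget cell. I would make this precise by reading off from the construction of \autoref{sec:fholes} that the shield gaps occur only at optional areas, never along a hole boundary.

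The structural heart of the argument has in fact already been discharged in \autoref{thm:space}, so within this lemma the main point to pin down is the last one: that the holes are genuinely closed off by shields, so that the dominoes are valid external areas. As a minor precaution, should a filler cell abut a shield cell that also carries $\bm{1}$, one may orient the white/gray labeling of that boundary domino to share the shield cell's color; by the $\bm{1}$-constraint this rules out even a harmless merge across the boundary. With these checks the lemma follows.
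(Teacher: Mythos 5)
Your proposal takes essentially the same route as the paper's own proof: combine \autoref{thm:space} with Lemma \autoref{lem:rectangles} to decompose every empty area into disjoint $2 \times 1$ dominoes, then give each domino one white and one gray cell, both labeled $\bm{1}$, so that it is solvable using only its own cells. Your additional verification that the holes are bounded solely by shield cells, together with the color-matching precaution against a filler--shield merge, goes beyond what the paper explicitly records in this proof, but it is sound and does not constitute a different approach.
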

            \begin{proof}
                Combining \autoref{thm:space} with Lemma \autoref{lem:rectangles}, we get that the empty areas of the puzzle can be represented as the union of disjointed $2 \times 1$ rectangles. Let each rectangle have one white and one gray cell, while both of them are labeled by the integer \textbf{1}. Then, each rectangle can be tiled using solely its own cells.
            \end{proof}
            
            \begin{theorem}
                Double Choco in NP-complete.
            \end{theorem}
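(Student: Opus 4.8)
The plan is to prove the two halves of NP-completeness separately, reusing almost the entire machinery of \autoref{chap:3} and adding only the shields and fillers developed above. First I would dispatch membership in NP by repeating the argument of \autoref{lem:asdcNP}: a certificate is the set of blocks, whose description is linear in the board size, and verifying the \textit{Double Choco} properties from \autoref{chap:1} (disjointness, full coverage, equal white/gray counts, connectivity, congruence up to rotation and mirroring, and the numeric constraints) is a polynomial-time check. Restricting the board to a rectangular $m \times n$ shape does not affect this, so the problem stays in NP.

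For NP-hardness I would reduce from Bent RPM-3SAT by taking the planar construction of \autoref{sec:conn-3} and replacing every wire, rotation, complete variable, and complete clause gadget with its shielded counterpart from \autoref{sec:swire}, \autoref{sec:svar}, and \autoref{sec:sclau}, widening the inter-gadget spacing just enough for the thin shields to fit without overlapping. Each shielded gadget is tileable using solely its own shield cells, and, crucially, \autoref{lem:swires}, \autoref{lem:svar}, and \autoref{lem:sclause} guarantee that no shield cell can ever lie in the same block as an internal gadget cell in any puzzle solution. Hence in every solution each shielded gadget behaves exactly like its unshielded version, and the accompanying claims confirm that all gadget properties are preserved. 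It follows that the shielded construction is solvable if and only if the unshielded arbitrary-shaped construction is, which by \autoref{lem:asdcNPhard} holds if and only if the Bent RPM-3SAT instance is satisfiable.

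The remaining task, and where I expect the main obstacle to lie, is patching the shielded construction into a genuine $m \times n$ rectangle by filling the holes left between the shields of non-adjacent gadgets. The hard part is \autoref{thm:space}: establishing that \emph{every} empty region has even width requires a careful case analysis over the horizontally adjacent elements of the Bent RPM-3SAT embedding (variable with variable, wire with wire, clause with clause, wire with clause, and the two board boundaries), tracking the even spacings imposed by the construction and absorbing rotation gadgets into their input wires. Once that is secured, \autoref{lem:rectangles} splits each empty region into disjoint $2 \times 1$ rectangles and \autoref{lem:filler} fills each with one white and one gray cell carrying the integer \textbf{1}; since each such rectangle is locally solvable and, as an external area (see \autoref{sec:sol}), never interacts with any gadget, the fillers cannot make an unsolvable puzzle solvable. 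Thus the filled rectangular puzzle is solvable if and only if the shielded arbitrary-shaped puzzle is, hence if and only if the Bent RPM-3SAT formula is satisfiable.

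Finally I would observe that the reduction runs in polynomial time: the shields add only $O(1)$ extra layers per gadget, the spacing grows by constant factors, and computing the shields and fillers is polynomial, so the board remains polynomial in $n + m$. Combining NP membership with this polynomial-time reduction from the NP-complete Bent RPM-3SAT yields that Double Choco is NP-complete.
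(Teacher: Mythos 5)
Your proposal is correct and follows essentially the same route as the paper: the paper's own proof likewise observes that the shielded gadgets inherit all properties of their Chapter~\ref{chap:3} counterparts, so every statement of \autoref{sec:conn-3} (NP membership, polynomial-size reduction, and the equivalence with Bent RPM-3SAT) carries over, and then invokes Lemma~\ref{lem:filler} (via \autoref{thm:space} and Lemma~\ref{lem:rectangles}) to fill the remaining holes and obtain a rectangular board. Your write-up simply makes explicit the steps the paper compresses into two sentences.
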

            \begin{proof}
                Since the shielded gadgets satisfied all the properties of the matching gadgets in \autoref{chap:3}, all the statements from \autoref{sec:conn-3} holds here as well. And by Lemma \autoref{lem:filler} we can fill all the holes in the puzzle. Hence, we conclude the proof.
            \end{proof}
        
    \newpage
    \hfill \textcolor{gray}{\textit{'It's not about the destination, it's about the journey.'}}
    
    \hfill \textcolor{gray}{R. W. E.}
    \newpage

    \printbibliography[heading=bibintoc, title={Bibliography}]

\end{document}